\newtheorem{definition}{Definition}[section]
\newtheorem{theorem}[definition]{Theorem}
\newtheorem{lemma}[definition]{Lemma}
\newtheorem{remark}[definition]{Remark}
\newtheorem{corollary}[definition]{Corollary}
\newtheorem{example}[definition]{Example}
\newtheorem{conjecture}[definition]{Conjecture}
\begin{document}

\pagenumbering{roman}
\pagestyle{empty} 
\title{Necessary Field Size and Probability for MDP and Complete MDP Convolutional Codes}
\author{Julia Lieb}
\maketitle



  
\setcounter{page}{1}


\pagenumbering{arabic}
\pagestyle{plain} 

\begin{abstract}
It has been shown that maximum distance profile (MDP) convolutional codes have optimal recovery rate for windows of a certain length, when transmitting over an erasure channel. In addition, the subclass of complete MDP convolutional codes has the ability to reduce the waiting time during decoding.
In this paper, we derive upper bounds on the necessary field size for the existence of MDP and complete MDP convolutional codes and show that these bounds improve the already existing ones. Moreover, we derive lower bounds for the probability that a random code is MDP respective complete MDP.
\end{abstract}

\section{Introduction}	
Convolutional codes play an important role for digital communication. When considering the erasure channel, which is the most used channel in multimedia traffic, these codes can correct more errors than the classical block codes.

Besides the classical free distance, convolutional codes possess a different notion of distance, called column distance. The column distances of a convolutional code are limited by an upper bound, which was proven in \cite{RS99}. Convolutional codes attaining these bounds, i.e. convolutional codes whose column distances increase as rapidly as possible for as long as possible are called maximum distance profile (MDP) codes. These codes were introduced in \cite{strongly} and are especially suitable for the
use in sequential decoding algorithms. In \cite{vp}, the authors showed that MDP convolutional codes can correct the maximum possible number of errors in some sliding window of a certain length (depending on the code parameters). Moreover, they considered reverse MDP convolutional codes, which have the advantage that optimal error correction is possible with forward and backward decoding algorithms. Finally, complete MDP convolutional codes, which are again a subclass of reverse MDP convolutional codes, have the additional benefit that they can correct even more error patterns than reverse MDP convolution codes, e.g. there is less waiting time when a large burst of erasures occurs and no correction is possible for some time \cite{vp}.

The existence (and genericity) of reverse MDP convolutional codes for all code parameters has been proven in \cite{vp}. In \cite{cmdp}, it has been shown that for the existence of an $(n,k,\delta)$ complete MDP convolutional code, it is necessary to have $(n-k)\mid\delta$ and that complete MDP convolutional codes exist (and are generic) for all code parameters fulfilling this condition. The case $(n-k)\nmid\delta$, in which a complete MDP convolutional code cannot exist, is also much more involved when considering just MDP convolutional codes, see \cite{dr}.
There are some general constructions for MDP \cite{strongly}, \cite{dr13} and complete MDP \cite{cmdp} convolutional codes. However, all of these constructions have the disadvantage that they only work over base fields of very large size.

This provokes the question for the minimal field size such that an MDP respective complete MDP convolutional code could exist. For the case of MDP convolutional codes, there is something done to solve this problem in \cite{b}, where the authors provide an upper bound on the necessary field size. In \cite{b} as well as in \cite{strongly}, where an - until now unproven - conjecture about a bound on the necessary field size is raised, superregular Toeplitz matrices are used, i.e. the question is connected to the problem of determining the necessary field size for the existence of such superregular Toeplitz matrices. In this paper, we improve these bounds by other means than using superregular Toeplitz matrices. For complete MDP convolutional codes, the so far only result on the necessary field size could be derived from the constructions in \cite{cmdp} but this is leading to very weak bounds. In this paper, we also present bounds for the necessary field size for complete MDP convolutional codes. 

Since constructions - especially over fields of possibly small size - have been found to be very hard to obtain, it is an interesting question, how large the probability for an MDP respective complete MDP convolutional code is, when choosing the code randomly. In this paper, we give lower bounds for this probability for MDP as well as for complete MDP convolutional codes.

The paper is structured as follows. In Section 2, we start with some preliminaries about MDP convolutional codes. In Section 3, we give the exact minimum field size for $(n,1,1)$ (and $(n,n-1,1)$) MDP, reverse MDP and complete MDP convolutional codes as well as the corresponding probabilities.
In Section 4, we show upper bounds for the necessary field size for MDP convolutional codes and lower bounds for the probability that a convolutional code is MDP. 
In Section 5, we generalize the results of Section 4 to complete MDP convolutional codes.
Section 6 provides an improved bound on the field size for MDP convolutional codes in the case $\delta<\max\{k,n-k\}$. In Section 7, we show that except for very few choices of (small) parameters, the new bounds for the field size of this paper are better than all bounds existing up to now.

\section{MDP Convolutional Codes}

In this section, we summarize the basic definitions and properties concerning MDP convolutional codes.
One way to define a convolutional code is via polynomial generator matrices.

\begin{definition}\ \\
A \textbf{convolutional code} $\mathfrak{C}$ of \textbf{rate} $k/n$ is a free $\mathbb F[z]$-submodule of $\mathbb F[z]^n$ of rank $k$.
We refer to it as $(n,k,\delta)$ convolutional code.\\
There exists $G(z)\in\mathbb F[z]^{n\times k}$ of full column rank such that
$$\mathfrak{C}=\{v(z)\in\mathbb F[z]^n\ |\ v(z)=G(z)m(z)\ \text{for some}\ m(z)\in\mathbb F[z]^k\}.$$
$G(z)$ is called \textbf{generator matrix} of the code and is unique up to right multiplication with a unimodular matrix $U(z)\in Gl_k(\mathbb F[z])$.\\
The \textbf{degree} $\delta$ of $\mathfrak{C}$ is defined as the maximal degree of the $k\times k$-minors of $G(z)$.
Let $\delta_1,\hdots, \delta_k$ be the column degrees of $G(z)$. Then, $\delta\leq\delta_1+\cdots+\delta_k$ and if $\delta=\delta_1+\cdots+\delta_k$, $G(z)$ is called a \textbf{minimal} generator matrix.
\end{definition}



There is a generic subclass of convolutional codes that could not only be described by an image representation via generator matrices but also by a kernel representation via the so-called parity-check matrices, which will be introduced in the following. Therefore, we need the notion of right prime and left prime polynomial matrices.

\begin{definition}\ \\
Let $\overline{\mathbb F}$ denote the algebraic closure of $\mathbb F$.
A polynomial matrix $G(z)\in\mathbb F[z]^{n\times k}$ with $k<n$ is called \textbf{right prime} if it has full column rank for all $z\in\overline{\mathbb F}$. For $k>n$, it is called \textbf{left prime} if it has full row rank for all $z\in\overline{\mathbb F}$. 
\end{definition}


\begin{definition}\ \\
A convolutional code $\mathfrak{C}$ is called \textbf{non-catastrophic} if one and therefore, each of its generator matrices is right prime.
\end{definition}

 

\begin{definition}\ \\
If $\mathfrak{C}$ is non-catastrophic, there exists a so-called \textbf{parity-check matrix} $H(z) \in\mathbb F[z]^{(n-k)\times n}$ of full rank, such that
$$\mathfrak{C} =\{ v(z)\in \mathbb F[z]^n\ |\ H(z)v(z) = 0 \in \mathbb F[z]^{n-k}\}.$$
Clearly, a parity-check matrix of $\mathfrak{C}$ is not unique and it is possible to choose it left prime and row proper. In this case, the sum of the row degrees of $H(z)$ is equal to the degree $\delta$ of $\mathfrak{C}$ \cite{con}.\\
$H(z)$ has \textbf{generic row degrees} if $\nu=\lceil\frac{\delta}{n-k}\rceil$ and the first $\delta-(n-k)(\nu-1)$ row degrees of $H(z)$ are equal to $\nu$ and the remaining $(n-k)\nu-\delta$ row degrees are equal to $\nu-1$.
\end{definition}

\begin{remark}\label{kh}\ \\
Allowing permutation of the entries of the codeword $v(z)$ respective of the columns of the parity-check matrix $H(z)$, each non-catastrophic convolutional code has a unique parity-check-matrix of the form  $H(z)=[P(z)\ Q(z)]$, where $P$ and $Q$ are left coprime, $Q$ is of Kronecker-Hermite form, i.e.
$q_{ii}$ monic, $deg(q_{ji})<\deg(q_{ii})$ for $j\neq i$,\\
$deg(q_{ij})<\deg(q_{ii})$ for $j<i$ and  $deg(q_{ij})\leq \deg(q_{ii})$ for $j>i$,

 and the row degrees of $P$ are at most equal to the row degrees of $Q$.
\end{remark}

We will need the representation by parity-check matrices to define complete MDP convolutional codes.
Bur first of all, we want to introduce MDP convolutional codes, for which we have to consider distances of convolutional codes.

\begin{definition}\ \\
The \textbf{Hamming weight} $wt(v)$ of $v\in\mathbb F^n$ is defined as the number of its nonzero components.\\
For $v(z)\in\mathbb F[z]^n$ with $\deg(v(z))=\gamma$, write $v(z)=v_0+\cdots+v_{\gamma}z^{\gamma}$ with $v_t\in\mathbb F^n$ for $t=0,\hdots,\gamma$ and set $v_t=0\in\mathbb F^n$ for $t\geq\gamma+1$. Then, for $j\in\mathbb N_0$, the \textbf{j-th column distance} of a convolutional code $\mathfrak{C}$ is defined as
$$d_j^c(\mathfrak{C}):=\min_{v(z)\in\mathfrak{C}}\left\{\sum_{t=0}^j wt(v_t)\ |\ v(z)\not\equiv 0\right\}.$$
\end{definition}


There exist upper bounds for the column distances of a convolutional code.

\begin{theorem}\cite{strongly}\label{ub}\ \\
 $$d_j^c (\mathfrak{C}) \leq (n-k)(j + 1) + 1\qquad\text{for}\quad j\in\mathbb N_0$$
\end{theorem}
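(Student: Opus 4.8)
The plan is to bound the $j$-th column distance by isolating a single nonzero codeword whose truncation shows the obstruction, and then count the Hamming weight of its first $j+1$ coefficient blocks. First I would recall that $d_j^c(\mathfrak{C})$ is defined as a minimum over all nonzero $v(z)\in\mathfrak{C}$ of $\sum_{t=0}^j wt(v_t)$, so it suffices to exhibit \emph{one} nonzero codeword $v(z)$ for which this quantity is at most $(n-k)(j+1)+1$. The natural choice is a codeword with the shortest possible support in its low-order coefficients: using the generator matrix $G(z)$, which has full column rank, pick a message $m(z)$ of minimal degree (e.g. a single monomial in one coordinate, or more carefully a message chosen so that the low-order truncation of $v(z)=G(z)m(z)$ is forced to be small), so that the truncated vector $(v_0,\dots,v_j)$ lies in a subspace of $\mathbb F^{n(j+1)}$ of dimension at most $k(j+1)$.

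Next I would make the dimension-counting argument precise. Form the sliding (truncated) generator matrix $G_j^c \in \mathbb F^{n(j+1)\times k(j+1)}$ built from the coefficient matrices $G_0,\dots,G_j$ of $G(z)$ in the usual block-Toeplitz fashion, so that $(v_0^\top,\dots,v_j^\top)^\top = G_j^c\,(m_0^\top,\dots,m_j^\top)^\top$. Since $G(z)$ has full column rank and $G_0=G(0)$ has rank $k$ (as the code is non-catastrophic, or at worst after a basis change so that $G_0$ has full column rank), the matrix $G_j^c$ has full column rank $k(j+1)$, hence its column space $V$ has dimension $k(j+1)$ in $\mathbb F^{n(j+1)}$. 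The complementary "parity" constraints then cut out $V$ as the kernel of a matrix with $n(j+1)-k(j+1) = (n-k)(j+1)$ rows; equivalently one uses the truncated parity-check (sliding) matrix. A nonzero vector in $V$ of minimal Hamming weight: I want to argue that there is a nonzero element of $V$ whose weight does not exceed $(n-k)(j+1)+1$. This follows from the standard Singleton-type argument — any linear subspace of $\mathbb F^N$ of dimension $K$ contains a nonzero vector of weight at most $N-K+1$ (take any $N-K+1$ coordinate positions; the projection of $V$ onto the remaining $K-1$ coordinates has nontrivial kernel), which here gives weight $\le n(j+1) - k(j+1) + 1 = (n-k)(j+1)+1$. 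Translating that minimal-weight vector back to a codeword truncation $(v_0,\dots,v_j)$ and extending it to a genuine nonzero $v(z)\in\mathfrak{C}$ yields $d_j^c(\mathfrak{C}) \le \sum_{t=0}^j wt(v_t) \le (n-k)(j+1)+1$.

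The main obstacle I anticipate is the bookkeeping at the boundary between "truncated" objects and "genuine" codewords: one must be sure that the minimal-weight vector found inside the column space of the truncated generator matrix actually \emph{is} the low-order truncation of a nonzero polynomial codeword, and not an artifact of truncation — i.e. that the message coefficients $(m_0,\dots,m_j)$ producing it are not all forced to be zero. This is handled by the full-column-rank property of $G_j^c$ (equivalently, by choosing $G(z)$ with $G(0)$ of full column rank, which is always possible for a non-catastrophic code up to equivalence), so that the correspondence between truncated messages and truncated codewords is injective and a nonzero truncated codeword corresponds to a nonzero $v(z)$. A secondary, more cosmetic point is checking that the Singleton bound $N-K+1$ applied with $N=n(j+1)$ and $K=k(j+1)$ indeed collapses to exactly $(n-k)(j+1)+1$, which is immediate. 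I would present the argument in the order: (1) reduce to a statement about the truncated generator/parity-check matrix; (2) invoke the Singleton-type bound on its column space; (3) lift back to a polynomial codeword and conclude.
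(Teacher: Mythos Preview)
The paper does not actually prove this theorem; it is quoted from \cite{strongly} (and ultimately \cite{RS99}) with no argument given, so there is no in-paper proof to compare against. Your Singleton-type approach is the standard one and the overall structure is right, but the obstacle you flag is not the real one, and the real one is left unaddressed.

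Lifting a nonzero truncated vector to a nonzero polynomial codeword is indeed handled by the full column rank of $G_j^c$, as you say. The genuine issue is the side condition in the definition of $d_j^c$. Although the paper literally writes ``$v(z)\not\equiv 0$'', the intended (and standard) definition of the $j$-th column distance requires $v_0\neq 0$; with the literal condition one could take $v(z)=z^{j+1}w(z)$ for any nonzero $w\in\mathfrak C$ and obtain $\sum_{t\le j}wt(v_t)=0$, rendering the bound vacuous. Your Singleton step produces some nonzero element of the column space $V=\operatorname{im}G_j^c$ of weight at most $(n-k)(j+1)+1$, but nothing in the argument rules out $v_0=0$ (equivalently $m_0=0$), and such an element does not witness $d_j^c$.

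The repair is to choose the $k(j+1)-1$ projection coordinates in the Singleton argument with more care. The subspace $V_0:=\{v\in V: v_0=0\}$ has dimension $kj$ (because $G_0$ has full column rank, so $v_0=0\Leftrightarrow m_0=0$). Pick $kj$ of the coordinates to form an information set for $V_0$ inside the last $nj$ positions, and take any $k-1$ further coordinates; the projection of $V$ onto these $k(j+1)-1$ positions still has nontrivial kernel, but any nonzero $v\in V$ vanishing on all of them cannot lie in $V_0$, hence has $v_0\neq 0$ and weight at most $(n-k)(j+1)+1$. With this refinement your plan goes through.
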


We are interested in convolutional codes with good distance properties, i.e. in those codes that reach the bounds of the preceding theorem.

\begin{definition}\cite{mdp}\ \\
A convolutional code $\mathfrak{C}$ of rate $k/n$ and degree $\delta$ has
  \textbf{maximum distance profile (MDP)} if 
$$d_j^c(\mathfrak{C})=(n-k)(j+1)+1\quad \text{for}\ j=0,\hdots,L:=\left\lfloor\frac{\delta}{k}\right\rfloor+\left\lfloor\frac{\delta}{n-k}\right\rfloor$$
\end{definition}

As mentioned in the introduction, MDP convolutional codes have the property that their column distances increase as rapidly as possible for as long as possible. Indeed, $j=L$ is the largest possible value for which $d_j^c$ can attain the upper bound from Theorem \ref{ub}. Moreover, according to \cite{strongly}, it is sufficient to have equality for $j=L$ in Theorem  \ref{ub} to get an MDP convolutional code.


In the following, we will provide criteria to check whether a convolutional code has a maximum distance profile. Therefore, we need the notion of trivially zero determinants.


\begin{definition}\ \\
(i) Let $n\in\mathbb N$ and $A\in\mathbb F^{n\times n}$ be a matrix with the property that each of its entries is either fixed to zero or is a free variable from $\mathbb F$. Its determinant $\det(A)$ is called \textbf{trivially zero} if it is zero for all choices for the free variables in $A$.\\
(ii) An $n\times n$ Toeplitz matrix of the form $\left(\begin{array}{ccc}
a_1 &  & 0 \\ 
\vdots & \ddots &  \\ 
a_n & \hdots & a_1
\end{array} \right)$ is called \textbf{superregular} if all its minors that are not trivially zero are nonzero.
\end{definition}

\begin{theorem}\cite{strongly}\ \\
Let the convolutional code $\mathfrak{C}$ be generated by a right prime minimal polynomial matrix $G(z)=\sum_{i=0}^{\mu}G_iz^i\in\mathbb F[z]^{n\times k}$ and have the left prime and row proper parity-check matrix $H(z)=\sum_{i=0}^{\nu}H_iz^i\in\mathbb F[z]^{(n-k)\times n}$. Equivalent are:
\begin{itemize}
\item[(a)] $\mathfrak{C}$ is of maximum distance profile.
\item[(b)] $\mathcal{G}_L:=\left[\begin{array}{ccc} G_0 & & 0\\ \vdots & \ddots &  \\ G_L & \hdots & G_0 \end{array}\right]$ where $G_i=0$ for $i>\mu$ has the property that every full size minor that is not trivially zero, i.e. zero for all choices of $G_1,\hdots,G_L$, is nonzero.
\item[(c)] $\mathcal{H}_L:=\left[\begin{array}{ccc} H_0 & & 0\\ \vdots & \ddots &  \\ H_L & \hdots & H_0 \end{array}\right]$ where $H_i= 0$ for $i>\nu$ has the property that every full size minor that is not trivially zero is nonzero.
\end{itemize}
\end{theorem}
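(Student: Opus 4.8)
The strategy is to reduce all three statements to the single numerical condition $d_L^c(\mathfrak C)=(n-k)(L+1)+1$, which by the result of \cite{strongly} quoted above is equivalent to (a), and then to prove (a)$\Leftrightarrow$(c) working entirely with the parity-check matrix $\mathcal H_L$, and to deduce (b)$\Leftrightarrow$(c) from a block-code duality between $\mathcal H_L$ and $\mathcal G_L$. Two elementary reductions set the stage. First, since $H(z)$ is left prime, $H_0$ has full row rank, so any admissible truncation $(v_0,\dots,v_{j-1})$ extends to an admissible $(v_0,\dots,v_j)$ with $wt(v_j)\le n-k$; hence $d_j^c\le d_{j-1}^c+(n-k)$, and therefore $d_L^c(\mathfrak C)=(n-k)(L+1)+1$ is equivalent to $d_j^c(\mathfrak C)=(n-k)(j+1)+1$ for \emph{all} $j\le L$. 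Second, left primeness of $H(z)$ makes $v(z)\mapsto H(z)v(z)$ surjective onto $\mathbb F[z]^{n-k}$, so that $\mathcal H_L(v_0,\dots,v_L)^{\top}=0$ holds precisely when $(v_0,\dots,v_L)$ is the length-$(L+1)$ truncation of a codeword; moreover (as is standard for the column distance, using right primeness of $G(z)$) $d_j^c$ may be computed over codewords with $v_0\neq 0$.

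For (a)$\Leftrightarrow$(c): the window weight of a codeword equals the size of the support of the corresponding vector in $\ker\mathcal H_L$, so $d_L^c\ge(n-k)(L+1)+1$ says exactly that no $(n-k)(L+1)$ columns of $\mathcal H_L$ that contain the (extended) support of such a truncation can be linearly dependent. The combinatorial input needed is: (i) a description of which full-size minors of the block lower triangular Toeplitz matrix $\mathcal H_L$ are \emph{not} trivially zero, which should be a Hall-type staircase inequality on the column-block occupancies, refined by a nondegeneracy condition on the induced minors of $H_0$ — a condition that is automatic once $d_0^c$ attains its bound, i.e.\ once every $(n-k)\times(n-k)$ minor of $H_0$ is nonzero; and (ii) the fact that a vector in the kernel of such a not-trivially-zero submatrix necessarily has nonzero leading block $v_0$. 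Granting this, $\lnot$(c)$\Rightarrow\lnot$(a) is immediate, since a singular not-trivially-zero minor yields, via left primeness, a codeword truncation with $v_0\neq 0$ and window weight at most $(n-k)(L+1)$. Conversely $\lnot$(a)$\Rightarrow\lnot$(c) follows by choosing the smallest $j\le L$ at which the column-distance bound fails: then $d_0^c,\dots,d_{j-1}^c$ still attain their bounds, so $H_0$ has the required nondegeneracy and the (extended) support of a minimal codeword at level $j$ already satisfies the staircase inequality, producing a singular not-trivially-zero full-size minor of $\mathcal H_j$, which then embeds block-diagonally into one of $\mathcal H_L$, the new lower-right block being chosen nonsingular and not trivially zero.

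For (b)$\Leftrightarrow$(c): comparing coefficients in $H(z)G(z)=0$ gives $\mathcal H_L\mathcal G_L=0$, and since $G_0$ and $H_0$ have full rank, $\mathcal G_L$ has rank $k(L+1)$ and $\mathcal H_L$ has rank $(n-k)(L+1)$; hence $\operatorname{im}\mathcal G_L=\ker\mathcal H_L$, i.e.\ $\mathcal G_L$ and $\mathcal H_L$ are a generator/parity-check pair of a block code of length $n(L+1)$. Consequently, for complementary index sets $R$ and $T$ (with $|R|=k(L+1)$, $|T|=(n-k)(L+1)$), the square submatrix of $\mathcal G_L$ on the rows $R$ is nonsingular if and only if the square submatrix of $\mathcal H_L$ on the columns $T$ is; and the block lower triangular Toeplitz shapes, together with the fact that $G_0$ and $H_0$ are themselves a dual pair of a block code, make the two families of "not trivially zero" index sets complementary as well. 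Thus (b) and (c) are equivalent, and the equivalence with (a) just proved closes the cycle.

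The step I expect to be the main obstacle is the combinatorial core of the second paragraph: pinning down precisely which full-size minors of $\mathcal H_L$ (and dually of $\mathcal G_L$) are trivially zero — this is delicate because it depends not only on the support pattern but also on the possibly degenerate fixed block $H_0$ — and, in tandem, proving the "$v_0\neq 0$" property for kernel vectors of not-trivially-zero submatrices. Keeping the "not trivially zero" bookkeeping and the leading-block constraint synchronised under the various extensions and the block-diagonal embedding, presumably by an induction on $L$, is where the argument has to be carried out with care.
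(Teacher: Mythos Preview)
The paper does not contain a proof of this theorem at all: it is quoted verbatim from \cite{strongly} and used as a black box throughout. So there is no ``paper's own proof'' to compare your proposal against; the present paper simply cites the result, and the subsequent remark records the combinatorial description of the not-trivially-zero minors of $\mathcal H_L$ (namely the staircase condition $j_{s(n-k)}\le sn$ for $s=1,\dots,L$) without deriving it.

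Your sketch is, in outline, the standard argument one finds in \cite{strongly}: reduce (a) to the single equality $d_L^c=(n-k)(L+1)+1$, relate column distances to supports of kernel vectors of $\mathcal H_j$, and obtain (b)$\Leftrightarrow$(c) from the block-code duality $\operatorname{im}\mathcal G_L=\ker\mathcal H_L$ together with the complementary-minors principle. One clarification worth making: your worry that ``trivially zero'' might depend on a ``possibly degenerate fixed block $H_0$'' is unfounded in the way the notion is used here. As the remark immediately following the theorem shows, the not-trivially-zero full-size minors of $\mathcal H_L$ are characterised purely by the staircase inequality on the column indices, i.e.\ by the pattern of structural zeros coming from the block upper-triangular shape, and not by any particular values of $H_0$. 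With that in hand, the ``$v_0\neq 0$'' issue and the block-diagonal embedding you flag are handled cleanly by induction on $j$, exactly as you anticipate; the obstacle you identify is real but smaller than you suggest once the combinatorial description of trivial vanishing is in place.
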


\begin{remark}\ \\
The not trivially zero full size minors of $\mathcal{H}_L$ are exactly those which are formed by columns with indices $1\leq j_1<\cdots<j_{(L+1)(n-k)}\leq(L+1)n$ which fulfil $j_{s(n-k)}\leq sn$ for $s=1,\hdots,L$.
\end{remark}

The following duality result for MDP convolutional codes will be important at many points of this paper.

\begin{theorem}\cite{strongly}\label{dual}\ \\
An $(n,k,\delta)$ convolutional code is MDP if and only if its dual code, which is an $(n,n-k,\delta)$ convolutional code, is MDP.
\end{theorem}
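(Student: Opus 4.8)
The plan is to reduce each direction of the equivalence to the matrix criterion stated just above, and then to observe that the two resulting conditions on block‑Toeplitz matrices are literally the same condition, up to transposition and a reordering of the block rows and block columns. Thus no new combinatorial work on column distances is needed: the combinatorial "duality" is exactly the equivalence of (b) and (c) in the preceding theorem, and the code‑theoretic duality is exactly the interchange of the roles of $G(z)$ and a parity‑check matrix.

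First I would record the structural facts about the dual code. Write $\mathfrak{C}=\{G(z)m(z)\mid m(z)\in\mathbb F[z]^k\}$ with $G(z)=\sum_{i=0}^{\mu}G_iz^i\in\mathbb F[z]^{n\times k}$ a right prime minimal generator matrix, and consider $\mathfrak{C}^{\perp}=\{v(z)\in\mathbb F[z]^n\mid G(z)^{\top}v(z)=0\}$. Since $G(z)$ has full column rank for every $z\in\overline{\mathbb F}$, the matrix $G(z)^{\top}\in\mathbb F[z]^{k\times n}$ has full row rank for every $z$, hence is left prime; and since $G(z)$ is minimal, i.e. column proper, $G(z)^{\top}$ is row proper. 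So $G(z)^{\top}$ is a left prime, row proper parity‑check matrix of $\mathfrak{C}^{\perp}$; a one–line torsion argument ($p(z)v(z)\in\mathfrak{C}^{\perp}\Rightarrow v(z)\in\mathfrak{C}^{\perp}$) shows $\mathfrak{C}^{\perp}$ is non‑catastrophic, and since the sum of the row degrees of a left prime row proper parity‑check matrix equals the degree of the code, $\deg\mathfrak{C}^{\perp}$ equals the sum of the column degrees of the minimal matrix $G(z)$, namely $\delta$. Hence $\mathfrak{C}^{\perp}$ is an $(n,n-k,\delta)$ convolutional code, and in particular the preceding theorem applies to it. Finally, the value $L=\lfloor\delta/k\rfloor+\lfloor\delta/(n-k)\rfloor$ is invariant under interchanging $k$ and $n-k$, so it is the same $L$ for $\mathfrak{C}$ and for $\mathfrak{C}^{\perp}$.

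Next I would apply the criterion twice. By part (b) of the theorem, $\mathfrak{C}$ is MDP if and only if every full size minor of $\mathcal{G}_L$ (the lower block–triangular Toeplitz matrix built from $G_0,\dots,G_L$) that is not trivially zero is nonzero. By part (c) applied to $\mathfrak{C}^{\perp}$ with the parity‑check matrix $G(z)^{\top}=\sum_{i=0}^{\mu}G_i^{\top}z^i$, the code $\mathfrak{C}^{\perp}$ is MDP if and only if every not trivially zero full size minor of the lower block–triangular Toeplitz matrix $\mathcal{H}_L^{\perp}$ built from $G_0^{\top},\dots,G_L^{\top}$ is nonzero. A block‑by‑block computation gives $\mathcal{H}_L^{\perp}=R\,\mathcal{G}_L^{\top}\,R'$, where $R$ and $R'$ are the permutation matrices reversing the order of the $L+1$ blocks of rows (of size $k$), respectively of columns (of size $n$): the $(i,j)$ block of $\mathcal{G}_L$ is $G_{i-j}$, so the $(i,j)$ block of $R\,\mathcal{G}_L^{\top}\,R'$ is $G_{(L-j)-(L-i)}^{\top}=G_{i-j}^{\top}$, the $(i,j)$ block of $\mathcal{H}_L^{\perp}$. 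Note also that $\mathcal{G}_L$ and $\mathcal{H}_L^{\perp}$ have the same size $(L+1)n\times(L+1)k$ and $(L+1)k\times(L+1)n$ of full size minors, namely $(L+1)k$.

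To finish, transposing a matrix and multiplying it by permutation matrices changes neither which minors are identically zero nor which are nonzero; more precisely, these operations induce a bijection between the full size submatrices of $\mathcal{G}_L$ and those of $\mathcal{H}_L^{\perp}$ under which corresponding minors agree up to sign as polynomials in the entries of $G_0,\dots,G_L$. Hence $\mathcal{G}_L$ has the property in (b) if and only if $\mathcal{H}_L^{\perp}$ has the property in (c) for $\mathfrak{C}^{\perp}$, and combining this with the two applications of the criterion yields that $\mathfrak{C}$ is MDP $\Longleftrightarrow$ $\mathfrak{C}^{\perp}$ is MDP. I expect the only real obstacles to be bookkeeping: carefully checking that $G(z)^{\top}$ qualifies as a left prime, row proper parity‑check matrix of $\mathfrak{C}^{\perp}$ and that $\deg\mathfrak{C}^{\perp}=\delta$ (so that the theorem is applicable with the same $L$), and verifying that the notion of "trivially zero" transfers faithfully across the identification $\mathcal{H}_L^{\perp}=R\,\mathcal{G}_L^{\top}\,R'$ — the last point being automatic here precisely because the two matrices involve exactly the same free entries, merely transposed and permuted.
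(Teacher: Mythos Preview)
The paper does not supply its own proof of this theorem: it is quoted from \cite{strongly} and used as a black box throughout. Your argument is a correct reconstruction of the standard proof, and in spirit it is exactly the reasoning behind the original result: the parity-check matrix of $\mathfrak{C}$ serves as a generator matrix of $\mathfrak{C}^{\perp}$ (and vice versa via $G(z)^{\top}$), the parameter $L$ is symmetric in $k$ and $n-k$, and the block identity $\mathcal{H}_L^{\perp}=R\,\mathcal{G}_L^{\top}\,R'$ makes the conditions (b) for $\mathfrak{C}$ and (c) for $\mathfrak{C}^{\perp}$ in the preceding theorem literally the same. Your verification that $G(z)^{\top}$ is left prime and row proper, that $\deg\mathfrak{C}^{\perp}=\delta$, and that the ``trivially zero'' property is preserved under transposition and block permutation, is careful and correct.

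One small caveat worth flagging: you tacitly assume $\mathfrak{C}$ admits a right prime generator matrix, i.e.\ is non-catastrophic, in order to invoke criterion~(b). This is harmless here because an MDP code is automatically non-catastrophic (a fact established in \cite{strongly}), so for the forward direction the hypothesis is satisfied; for the reverse direction you can run the same argument starting from $\mathfrak{C}^{\perp}$, which is always non-catastrophic. You may want to add a sentence making this explicit.
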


Next, we introduce reverse MDP convolutional codes, which are advantageous for use in forward and backward decoding algorithms \cite{vp}.

\begin{definition}\cite{h}\ \\
Let $\mathfrak{C}$ be an $(n,k,\delta)$ convolutional code with right prime minimal generator matrix $G(z)$, which has entries $g_{ij}(z)$. Set $\overline{g_{ij}(z)}:=z^{\delta_j}g_{ij}(z^{-1})$. Then, the code $\overline{\mathfrak{C}}$ with generator matrix $\overline{G(z)}$, which has $\overline{g_{ij}(z)}$ as entries, is also an $(n,k,\delta)$ convolutional code, which is called the \textbf{reverse code} to $\mathfrak{C}$.\\
It holds: $v_0+\cdots+v_dz^d\in\overline{\mathfrak{C}}\ \Leftrightarrow\ v_d+\cdots+v_0z^d\in\mathfrak{C}$.
\end{definition}

\begin{definition}\cite{vp}\ \\
Let $\mathfrak{C}$ be an MDP convolutional code. If $\overline{\mathfrak{C}}$ is also MDP, $\mathfrak{C}$ is called \textbf{reverse MDP} convolutional code.
\end{definition}

\begin{remark}\cite{vp}\ \\
Let $(n-k)\mid\delta$ and $H(z) = H_0 + \cdots +H_{\nu}z^{\nu}$ be a left prime and row proper parity-check matrix of the MDP code $\mathfrak{C}$. Then the reverse
code $\overline{\mathfrak{C}}$ has parity-check matrix $\overline{H(z)} = H_{\nu} +\cdots +H_0z^{\nu}$. Therefore, $\mathfrak{C}$ is reverse MDP if and only if every full size minor of the matrix
$$\mathfrak{H}_L:=\left[\begin{array}{ccc} H_{\nu} & \cdots & H_{\nu-L}\\  & \ddots & \vdots \\ 0 &  & H_{\nu} \end{array}\right]$$
formed from the columns with indices $j_1,\hdots,j_{(L+1)(n-k)}$
with $j_{s(n-k)+1} > sn$, for $s = 1,\hdots,L$ is nonzero.
\end{remark}

Next, we introduce complete MDP convolutional codes, which are even more advantageous for decoding than reverse MDP convolutional codes \cite{vp}.

\begin{definition}\cite{vp}\label{com}\ \\
Let $H(z)=H_0+H_1z+\cdots H_{\nu}z^{\nu}\in\mathbb F[z]^{(n-k)\times n}$ be a parity-check matrix of the convolutional code $\mathfrak{C}$ of rate $k/n$. Set $L:=\lfloor\frac{\delta}{n-k}\rfloor+\lfloor\frac{\delta}{k}\rfloor$. Then
\begin{align}\label{ppc}
\mathfrak{H}:=\left(\begin{array}{ccccc}
H_{\nu} & \cdots & H_0 &   & 0 \\ 
  & \ddots &   & \ddots &   \\ 
0 &   & H_{\nu} & \cdots & H_0
\end{array}\right)  \in\mathbb F^{(L+1)(n-k)\times (\nu+L+1)n}
\end{align}
is called \textbf{partial parity-check matrix} of the code. Moreover, $\mathfrak{C}$ is called \textbf{complete MDP} convolutional code if for any of its parity-check matrices $H(z)$, every full size minor of $\mathfrak{H}$ which is not trivially zero is nonzero.
\end{definition}

\begin{remark}\ \\
(i) Every complete MDP convolutional code is a reverse MDP convolutional code. \cite{vp}\\
(ii) A complete MDP convolutional code exists over a sufficiently large base field if and only if $(n-k)\mid\delta$. \cite{cmdp}
\end{remark}

As for $\mathcal{H}_L$ - when considering MDP convolutional codes - and additionally for $\mathfrak{H}_L$ - when considering reverse MDP convolutional codes - one could describe the not trivially zero full size minors of the partial parity-check matrix $\mathfrak{H}$ by conditions on the indices of the columns one uses to form the corresponding minor.


\begin{lemma}\cite{vp}\label{index}\ \\
A full size minor of $\mathfrak{H}$ formed by the columns $j_1,\hdots,j_{(L+1)(n-k)}$ is not trivially zero if and only if 
\begin{itemize}
\item[(i)]
$j_{(n-k)s+1}>sn$
\item[(ii)]
$j_{(n-k)s}\leq sn+\nu n$
\end{itemize}
for $s=1,\hdots,L$.

This is equivalent to $j_1\in\{1,\hdots,\nu n+k+1\},\hdots,j_{n-k}\in\{n-k,\hdots,(\nu+1)n\}$, $j_{n-k+1}\in\{n+1,\hdots,(\nu+1)n+k+1\},\hdots,j_{(n-k)(L+1)}\in\{(L+1)n-k,\hdots,(\nu+1+L)n\}$.
\end{lemma}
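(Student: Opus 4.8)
The statement is a purely combinatorial description of the zero pattern of $\mathfrak H$, so the plan is to recast ``not trivially zero'' as ``the support graph of the chosen submatrix has a perfect matching'' and then read (i) and (ii) off Hall's condition. Write $M$ for the square submatrix of $\mathfrak H$ on columns $j_1<\cdots<j_{N}$, where $N=(L+1)(n-k)$, and let $B$ be the bipartite support graph of $M$ (rows versus chosen columns, an edge wherever the corresponding entry of $\mathfrak H$ is not a structural zero). Expanding $\det M$ by the Leibniz formula, $\det M$ (as a polynomial in the entries of $H_0,\dots,H_\nu$) is trivially zero whenever $B$ has no perfect matching, since then every product $\prod_r M_{r,\sigma(r)}$ contains a structural zero. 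The converse — a matching forces the minor not to be trivially zero, despite the recurrence of $H_0,\dots,H_\nu$ along the block rows of $\mathfrak H$ — is the fact underlying the whole MDP/reverse-MDP/complete-MDP theory; I return to it at the end. So it suffices to prove that $B$ has a perfect matching if and only if (i) and (ii) hold.

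The one structural fact used throughout is that row $r$ of block row $i$ of $\mathfrak H$ (global row $(i-1)(n-k)+r$, $1\le i\le L+1$) has its non-structurally-zero entries exactly in columns $(i-1)n+1,\dots,(i+\nu)n$. Necessity is then a counting argument. If (i) fails at some $s\in\{1,\dots,L\}$, i.e. $j_{(n-k)s+1}\le sn$, then $j_1,\dots,j_{(n-k)s+1}$ — at least $(n-k)s+1$ columns — all lie in $\{1,\dots,sn\}$, whereas only the rows of block rows $1,\dots,s$ (that is, $(n-k)s$ rows) have any nonzero entry there; Hall's condition fails, so $B$ has no perfect matching. Symmetrically, if (ii) fails at some $s$, i.e. $j_{(n-k)s}>(\nu+s)n$, then $j_{(n-k)s},\dots,j_N$ — at least $(n-k)(L+1-s)+1$ columns — all lie in $\{(\nu+s)n+1,\dots,(\nu+L+1)n\}$, while only the rows of block rows $s+1,\dots,L+1$, that is $(n-k)(L+1-s)$ rows, reach that far; again there is no perfect matching.

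For sufficiency, assume (i) and (ii) and consider the explicit ``consecutive-blocks'' assignment $i\mapsto\mathcal J_i:=\{j_{(i-1)(n-k)+1},\dots,j_{i(n-k)}\}$. Condition (i) with $s=i-1$ (trivial for $i=1$) gives $\min\mathcal J_i>(i-1)n$, and condition (ii) with $s=i$ (the trivial width bound $j_N\le(\nu+L+1)n$ for $i=L+1$) gives $\max\mathcal J_i\le(\nu+i)n$, so $\mathcal J_i$ lies inside the support window $[(i-1)n+1,(i+\nu)n]$ of block row $i$. Hence the $(n-k)\times(n-k)$ block of $M$ on the rows of block row $i$ and the columns $\mathcal J_i$ is a submatrix of the fully generic matrix $[\,H_\nu\ \cdots\ H_0\,]$ formed from $n-k$ distinct columns; its main diagonal is a product of $n-k$ distinct free variables, so this block has a zero-avoiding transversal, and concatenating these over $i$ (the $\mathcal J_i$ partition the chosen columns, the block rows partition the rows) yields a perfect matching of $B$. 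Finally, the reformulation via the boxes $j_1\in\{1,\dots,\nu n+k+1\},\dots,j_N\in\{(L+1)n-k,\dots,(\nu+L+1)n\}$ is bookkeeping: strict monotonicity inside each block of $n-k$ consecutive indices turns the endpoint bound $j_{(n-k)s}\le(\nu+s)n$ of (ii) into the stated upper bounds on the preceding indices (e.g.\ $j_1\le(\nu+1)n-(n-k-1)=\nu n+k+1$) and the endpoint bound $j_{(n-k)s+1}>sn$ of (i) into the stated lower bounds on the following indices, while conversely the box constraints on $j_{(n-k)s}$ and $j_{(n-k)s+1}$ are exactly (ii) and (i).

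The only step that is not routine is ``a perfect matching of $B$ exists $\Rightarrow$ $\det M$ is not trivially zero'': because $H_0,\dots,H_\nu$ recur in several block rows one cannot merely invoke algebraic independence, and must check that the monomial produced by the matching of the previous paragraph survives in the Leibniz expansion. I would handle this by a degeneration argument — specialising $(H_t)_{a,b}\mapsto x^{e(t,a,b)}$ with exponents chosen, using the staircase shape of that matching, so that $\det M$ becomes a polynomial in a single transcendental $x$ with a unique term of extremal degree — or else by reducing to the corresponding non-cancellation statements for $\mathcal H_L$ and $\mathfrak H_L$ from \cite{strongly}, \cite{vp}, after noting that the leftmost $(L+1)n$ columns of $\mathfrak H$ form $\mathfrak H_L$ and the rightmost $(L+1)n$ columns form $\mathcal H_L$. (If ``trivially zero'' is read purely pattern-wise, this step is vacuous.) I expect this to be the main obstacle; everything else is Hall's theorem together with arithmetic of the inequalities.
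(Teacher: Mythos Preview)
The paper does not give its own proof of this lemma; it is quoted verbatim from \cite{vp} with no argument supplied, so there is nothing in the paper to compare against. Your write-up is therefore a self-contained proof of a cited result, and it is correct: the Hall/matching reformulation is exactly the right viewpoint, your necessity counts are accurate (failure of (i) traps $(n-k)s+1$ chosen columns in the first $sn$ positions, which only $(n-k)s$ rows reach; failure of (ii) is the mirror image), and your sufficiency via the block-by-block assignment $\mathcal J_i$ is clean and is in fact the same matching the original references use implicitly.

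You are also right that the only genuinely non-formal step is ``matching $\Rightarrow$ not trivially zero'' in the presence of repeated $H_t$-blocks, and that this is precisely the content proved for the sliding matrices $\mathcal H_L$ and $\mathfrak H_L$ in \cite{strongly} and \cite{vp}. Your proposed reduction --- noting that the submatrix on each block row is a free $(n-k)\times(n-k)$ minor of $[H_\nu\ \cdots\ H_0]$ and that the product of the block determinants contributes a monomial that cannot be cancelled by any other permutation with the same multidegree --- is the standard way to finish; a leading-term/degree argument of the kind you sketch also works. Either way, this step is not something the present paper addresses, so your identification of it as the crux is appropriate rather than a gap.
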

%
%

Throughout this paper, we will use the following notations: For a finite field $\mathbb F$, we set $t:=|\mathbb F|^{-1}$. Moreover, we denote a real valued function $f(t)$ in the variable $t$ by $O(t^n)$ for some $n\in\mathbb N$ if $\lim_{t\rightarrow 0}\frac{f(t)}{t^n}\leq C$ for some constant $C\in\mathbb R$.
Moreover, the following theorem will be used frequently throughout this paper:

\begin{theorem}[Schwartz-Zippel]\cite[Corollary 1]{sch}\label{sz}\ \\
(a) For $r\in\mathbb N$, consider $f\in\mathbb F[x_1,\hdots,x_r]$ with total degree $d\geq 0$. Then, $f$ has at most $d\cdot|\mathbb F|^{n-1}$ zeros.\\
(b) Let $f\in\mathbb F[x_1,\hdots,x_r]$ be a nonzero polynomial of total degree $d$. Moreover, let $v_1,\hdots, v_r$ be selected at random independently and uniformly from $\mathbb F$. Then, the probability that $(v_1,\hdots, v_r)$ is a zero of $f$ is at most $d\cdot t$.
\end{theorem}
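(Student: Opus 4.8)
The plan is to prove part (a) by induction on the number of indeterminates $r$, and then obtain part (b) as an immediate counting consequence. (I note in passing that the bound in (a) should read $d\cdot|\mathbb F|^{r-1}$; the exponent is one less than the number of variables.)

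For the base case $r=1$, a nonzero univariate polynomial $f\in\mathbb F[x_1]$ of degree $d$ has at most $d$ roots in $\mathbb F$, which is exactly $d\cdot|\mathbb F|^{0}$ (if $d=0$ and $f$ is a nonzero constant, it has no roots, and if $f\equiv 0$ the claim is vacuous). For the inductive step I would assume the statement for polynomials in $r-1$ variables and take $f\in\mathbb F[x_1,\dots,x_r]$ nonzero of total degree $d$. Write $f=\sum_{i=0}^{k}f_i(x_1,\dots,x_{r-1})\,x_r^{\,i}$, where $k$ is the largest power of $x_r$ occurring with a nonzero coefficient, so that $f_k\not\equiv 0$ and, since each term $f_i x_r^i$ has total degree at most $d$, one has $\deg f_k\le d-k$. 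Now partition the zeros of $f$ in $\mathbb F^r$ according to whether the truncation $(v_1,\dots,v_{r-1})$ of the tuple is a zero of $f_k$: by the induction hypothesis $f_k$ has at most $(d-k)|\mathbb F|^{r-2}$ zeros in $\mathbb F^{r-1}$, each extending to at most $|\mathbb F|$ tuples in $\mathbb F^r$, hence at most $(d-k)|\mathbb F|^{r-1}$ zeros of $f$ of this type; and for each of the at most $|\mathbb F|^{r-1}$ remaining $(r-1)$-tuples $a$ with $f_k(a)\neq 0$, the specialization $f(a,x_r)$ is a nonzero polynomial in $x_r$ of degree exactly $k$, hence has at most $k$ roots, contributing at most $k|\mathbb F|^{r-1}$ further zeros. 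Adding the two disjoint contributions gives at most $(d-k)|\mathbb F|^{r-1}+k|\mathbb F|^{r-1}=d|\mathbb F|^{r-1}$, which closes the induction.

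Part (b) is then immediate: the tuple $(v_1,\dots,v_r)$ is uniform on the $|\mathbb F|^{r}$ elements of $\mathbb F^r$, and by (a) at most $d|\mathbb F|^{r-1}$ of these are zeros of the nonzero polynomial $f$, so the probability of hitting a zero is at most $d|\mathbb F|^{r-1}/|\mathbb F|^{r}=d/|\mathbb F|=d\cdot t$.

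The argument is essentially bookkeeping and I do not anticipate a real obstacle; the only points that need care are choosing the ``leading'' coefficient $f_k$ with respect to $x_r$ as the highest power that actually occurs (so that $f_k$ is genuinely nonzero and the estimate $\deg f_k\le d-k$ is valid) and keeping the two cases of the split over $\mathbb F^{r}$ disjoint, so that the two bounds add up exactly to $d|\mathbb F|^{r-1}$ rather than overlapping.
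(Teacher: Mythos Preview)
Your proof is correct and is the standard induction-on-variables argument for Schwartz--Zippel; there is nothing to fault. Note, however, that the paper does not give its own proof of this statement at all: it is stated as a cited result (\cite[Corollary 1]{sch}) and used as a black box, so there is no ``paper's proof'' to compare against.
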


\section{Results for $(n,n-1,1)$ convolutional codes}

As a starting point, in this section, we want to consider unit memory convolutional codes of high rate, i.e. $\delta=1$ and $k=n-1$. According to Theorem \ref{dual}, these codes are dual to the $(n,1,1)$ convolutional codes, which should therefore also be treated in this section. With this choice of parameters one has $L=\lfloor\frac{\delta}{k}\rfloor+\lfloor\frac{\delta}{n-k}\rfloor=1+\lfloor\frac{1}{n-1}\rfloor$. Hence, $L=1$ for $n\geq 3$ and $L=2$ for $n=2$.

\subsection{$(2,1,1)$ convolutional codes}

\begin{theorem}\cite[Theorem 84]{diss}\ \\
A $(2,1,1)$ convolutional code is  MDP if and only if it holds for its generator matrix $G(z):=\sum_{i=0}^{\delta}g_iz^i=g_0+g_1z$ with $g_0,g_1\in\mathbb F^2$ that 
$0\notin\{g_{0,1}, g_{0,2},g_{1,1},g_{1,2}\}$ and $g_{1,1}g_{0,2}-g_{1,2}g_{0,1}\neq 0$.\\ 
Consequently, the probability that a random polynomial matrix $G(z)\in\mathbb F[z]^2$ with $\deg(G(z))=1$ generates a $(2,1,1)$ MDP convolutional code is $\frac{(1-t)^2(1-2t)}{1+t}$.
\end{theorem}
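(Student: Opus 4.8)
The plan is to reduce the statement to a direct enumeration over the (finite) sample space. First I would fix $q:=|\mathbb F|$ and make precise what "a random polynomial matrix $G(z)\in\mathbb F[z]^2$ with $\deg(G(z))=1$" means: writing $G(z)=g_0+g_1z$ with $g_0,g_1\in\mathbb F^2$, the condition $\deg(G(z))=1$ is exactly $g_1\neq 0$, so the sample space is $\{(g_0,g_1)\in\mathbb F^2\times\mathbb F^2 : g_1\neq 0\}$, which has cardinality $q^2(q^2-1)$, and the matrix is drawn uniformly from it.

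Next I would apply the characterization above: $G(z)$ generates a $(2,1,1)$ MDP code if and only if all four scalars $g_{0,1},g_{0,2},g_{1,1},g_{1,2}$ are nonzero and $g_{1,1}g_{0,2}-g_{1,2}g_{0,1}\neq 0$. I would count the admissible tuples by choosing the entries in the order $g_{1,1},g_{1,2},g_{0,1},g_{0,2}$: each of the first three ranges over $\mathbb F\setminus\{0\}$, contributing $(q-1)^3$; given these, $g_{0,2}$ must avoid $0$ and the value $g_{1,2}g_{0,1}/g_{1,1}$, and since $g_{1,2},g_{0,1}\neq 0$ the latter value is itself nonzero, so exactly two values are forbidden and $q-2$ remain. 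Hence there are $(q-1)^3(q-2)$ good tuples. A small but necessary observation is that every admissible tuple automatically satisfies $g_1\neq 0$ (because $g_{1,1}\neq 0$), so this count already lies inside the chosen sample space and no conditioning correction is needed.

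Finally I would form the ratio
$$\frac{(q-1)^3(q-2)}{q^2(q^2-1)}=\frac{(q-1)^3(q-2)}{q^2(q-1)(q+1)}=\frac{(q-1)^2(q-2)}{q^2(q+1)},$$
and substitute $q=1/t$, using $(q-1)^2=(1-t)^2/t^2$, $q-2=(1-2t)/t$, $q^2=1/t^2$ and $q+1=(1+t)/t$; all powers of $t$ cancel and one is left with $\dfrac{(1-t)^2(1-2t)}{1+t}$, as claimed. (For $q=2$ this evaluates to $0$, consistent with the factor $q-2=0$.)

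There is essentially no serious obstacle: the computation is routine combinatorics. The only two points that genuinely need care are (i) correctly pinning down the sample space as "degree exactly $1$" rather than "degree at most $1$", which fixes the denominator, and (ii) verifying that the forbidden value $g_{1,2}g_{0,1}/g_{1,1}$ is distinct from $0$, so that the last entry has $q-2$ (and not $q-1$) admissible values.
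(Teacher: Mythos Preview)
The paper does not supply its own proof of this theorem; it is quoted from \cite[Theorem 84]{diss}. What the paper does do, immediately after the statement, is extract the count $(|\mathbb F|-1)^3(|\mathbb F|-2)$ of admissible $G(z)$ from the characterization---which is exactly the enumeration you carry out. Your argument for the probability formula is correct, including the two delicate points you flag: the sample space is determined by $g_1\neq 0$ (giving the denominator $q^2(q^2-1)$), and the forbidden value $g_{1,2}g_{0,1}/g_{1,1}$ is nonzero so that genuinely $q-2$ choices remain for $g_{0,2}$. The algebra reducing the ratio to $(1-t)^2(1-2t)/(1+t)$ is fine.

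One remark: your proposal only addresses the ``Consequently'' clause, taking the MDP characterization (all four entries nonzero and $g_{1,1}g_{0,2}-g_{1,2}g_{0,1}\neq 0$) as given. That is consistent with how the paper treats the result---the characterization is imported from \cite{diss} rather than re-proved---but if a self-contained proof were expected you would still need to derive that equivalence from the column-distance/minor criterion for $\mathcal G_L$ with $L=2$.
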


Form this theorem it follows that the number of $G(z)\in\mathbb F[z]^2$ with $\deg(G(z))=1$ that generate a $(2,1,1)$ MDP convolutional code over $\mathbb F$ is $(|\mathbb F|-1)^3\cdot (|\mathbb F|-2)$. Since two such generator matrices generate the same code if and only if they differ by a factor from $\mathbb F\setminus\{0\}$, the number of $(2,1,1)$ MDP convolutional codes over $\mathbb F$ is $(|\mathbb F|-1)^2\cdot (|\mathbb F|-2)$. In particular, there exists a $(2,1,1)$ MDP convolutional code over $\mathbb F$ if and only if $|\mathbb F|\geq 3$.

Next, we want to investigate reverse and complete MDP convolutional codes with these parameters.

\begin{remark}\ \\
The dual of a $(2,1,1)$ convolutional code is again a $(2,1,1)$ convolutional code and it is easy to see that one could formulate the criterion for the MDP property in the same way if using the parity-check matrix: If the code has parity-check matrix $H(z)=\sum_{i=0}^{\delta}h_iz^i=h_0+h_1z$ with $h_0,h_1\in\mathbb F^{1\times 2}$, the code is MDP if and only if $0\notin\{h_{0,1}, h_{0,2},h_{1,1},h_{1,2}\}$ and $h_{1,1}h_{0,2}-h_{1,2}h_{0,1}\neq 0$. 
\end{remark}

\begin{corollary}\ \\
A $(2,1,1)$ convolutional code is MDP if and only if it is complete MDP. Thus, the statements of the preceding theorem are also true for $(2,1,1)$ reverse convolutional codes and $(2,1,1)$ complete MDP convolutional codes.
\end{corollary}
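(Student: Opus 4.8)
The implication ``complete MDP $\Rightarrow$ MDP'' needs no computation: by Remark~(i) after Definition~\ref{com} a complete MDP code is reverse MDP, and a reverse MDP code is MDP by definition. So the content of the corollary is the converse, that for the parameters $(2,1,1)$ the (seemingly weaker) MDP property already forces the complete MDP property. Combined with the chain above this also yields that reverse MDP, complete MDP and MDP coincide for $(2,1,1)$, whence the characterisation and the probability $\frac{(1-t)^2(1-2t)}{1+t}$ of the preceding theorem transfer verbatim to the reverse and the complete MDP case.

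For the converse I would argue as follows. By the Remark preceding the corollary, an MDP code with left prime, row proper parity-check matrix $H(z)=h_0+h_1z$, $h_0,h_1\in\mathbb F^{1\times 2}$, satisfies $a,b,c,d\neq 0$ and $ad-bc\neq 0$, where I abbreviate $a:=h_{0,1}$, $b:=h_{0,2}$, $c:=h_{1,1}$, $d:=h_{1,2}$. Since $n-k=1$, this $H(z)$ is unique up to a nonzero scalar, which multiplies every $3\times 3$ minor of the partial parity-check matrix by a nonzero cube, so it suffices to verify the complete MDP condition of Definition~\ref{com} for this single $H(z)$. For $(2,1,1)$ one has $\nu=1$, $L=2$ and $(n-k)\mid\delta$, so the partial parity-check matrix is the $3\times 8$ matrix
$$\mathfrak H=\begin{pmatrix} c & d & a & b & 0 & 0 & 0 & 0\\ 0 & 0 & c & d & a & b & 0 & 0\\ 0 & 0 & 0 & 0 & c & d & a & b\end{pmatrix},$$
and by Lemma~\ref{index} (with $n=2$, $n-k=1$, $\nu=1$, $L=2$) the minor built from columns $j_1<j_2<j_3$ is not trivially zero precisely when $j_1\le 4$, $3\le j_2\le 6$ and $j_3\ge 5$.

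I would then split the admissible triples into three cases according to how they meet the column blocks $\{1,2\}$, $\{3,4\}$, $\{5,6\}$, $\{7,8\}$. No admissible triple can contain both columns of $\{1,2\}$ (that forces $j_2=2$, contradicting $j_2>2$) nor both columns of $\{7,8\}$ (that forces $j_2=7$, contradicting $j_2\le 6$). If a triple contains both columns of $\{3,4\}$, admissibility forces it to be $(3,4,j_3)$ with $j_3\ge 5$; if it contains both columns of $\{5,6\}$, it must be $(j_1,5,6)$ with $j_1\le 4$; in either of these two cases the $2\times 2$ block spanned by the repeated pair occupies two of the three rows and contributes the factor $\pm(ad-bc)$, while the remaining column contributes a single factor from $\{a,b,c,d\}$, so the determinant is $\pm e\,(ad-bc)$ with $e\in\{a,b,c,d\}$. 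In the remaining case at most one column is taken from each of $\{3,4\}$ and $\{5,6\}$; then the three chosen columns lie in three distinct blocks, and reading them in the order $j_1<j_2<j_3$ the resulting $3\times 3$ matrix is triangular, so its determinant is a monomial of degree $3$ in $a,b,c,d$. In all cases the hypothesis $a,b,c,d\neq 0$, $ad-bc\neq 0$ makes the minor nonzero, hence every not-trivially-zero full size minor of $\mathfrak H$ is nonzero and the code is complete MDP.

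The only genuine work is the bookkeeping in the last step: checking that the three cases really exhaust all admissible triples, and that in the ``triangular'' case the matrices are triangular in the natural row order (which holds because, passing from the block $\{1,2\}$ to $\{3,4\}$ to $\{5,6\}$ to $\{7,8\}$, both the first and the last nonzero row index of a column are nondecreasing). None of this is deep; Lemma~\ref{index} together with the already established implications among the three classes makes the statement a short finite verification.
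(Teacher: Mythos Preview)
Your argument is correct and follows the same route as the paper: the paper's proof is literally the one-line observation that ``the conditions on the parity-check matrix of the preceding remark are also sufficient to get a complete (and hence also a reverse) MDP convolutional code,'' and you have carried out exactly that verification explicitly by writing down $\mathfrak H$, invoking Lemma~\ref{index}, and checking all admissible $3\times 3$ minors. Your case split is accurate and the bookkeeping is sound.
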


\begin{proof}\ \\
It is easy to the that the conditions on the parity-check matrix of the preceding remark are also sufficient to get a complete (and hence also a reverse) MDP convolutional code.
\end{proof}

At the end of this subsection, we want to compute the probability of a $(2,1,1)$ MDP convolutional code under the condition that the code is non-catastrophic.

\begin{corollary}\ \\
The probability that a non-catastrophic $(2,1,1)$ convolutional code is MDP, reverse MDP or complete MDP is $\frac{(1-t)(1-2t)}{1+t}$.
\end{corollary}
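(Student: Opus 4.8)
The plan is to reduce the statement to a single counting problem via the preceding corollary, which asserts that for $(2,1,1)$ codes the properties MDP, reverse MDP and complete MDP coincide; thus it suffices to compute the probability that a non-catastrophic $(2,1,1)$ code is MDP. Write $q:=|\mathbb F|$, so $t=1/q$, and work with polynomial matrices $G(z)=g_0+g_1z\in\mathbb F[z]^2$ of degree $1$, i.e.\ tuples $(g_{0,1},g_{0,2},g_{1,1},g_{1,2})\in\mathbb F^4$ with $(g_{1,1},g_{1,2})\neq(0,0)$, of which there are $q^4-q^2$; this is the same sample space used in the cited Theorem, and passing from generator matrices to codes (dividing by $q-1$, since the scaling action of $\mathbb F\setminus\{0\}$ has all orbits of size $q-1$) does not change any of the ratios below. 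I would then use
$$\mathbb{P}(\text{MDP}\mid\text{non-catastrophic})=\frac{\mathbb{P}(\text{MDP})}{\mathbb{P}(\text{non-catastrophic})},$$
which holds because every $(2,1,1)$ MDP code is non-catastrophic: the criterion of the cited Theorem forces $g_{1,1},g_{1,2}\neq 0$ and $g_{1,1}g_{0,2}-g_{1,2}g_{0,1}\neq 0$, so the two entries of $G(z)$ are degree-one polynomials that are not scalar multiples of one another, hence have no common root in $\overline{\mathbb F}$ and $G(z)$ is right prime. The numerator $\mathbb{P}(\text{MDP})=\frac{(1-t)^2(1-2t)}{1+t}$ is exactly the content of the cited Theorem.

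It then remains to compute $\mathbb{P}(\text{non-catastrophic})$, i.e.\ the fraction of degree-one matrices $G(z)$ that are right prime. I would count the complement. A degree-one $G(z)$ is not right prime precisely when $\gcd(g_{0,1}+g_{1,1}z,\;g_{0,2}+g_{1,2}z)$ is not a unit of $\mathbb F[z]$, and this happens in exactly two mutually exclusive ways: (i) one of the two entries is the zero polynomial, which forces the other to have degree exactly $1$, giving $2q(q-1)$ tuples; or (ii) both entries are nonzero and share a root in $\overline{\mathbb F}$, which (a nonzero constant having no roots) forces both entries to have degree $1$ and to be proportional, i.e.\ $g_{1,1},g_{1,2}\neq 0$ and $g_{1,1}g_{0,2}-g_{1,2}g_{0,1}=0$, giving $q(q-1)^2$ tuples. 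Hence the number of non-right-prime degree-one matrices is $2q(q-1)+q(q-1)^2=q(q-1)(q+1)$, so the number of right prime ones is $q^4-q^2-q(q-1)(q+1)=q(q+1)(q-1)^2$, and therefore $\mathbb{P}(\text{non-catastrophic})=\dfrac{q(q+1)(q-1)^2}{q^2(q^2-1)}=\dfrac{q-1}{q}=1-t$.

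Combining the pieces yields $\mathbb{P}(\text{MDP}\mid\text{non-catastrophic})=\dfrac{(1-t)^2(1-2t)/(1+t)}{1-t}=\dfrac{(1-t)(1-2t)}{1+t}$, and by the corollary the identical value holds for reverse MDP and for complete MDP codes, which is the claim. The only genuinely delicate step is the enumeration in the second paragraph: one must treat separately the tuples with a vanishing entry and those whose two entries have different degrees (which are automatically right prime, since a nonzero constant has no roots), rather than collapsing everything into one proportionality condition; once this case split is made correctly, the rest is routine arithmetic and the substitution $t=1/q$.
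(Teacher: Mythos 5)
Your proof is correct and follows essentially the same route as the paper: observe that every $(2,1,1)$ MDP code is automatically non-catastrophic, so the conditional probability is the MDP probability $\frac{(1-t)^2(1-2t)}{1+t}$ divided by the non-catastrophicity probability $1-t$. The only difference is that you compute $\mathbb{P}(\text{non-catastrophic})=1-t$ by a self-contained (and correct) count of non-right-prime degree-one matrices, whereas the paper simply cites this value from an external reference.
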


\begin{proof}\ \\
The conditions on the generator matrix $G(z)$ to get an MDP, reverse MDP or complete MDP convolutional code imply that the two entries of $G(z)$ are of degree one and have a different zero. This means that the entries of $G(z)$ are coprime. Thus, each $(2,1,1)$ MDP convolutional code is non-catastrophic. Consequently, to obtain the conditional probability (under the condition that the code is non-catastrophic), one has just to divide the probability of the first theorem by the probability of non-catastrophicity, which is $1-t$; see \cite{ich}.
\end{proof}

\subsection{$(n,1,1)$ convolutional codes for $n\geq 3$}

For codes with these parameters, we consider generator matrices of the form $G(z)=\sum_{i=0}^{\delta}g_iz^i=g_0+g_1z$ with $g_0,g_1\in\mathbb F^n$.

\begin{theorem}\cite{diss}\ \\
For $n\geq 3$, the probability that $G(z)\in\mathbb F[z]^n$ with $\deg(G(z))=1$ generates an $(n,1,1)$ MDP convolutional code is 
$$(1-t^n)^{-1}(1-t)^{n+1}\prod_{i=2}^{n-1}(1-it).$$
\end{theorem}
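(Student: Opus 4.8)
The plan is to convert the MDP condition into explicit polynomial conditions on the coefficient vectors $g_0,g_1\in\mathbb F^n$ of $G(z)=g_0+g_1z$ and then to count. Since $k=1$ and $\delta=1$ we have $L=1$, so the criterion theorem (equivalence of MDP with a minor condition on $\mathcal G_L$) applies with
$$\mathcal G_1=\left[\begin{array}{cc} g_0 & 0\\ g_1 & g_0\end{array}\right]\in\mathbb F^{2n\times 2},$$
whose full size minors are its $2\times2$ minors. I would run through these: a minor on two rows among the top $n$ rows has determinant $0$ (trivially zero); a minor on a top row $i$ and a bottom row $j$ has determinant $g_{0,i}g_{0,j}$; a minor on two bottom rows $i<j$ has determinant $g_{1,i}g_{0,j}-g_{1,j}g_{0,i}$. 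For a right prime $G(z)$ (automatically minimal, its single column degree being $\delta=1$) this yields: $\mathfrak C$ is MDP iff $g_{0,i}\neq0$ for all $i$ and $g_{1,i}g_{0,j}-g_{1,j}g_{0,i}\neq0$ for all $i\neq j$ --- equivalently, all $g_{0,i}\neq0$ and, setting $r_i:=g_{1,i}/g_{0,i}$, the entries $r_1,\dots,r_n$ are pairwise distinct. Note that for $n\ge2$ this forces $g_1\neq0$, so the constraint $\deg G=1$ holds automatically on the favorable set.

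Next I would remove the right-primeness hypothesis so that the characterization holds for every degree-one $G(z)$. If the two conditions above hold, the entries of $G(z)$ have no common root (an entry with $g_{1,i}=0$ is a nonzero constant; if all $g_{1,i}\neq0$, the roots $-g_{0,i}/g_{1,i}$ are pairwise distinct), so $G(z)$ is right prime and the criterion theorem gives that $\mathfrak C$ is MDP. For the converse, short direct column-distance estimates handle the two ways the conditions can fail: if some $g_{0,i}=0$, then every codeword has $i$-th component of $v_0$ equal to $0$, whence $d_0^c\le n-1<n$; and if $G(z)$ is not right prime, then --- its entries having degree $\le1$ --- $G(z)=p(z)c$ with $\deg p=1$ and $c\in\mathbb F^n$, and choosing the message to annihilate $v_1$ yields a codeword with $v_0\neq0$ and $wt(v_0)+wt(v_1)\le n<2n-1$, so $d_1^c<2n-1$; in the remaining case ($G(z)$ right prime) the criterion theorem applies directly.

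For the count, set $q:=|\mathbb F|$. The sample space $\{g_0+g_1z:\ g_0\in\mathbb F^n,\ g_1\in\mathbb F^n\setminus\{0\}\}$ has $q^n(q^n-1)$ elements. For the favorable set I would first choose $g_0$ with all entries nonzero, in $(q-1)^n$ ways; then $g_1\mapsto(r_1,\dots,r_n)$ with $r_i=g_{1,i}/g_{0,i}$ is a bijection $\mathbb F^n\to\mathbb F^n$, and the number of tuples with pairwise distinct entries is $q(q-1)\cdots(q-n+1)$. Hence the probability is
$$\frac{(q-1)^n\,q(q-1)\cdots(q-n+1)}{q^n(q^n-1)},$$
and substituting $t=q^{-1}$, i.e. $(q-1)^n=q^n(1-t)^n$, $\ q(q-1)\cdots(q-n+1)=q^n(1-t)\prod_{i=2}^{n-1}(1-it)$, and $q^n-1=q^n(1-t^n)$, this simplifies to $(1-t^n)^{-1}(1-t)^{n+1}\prod_{i=2}^{n-1}(1-it)$.

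The main obstacle lies in the first two paragraphs: sorting out correctly which full size minors of $\mathcal G_1$ are trivially zero (harmless here since $\mathcal G_1$ is tiny, but requiring care because $g_0$ occurs twice in it) and making the criterion theorem applicable to an arbitrary degree-one $G(z)$ rather than only to a right prime one, i.e. the two direct distance estimates above. Once the conditions "all $g_{0,i}\neq0$ and the ratios $g_{1,i}/g_{0,i}$ pairwise distinct" are in hand, the enumeration and the rewriting in the variable $t$ are entirely routine.
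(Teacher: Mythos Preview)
Your proof is correct and follows essentially the same approach as the one indicated in the paper (the theorem is cited from \cite{diss}, but the subsequent Remark reveals the method): characterize MDP by the conditions that all $g_{0,i}\neq 0$ and the pairs $\binom{g_{0,i}}{g_{1,i}}$ are pairwise linearly independent, then count sequentially. Your reformulation via the ratios $r_i=g_{1,i}/g_{0,i}$ being pairwise distinct is exactly equivalent (since $g_{0,i}\neq 0$), and your careful handling of the right-primeness hypothesis is a welcome addition that the paper leaves implicit.
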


\begin{remark}\ \\
According to the proof of the preceding theorem, for $n\geq 3$, the number of $(n,1,1)$ MDP convolutional codes over $\mathbb F$ is $|\mathbb F|(|\mathbb F|-1)^{n+1}\prod_{i=2}^{n-1}(|\mathbb F|-i)$, i.e. such a code exists if and only if $|\mathbb F|\geq n$, and one can construct a generator matrix of such a code as follows: choose all entries of $g_0$ arbitrary but nonzero, choose the first entry of $g_1$ arbitrary and then choose entry $i$ of $g_1$ such that $\begin{pmatrix} g_{0,i}\\ g_{1,i}\end{pmatrix}$ is linear independent to $\begin{pmatrix} g_{0,j}\\ g_{1,j}\end{pmatrix}$ for $j\in\{1,\hdots,i-1\}$.\\
One could see that all MDP codes with these parameters are reverse MDP: For $j=1,\hdots,n$, it holds $\overline{g_j}(z)=\begin{cases}g_j(z), & g_{1,j}=0\\
g_{1,j}+g_{0,j}z, & g_{1,j}\neq 0 \end{cases}$. Therefore, $\overline{g_{0,j}}\neq 0$ and it is easy to see that the other conditions are also fulfilled for $\overline{G}$.\\
However, since $n-1\geq 2>\delta$ and thus $(n-k)\nmid\delta$ a complete MDP convolutional code with these parameters cannot exist \cite{cmdp}.
\end{remark}

As in the preceding subsection, we finally consider the probability for MDP convolutional codes under the condition that the code is non-catastrophic.

\begin{theorem}\ \\
The probability that a non-catastrophic $(2,1,1)$ convolutional code is MDP or reverse MDP is $((1-t^n)(1-t^{n-1}))^{-1}(1-t)^{n+1}\prod_{i=2}^{n-1}(1-it)$.
\end{theorem}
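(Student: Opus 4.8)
The plan is to proceed exactly as in the corollary for $(2,1,1)$ codes in the previous subsection: conditioning on non-catastrophicity, the conditional probability of being MDP equals the unconditional MDP probability divided by the probability of non-catastrophicity, provided the MDP codes form a subset of the non-catastrophic ones. The unconditional probability $(1-t^n)^{-1}(1-t)^{n+1}\prod_{i=2}^{n-1}(1-it)$ is supplied by the preceding theorem, and by the preceding remark every $(n,1,1)$ MDP code is reverse MDP, so the quotient I obtain answers both assertions at once.

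First I would compute, for a uniformly random $G(z)=g_0+g_1z$ with $g_0\in\mathbb F^n$ and $g_1\in\mathbb F^n\setminus\{0\}$, the probability that $G(z)$ is right prime. As $G(z)$ consists of a single column, it fails to be right prime precisely when its entries share a root $\alpha\in\overline{\mathbb F}$, i.e. when $g_0+\alpha g_1=0$; since $g_1\neq0$ and both vectors have entries in $\mathbb F$, such an $\alpha$ necessarily lies in $\mathbb F$, so the code is catastrophic exactly when $g_0\in\mathbb F g_1$. For each fixed $g_1\neq0$ this leaves exactly $|\mathbb F|$ bad choices of $g_0$ among $|\mathbb F|^n$, so the probability of being catastrophic is $t^{n-1}$ and the probability of non-catastrophicity is $1-t^{n-1}$, which also agrees with \cite{ich}.

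Next I would check that every $(n,1,1)$ MDP code is non-catastrophic. A convenient way is to use the description of MDP generator matrices recalled in the preceding remark: an MDP code admits $G(z)=g_0+g_1z$ with all $g_{0,i}\neq0$ and with the vectors $\binom{g_{0,i}}{g_{1,i}}$ pairwise linearly independent. If the entries of such a $G(z)$ had a common root $\alpha$, then $g_{0,i}=-\alpha g_{1,i}$ for all $i$, with moreover $g_{1,i}\neq0$ (otherwise $g_{0,i}=0$), so every $\binom{g_{0,i}}{g_{1,i}}$ would be a nonzero scalar multiple of $\binom{-\alpha}{1}$, contradicting pairwise independence. Hence the MDP codes are a subset of the non-catastrophic codes, and dividing the two probabilities yields
$$\frac{(1-t^n)^{-1}(1-t)^{n+1}\prod_{i=2}^{n-1}(1-it)}{1-t^{n-1}}=\big((1-t^n)(1-t^{n-1})\big)^{-1}(1-t)^{n+1}\prod_{i=2}^{n-1}(1-it),$$
which is the claimed expression, valid simultaneously for MDP and reverse MDP by the preceding remark.

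The only step needing care is the non-catastrophicity count: one must justify that, once $g_1\neq0$, no common root in $\overline{\mathbb F}\setminus\mathbb F$ can arise and that the set of bad $g_0$ is precisely the line $\mathbb F g_1$ (in particular not omitting $g_0=0$). Everything else is the formal combination of two probabilities that are already at our disposal.
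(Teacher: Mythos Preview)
Your proposal is correct and follows essentially the same approach as the paper: show that the MDP condition (pairwise linear independence of the columns $\binom{g_{0,i}}{g_{1,i}}$) forces coprimeness of the entries of $G(z)$, hence non-catastrophicity, and then divide the unconditional MDP probability by the probability $1-t^{n-1}$ of non-catastrophicity. The only difference is that the paper simply cites \cite{ich} for the value $1-t^{n-1}$, whereas you supply a short direct argument for it; your treatment of the reverse MDP case via the preceding remark also matches the paper's reasoning.
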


\begin{proof}\ \\
That $\begin{pmatrix} g_{0,i}\\ g_{1,i}\end{pmatrix}$ is linear independent to $\begin{pmatrix} g_{0,j}\\ g_{1,j}\end{pmatrix}$ for $j\in\{1,\hdots,i-1\}$ implies that the entries of $G(z)$ are coprime and hence each $(n,1,1)$ MDP convolutional code is non-catastrophic. Thus, to get the conditional probability, one just has to divide the formula of the preceding theorem by the probability of non-catastrophicity, which is $1-t^{n-1}$; see \cite{ich}.
\end{proof}

\subsection{$(n,n-1,1)$ convolutional codes for $n\geq 3$}

As the $(n,n-1,1)$ MDP convolutional codes are dual to the $(n,1,1)$ MDP convolutional codes treated in the previous subsection, we easily get all $(n,n-1,1)$ MDP convolutional codes and know that they exist if and only if $|\mathbb F|\geq n$. For the construction, we just replace the conditions on the generator matrix $G$ from the preceding subsection by the same conditions on the parity-check matrix $H$. The following theorem considers $(n,n-1,1)$ reverse and complete MDP convolutional codes.

\begin{theorem}\ \\
The number of $(n,n-1,1)$ reverse and complete MDP convolutional codes is both $(|\mathbb F|-1)^{n+1}\prod_{i=2}^{n}(|\mathbb F|-i)$. Hence, the minimal field size for which an $(n,n-1,1)$ reverse or complete MDP convolutional code could exist is $n+1$.
\end{theorem}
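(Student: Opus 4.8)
The plan is to reduce both properties to explicit conditions on the (minimal) parity-check matrix and then count, exactly as in the previous subsection. For $(n,n-1,1)$ with $n\ge 3$ one has $n-k=1$, $\nu=1$ and $L=1$, so $H(z)=h_0+h_1z$ with $h_0,h_1\in\mathbb F^{1\times n}$. First I would unwind the $\mathcal H_L$-criterion for MDP codes in this tiny case: the matrix $\mathcal H_1=\left[\begin{smallmatrix}h_0&0\\h_1&h_0\end{smallmatrix}\right]$ is $2\times 2n$, and by the index Remark following that criterion the relevant $2\times 2$ minors use columns $j_1<j_2$ with $j_1\le n$; splitting the column index into the two blocks of length $n$, such a minor is, up to sign, either $h_{0,j_1}h_{0,j_2'}$ (with $j_2=n+j_2'$) or $h_{0,j_1}h_{1,j_2}-h_{0,j_2}h_{1,j_1}$. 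Hence $\mathfrak{C}$ is MDP iff every entry of $h_0$ is nonzero and the $n$ vectors $\binom{h_{0,i}}{h_{1,i}}$ are pairwise linearly independent, which is just the dual form of the $(n,1,1)$ criterion (consistent with Theorem \ref{dual}). Since $n-k\mid\delta$, the Remark before Definition \ref{com} gives that $\overline{\mathfrak{C}}$ has parity-check matrix $\overline{H(z)}=h_1+h_0z$, so running the same computation with the roles of $h_0,h_1$ swapped shows that $\overline{\mathfrak{C}}$ is MDP iff every entry of $h_1$ is nonzero and the $\binom{h_{1,i}}{h_{0,i}}$ (equivalently the $\binom{h_{0,i}}{h_{1,i}}$) are pairwise independent. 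Combining, $\mathfrak{C}$ is reverse MDP iff all entries of $h_0$ and of $h_1$ are nonzero and the $n$ columns $\binom{h_{0,i}}{h_{1,i}}$ are pairwise linearly independent; call this condition $(\star)$.

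Next I would check that for these parameters complete MDP is again exactly $(\star)$. The partial parity-check matrix is $\mathfrak H=\left[\begin{smallmatrix}h_1&h_0&0\\0&h_1&h_0\end{smallmatrix}\right]\in\mathbb F^{2\times 3n}$, and by Lemma \ref{index} its not trivially zero $2\times 2$ minors are those from columns $j_1<j_2$ with $j_2>n$ and $j_1\le 2n$. Splitting again into the three blocks of length $n$, the only possible types of such minors are, up to sign, $h_{1,a}h_{1,b}$, $h_{1,a}h_{0,c}$, $h_{0,a}h_{1,b}-h_{0,b}h_{1,a}$ and $h_{0,a}h_{0,c}$; requiring all of these to be nonzero is precisely $(\star)$. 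Therefore, for $(n,n-1,1)$, reverse MDP and complete MDP are the same condition on $\mathfrak{C}$, so the two code counts agree and equal the number of codes satisfying $(\star)$ (this also re-proves part~(i) of the Remark after Definition \ref{com} in this special case).

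It remains to count. As in the previous subsection an $(n,n-1,1)$ code is recovered from its minimal parity-check matrix $H(z)=h_0+h_1z$, and a matrix satisfying $(\star)$ is automatically row proper of row degree $1=\delta$ (because $h_1\neq 0$) and left prime (if $h_0+h_1z_0=0$ for some $z_0\in\overline{\mathbb F}$, then all $\binom{h_{0,i}}{h_{1,i}}$ would be parallel, impossible for $n\ge 2$), hence is indeed the parity-check matrix of such a code; so it suffices to count the matrices satisfying $(\star)$. I build the columns $\binom{h_{0,i}}{h_{1,i}}$ one at a time: at step $i$ the column must have both coordinates nonzero, which are $(|\mathbb F|-1)^2$ vectors, and it must avoid the $i-1$ lines spanned by the previously chosen columns, each of which (since those columns are pairwise independent with nonzero coordinates) meets the ``both coordinates nonzero'' set in exactly $|\mathbb F|-1$ points, these intersections being pairwise disjoint. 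Thus step $i$ admits $(|\mathbb F|-1)^2-(i-1)(|\mathbb F|-1)=(|\mathbb F|-1)(|\mathbb F|-i)$ choices, independently of the earlier ones, and multiplying for $i=1,\dots,n$ gives $\prod_{i=1}^{n}(|\mathbb F|-1)(|\mathbb F|-i)=(|\mathbb F|-1)^{n+1}\prod_{i=2}^{n}(|\mathbb F|-i)$. This is positive iff $|\mathbb F|-i\ge 1$ for all $i\le n$, i.e. iff $|\mathbb F|\ge n+1$ (for $2\le|\mathbb F|\le n$ the construction gets stuck at the $|\mathbb F|$-th column, and the formula correctly evaluates to $0$), so the minimal field size is $n+1$.

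The only delicate part is the bookkeeping in the first two steps: one must apply the index conditions (from the Remark after the $\mathcal H_L$-criterion and from Lemma \ref{index}) correctly for $\nu=L=1$, enumerate the few possible nonzero-patterns of the $2\times 2$ submatrices of $\mathcal H_1$ (for $\mathfrak{C}$), of $\mathcal H_1$ built from $\overline{H(z)}$ (for $\overline{\mathfrak{C}}$), and of $\mathfrak H$, and verify that all three collapse to the single condition $(\star)$. Once that is pinned down, the existence threshold and the count follow from the same greedy argument already used for the $(n,1,1)$ codes.
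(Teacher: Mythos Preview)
Your proposal is correct and follows essentially the same route as the paper: identify the extra condition beyond MDP (all entries of $h_1$ nonzero), observe that it yields the same condition $(\star)$ for both reverse and complete MDP, and then count as in the $(n,1,1)$ analysis. The paper's own proof is a one-liner that simply states the additional condition and refers back to the earlier counting technique; you do considerably more work, explicitly enumerating the relevant $2\times 2$ minors of $\mathcal H_1$, of $\mathcal H_1$ for $\overline H$, and of $\mathfrak H$, and also checking left-primeness and row-properness, so your argument is a fully detailed version of what the paper only sketches.
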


\begin{proof}\ \\
To get reverse or complete MDP convolutional codes with these parameters, one has the additional condition that all entries of $h_1$ have to be nonzero. Thus, there are $(|\mathbb F|-1)^{n+1}\prod_{i=2}^{n}(|\mathbb F|-i)$ such convolutional codes, which could be constructed with the same technique as mentioned before.
\end{proof}

\section{Sufficient field size and probability for MDP convolutional codes with arbitrary parameters}

\subsection{Sufficient field size}

The goal of this subsection is to estimate what field size one needs such that it is possible to construct an MDP convolutional code with given but arbitrary parameters $n$, $k$ and $\delta$. 

\begin{theorem}\ \\
Let $g$ be the polynomial that is formed by the product of all not trivially zero fullsize minors of $\mathcal{H}_L$ and has the entries of the coefficient matrices of $H$ as variables. Then, an $(n,k,\delta)$ MDP convolutional code exists if $|\mathbb F|>\deg(g)$.
\end{theorem}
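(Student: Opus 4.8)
The plan is to use the characterization of the MDP property via the matrix $\mathcal{H}_L$ from the earlier theorem, namely that $\mathfrak{C}$ is MDP if and only if every not trivially zero full size minor of $\mathcal{H}_L$ is nonzero. The idea is purely existential: we must exhibit one choice of the free coefficients $H_1,\hdots,H_\nu$ (the entries of $H_0$ being fixed, say, so that $H(z)$ is row proper and left prime, as in the standard normalization) making all these minors simultaneously nonzero. Since a product of field elements is nonzero precisely when each factor is, the MDP property for a given coefficient assignment is equivalent to $g \neq 0$ at that assignment, where $g$ is the product of all not trivially zero full size minors of $\mathcal{H}_L$.

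First I would observe that $g$ is a nonzero polynomial in the free variables: each factor (a not trivially zero minor) is by definition a nonzero polynomial, and a product of nonzero polynomials over the polynomial ring $\mathbb{F}[\,\cdot\,]$ — which is an integral domain — is nonzero. Next I would invoke the Schwartz–Zippel bound (Theorem \ref{sz}(a)): a nonzero polynomial of total degree $\deg(g)$ over $\mathbb{F}$ has at most $\deg(g)\cdot|\mathbb{F}|^{r-1}$ zeros in $\mathbb{F}^r$, where $r$ is the number of free variables. Hence the number of zeros is strictly less than $|\mathbb{F}|^r$ as soon as $|\mathbb{F}| > \deg(g)$, so there exists a point in $\mathbb{F}^r$ at which $g$ does not vanish. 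For that assignment of the coefficient matrices, every not trivially zero full size minor of $\mathcal{H}_L$ is nonzero, and therefore the resulting code is MDP.

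One technical point to address is that the polynomial identity "$g$ nonzero as a formal polynomial" must be matched with "$g$ nonzero over the finite field $\mathbb{F}$", but this is exactly what the degree hypothesis $|\mathbb{F}| > \deg(g)$ secures via Schwartz–Zippel; no further argument is needed. A second point is to make sure that the assignment produced also respects whatever nondegeneracy is implicitly required of a genuine parity-check matrix of an $(n,k,\delta)$ code (full rank, correct row degrees); this can be arranged since those are themselves not trivially zero polynomial conditions, and one may either fold the corresponding determinants into the product $g$ or note that they hold generically and hence on a further nonempty open subset — in either case the bound $|\mathbb{F}| > \deg(g)$ (possibly after enlarging $g$ by these extra factors) still does the job.

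The main obstacle is not the logical structure, which is the routine Schwartz–Zippel existence argument, but rather making $\deg(g)$ explicit and sharp: the content of the surrounding section is presumably to bound $\deg(g)$ well, since each minor has degree governed by the structure of $\mathcal{H}_L$ and the number of not trivially zero minors is large. For the statement as written, however, it suffices to know $\deg(g)$ is finite, so the proof itself is short; the real work lives in the subsequent estimates of $\deg(g)$.
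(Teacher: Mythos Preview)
Your proposal is correct and follows essentially the same approach as the paper: apply Schwartz--Zippel to the product polynomial $g$ to conclude that a nonvanishing assignment exists once $|\mathbb{F}|>\deg(g)$. The only difference is in how the side conditions are handled: the paper fixes the shape of $H_\nu$ and $H_{\nu-1}$ via the Kronecker--Hermite normal form of Remark~\ref{kh} so that row-properness and the correct degree $\delta$ are guaranteed a priori (fixing $H_0$, as you suggest, would not ensure row-properness, which concerns the highest-degree coefficients), rather than folding extra determinantal conditions into $g$.
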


\begin{proof}\ \\
It is sufficient to show the existence of an MDP convolutional code with generic row degrees, i.e. $\nu=\lceil\frac{\delta}{n-k}\rceil$ and consider only matrices $H$ of the form of Remark \ref{kh}. This means $H_{\nu}$ and $H_{\nu-1}$ are of the forms $H_{\nu}=\left(\begin{array}{ccc}
I_{\delta-(n-k)(\nu-1)} & 0 & \ast \\ 
0 & 0_{(n-k)\nu-\delta} & 0
\end{array} \right)$ and $H_{\nu-1}=\left(\begin{array}{ccc}
\ast & 0 & \ast \\ 
0_{((n-k)\nu-\delta)\times(\delta-(n-k)(\nu-1))} & I_{(n-k)\nu-\delta} & \ast
\end{array} \right)$, respectively. In this way, one can ensure that the generated code has really the given degree $\delta$.

According to Theorem \ref{sz} (a), a polynomial $g$ over $\mathbb F$ with $m$ variables (and $\deg(g)\geq 0$) has at most $\deg(g)\cdot|\mathbb F|^{m-1}$ zeros. Altogether, there are $|\mathbb F|^m$ tuples of points. Therefore, for having at least one of them being not a zero, it is sufficient that  $|\mathbb F|^m>\deg(g)\cdot|\mathbb F|^{m-1}$, i.e. $|\mathbb F|>\deg(g)$.

 We apply this result to the polynomial $g$ formed by the product of all not trivially zero fullsize minors of $\mathfrak{H}_L$.

 Since some of the entries of $H_{\nu}$ and $H_{\nu-1}$ are fixed zeros or ones, one has less variables than the number of entries of the coefficient matrices of $H$ but this has no influence on the result (note that if $\nu>L$, i.e. $\nu=L+1$, and $\lfloor\frac{\delta}{k}\rfloor=0$, for which case we give a better bound in a later section, all entries of $H_{\nu}$ do not occur in the polynomial on which we apply Schwartz-Zippel). What influences Schwartz-Zippel is not the number of variables but the degree of the polynomial $g$. This degree is in all cases at most $(L+1)(n-k)$ times the number of not trivially zero fullsize minors of $\mathfrak{H}_L$.
\end{proof}

%
%
%

It remains to estimate the degree of the polynomial $g$ from the preceding theorem to get an explicite bound for the field size.

\begin{theorem}\label{M}\ \\
If $|\mathbb F|>\min\{M_1,M_2,M_3\}$ with
\begin{align*}
M_1&:=(L+1)(n-k)\binom{(L+1)n}{(L+1)(n-k)}\\
M_2&:=(L+1)(n-k)\binom{n}{n-k}\binom{n+k}{n-k}\cdots\binom{n+Lk}{n-k}\\
M_3&:=(L+1)(n-k)\sum_{i=n-k}^{n}\binom{i-1}{n-k-1}\binom{2n-i}{n-k}\cdots\binom{2n+(L-1)k-i}{n-k}
\end{align*}
then there exists an $(n,k,\delta)$ MDP convolutional code over $\mathbb F$.
\end{theorem}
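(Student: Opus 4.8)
The plan is to bound the degree of the polynomial $g$ appearing in the preceding theorem, since we already know that an MDP code exists once $|\mathbb{F}| > \deg(g)$. As noted there, $\deg(g)$ is at most $(L+1)(n-k)$ times the number $N$ of not-trivially-zero full-size minors of $\mathfrak{H}_L$ (equivalently, of $\mathcal{H}_L$): each such minor has $(L+1)(n-k)$ columns, and its determinant has total degree at most $(L+1)(n-k)$ in the free entries of the coefficient matrices (in fact $\leq L(n-k)$ after removing the $H_0$ block, but this crude bound suffices). So it remains to find three different upper estimates $M_1/[(L+1)(n-k)]$, $M_2/[(L+1)(n-k)]$, $M_3/[(L+1)(n-k)]$ for $N$, and the theorem follows by taking the minimum.

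For $M_1$, I would simply bound $N$ by the total number of ways to choose $(L+1)(n-k)$ columns out of all $(L+1)n$ columns of $\mathcal{H}_L$, giving $N \leq \binom{(L+1)n}{(L+1)(n-k)}$. For $M_2$, I would use the characterization of admissible column-index tuples from the Remark after the theorem on MDP criteria: the not-trivially-zero minors of $\mathcal{H}_L$ are formed by columns $1 \le j_1 < \cdots < j_{(L+1)(n-k)} \le (L+1)n$ with $j_{s(n-k)} \le sn$ for $s = 1,\dots,L$. Thus in the first block $[1,n]$ one chooses the first $n-k$ indices ($\binom{n}{n-k}$ ways); then the next $n-k$ indices $j_{n-k+1},\dots,j_{2(n-k)}$ must lie in $[j_{n-k}+1, 2n]$, and since $j_{n-k} \ge n-k$ this is a window of size at most $2n - (n-k) = n+k$, giving $\binom{n+k}{n-k}$ choices; iterating, the $(r+1)$-st block of $n-k$ indices lies in a window of size at most $n + rk$, contributing $\binom{n+rk}{n-k}$, for $r = 0,\dots,L$. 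Multiplying yields $N \le \binom{n}{n-k}\binom{n+k}{n-k}\cdots\binom{n+Lk}{n-k}$, hence $M_2$.

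For $M_3$, I would refine the $M_2$ count by conditioning on the value $i := j_{n-k}$, the largest of the first block of indices, which ranges over $\{n-k,\dots,n\}$ by the constraint $j_{n-k}\le n$. Given $i$, the first $n-k-1$ indices $j_1<\cdots<j_{n-k-1}$ are chosen freely in $\{1,\dots,i-1\}$, contributing $\binom{i-1}{n-k-1}$. The next block $j_{n-k+1},\dots,j_{2(n-k)}$ lies in $[i+1, 2n]$, a window of size $2n-i$, giving $\binom{2n-i}{n-k}$; the subsequent blocks lie in windows of size $2n+k-i, 2n+2k-i,\dots,2n+(L-1)k-i$ (each one $k$ larger than the previous, after accounting for the shift from $i$), contributing $\binom{2n+rk-i}{n-k}$ for $r=0,\dots,L-1$. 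Summing over $i$ gives $N \le \sum_{i=n-k}^{n}\binom{i-1}{n-k-1}\binom{2n-i}{n-k}\cdots\binom{2n+(L-1)k-i}{n-k}$, hence $M_3$. I would then conclude: since $|\mathbb{F}| > \min\{M_1,M_2,M_3\} \ge \deg(g)$, the polynomial $g$ has a non-zero in $\mathbb{F}^m$ by Schwartz–Zippel (Theorem~\ref{sz}(a)), and the corresponding choice of coefficient matrices yields an MDP code.

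The main obstacle I anticipate is getting the window sizes in the $M_2$ and especially $M_3$ estimates exactly right — one must carefully track how the lower end of each admissible window depends on the previously chosen indices and on the running constraints $j_{s(n-k)}\le sn$, and in the $M_3$ case correctly identify the window growth rate as $k$ per block versus the apparent $n$ per block, so that the binomials telescope into the stated product. A secondary subtlety is confirming that the minors we multiply are exactly those of $\mathfrak{H}_L$ (the reverse-MDP matrix) versus $\mathcal{H}_L$; by the duality result (Theorem~\ref{dual}) and the index descriptions in the Remarks, these counts coincide up to the symmetric relabeling $j \mapsto (L+1)n + 1 - j$, so any of the three descriptions may be used, but this identification should be stated explicitly rather than left implicit. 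The rest is a routine assembly of the pieces already in hand.
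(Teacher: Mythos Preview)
Your proposal is correct and follows essentially the same approach as the paper: bound $\deg(g)$ by $(L+1)(n-k)$ times an upper estimate for the number of not-trivially-zero full-size minors of $\mathcal{H}_L$, and give three such estimates. For $M_1$ and $M_3$ your argument is identical to the paper's; for $M_2$ the paper phrases the count as ``choose $n-k$ columns from the first $sn$ minus the $s(n-k)$ already chosen'' rather than your window argument, but both give the same binomials $\binom{n+rk}{n-k}$. Your worry about $\mathfrak{H}_L$ versus $\mathcal{H}_L$ is a non-issue: the polynomial $g$ in the preceding theorem is defined via $\mathcal{H}_L$, so simply work with $\mathcal{H}_L$ throughout and drop the duality detour.
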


\begin{proof}\ \\
To show that $|\mathbb F|>M_i$ for some $i\in\{1,2,3\}$ is sufficient, one has to show that the factor after $(L+1)(n-k)$ in the formulas is an upper bound for the number of not trivially zero fullsize minors of $\mathcal{H}_L$. For $M_1$ this is clear because there we use just the formula for all fullsize minors.\\
For $M_2$, we use the condition that we have to choose $n-k$ columns from the first $n$ columns, then $n-k$ columns from the first $2n$ columns without the $n-k$ columns we have already chosen and so on until we end up with choosing $n-k$ columns from $(L+1)n$ columns without the $L(n-k)$ columns we have already chosen.\\
For $M_3$, we denote by $i\in\{n-k,\hdots,n\}$ the index of the $(n-k)$-th column we choose. Thus, one has to choose $n-k-1$ columns with smaller index than $i$, i.e. out of the first $i-1$ columns of $\mathcal{H}_L$. After that, one proceeds like for $M_2$, i.e. next one has to choose $n-k$ columns out of $2n$ but not the first $i$, then $n-k$ out of $3n$ without the first $i$ and without the $n-k$ chosen in the preceding step and so on.
\end{proof}

\begin{remark}\ \\
It depends one the parameters of the code, which of the bounds is best. In the following, we give some examples:
\begin{enumerate}
\item Case $L=0$: $M_1=M_2=M_3$ (in this case there are no trivially zero minors)
\item Case $k=1$\\
$M_3=M_2\cdot\left(\frac{1}{n}+\frac{n-1}{\binom{n+L}{n-1}}\right)$
\begin{itemize}
\item[(a)] $L\geq 1$: $M_3<M_2$
\item[(b)] 
$L=1\ (\Rightarrow \delta=1, n\geq 3)$: $M_3=(3n^2-n)(n-1)<M_1=(4n^2-2n)(n-1)<M_2=(n^3+n^2)(n-1)$
\item[(c)] 
$(2,1,1)$: $M_3=18\cdot 3<M_1=20\cdot 3<M_2=24\cdot 3$
\item[(d)]
$(2,1,\delta)$ with $\delta\geq 2$: $M_1<M_3<M_2$\\
It holds $L\geq 4$, which implies $M_3<M_2$ according to (a), and for $L=4$, $M_1=252\cdot 5<M_3=480\cdot 5<M_2=720\cdot 5$. Moreover, $M_3$ is increasing more than $M_1$, when $L$ increases (to $L+1$). This is true since $M_1/(L+1)=\binom{2(L+1)}{L+1}$ increases with factor $\frac{(2L+3)(2L+4)}{(L+2)(L+2)}<4$ and $M_3/(L+1)=(L+2)!\cdot (\frac{1}{2}+\frac{1}{L+2})$ increases with factor $(L+2)\frac{L^2+7L+10}{L^2+7L+12}>5$ for $L\geq 4$.
\end{itemize}
\end{enumerate}

\end{remark}

\subsection{Probability}

In this subsection, we want to compute the probability that a non-catastrophic convolutional code with arbitrary parameters is MDP. Therefore, we assign to each code the unique parity-check matrix from Remark \ref{kh}. This is possible since permutation of the columns of the parity-check matrix does not influence the MDP property.
With these definitions/settings, one gets the following theorem:

\begin{theorem}\ \\
Let $\mathbb F$ be finite with cardinality $|\mathbb F|=t^{-1}$. If $|\mathbb F|>\min\{M_1,M_2,M_3\}$, the probability for an MDP convolutional code is lower bounded by\\
(i) $1-\frac{(L+1)(n-k)\binom{(L+1)n}{(L+1)(n-k)}\cdot t}{1-t^k+O(t^{k+1})}$\\
(ii) $1-\frac{\left((L+1)(n-k)\binom{n}{n-k}\binom{n+k}{n-k}\cdots\binom{n+Lk}{n-k}\right)\cdot t}{1-t^k+O(t^{k+1})}$\\
(iii) $1-\frac{\left((L+1)(n-k)\sum_{i=n-k}^{n}\binom{i-1}{n-k-1}\binom{2n-i}{n-k}\cdots\binom{2n+(L-1)k-i}{n-k}\right)\cdot t}{1-t^k+O(t^{k+1})}$
\end{theorem}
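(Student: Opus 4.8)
The plan is to combine the Schwartz–Zippel lemma (Theorem~\ref{sz}(b)) with the estimates on the number of not trivially zero full size minors of $\mathcal H_L$ that already appeared in the proof of Theorem~\ref{M}, and then divide by the probability that the randomly chosen matrix is actually the parity-check matrix of a non-catastrophic code of the prescribed degree. First I would fix the normal form from Remark~\ref{kh}: to each non-catastrophic $(n,k,\delta)$ code we attach its unique parity-check matrix $H(z)=[P(z)\ Q(z)]$ with $Q$ in Kronecker–Hermite form and generic row degrees, so that the free entries of the coefficient matrices $H_0,\dots,H_\nu$ are the random variables, drawn independently and uniformly from $\mathbb F$. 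Since column permutations of $H(z)$ do not affect the MDP property, this loses no generality.

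Next I would recall that, by the theorem characterizing MDP codes via $\mathcal H_L$, the code fails to be MDP exactly when at least one of the not trivially zero full size minors of $\mathcal H_L$ vanishes. Let $g$ be the product of all these minors, a polynomial in the free entries of the $H_i$. As noted in the proof of Theorem~\ref{M}, $\deg(g)$ is at most $(L+1)(n-k)$ times the number $N$ of such minors, and each of $M_1,M_2,M_3$ is of the form $(L+1)(n-k)$ times an upper bound $N_1,N_2,N_3$ for $N$ (from the crude count of all full size minors, from the successive-choice count, and from the refined count pivoting on the index of the $(n-k)$-th chosen column, respectively). Hence $\deg(g)\le \min\{M_1,M_2,M_3\}$. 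By Theorem~\ref{sz}(b), the probability that a uniformly random assignment is a zero of $g$ — i.e. that the code is \emph{not} MDP, \emph{given} that the matrix defines a valid non-catastrophic code of degree $\delta$ — is at most $\deg(g)\cdot t$, so the conditional probability of being MDP is at least $1-\min\{M_1,M_2,M_3\}\cdot t$ in each of the three variants.

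Finally I would pass from this conditional probability to the stated unconditional-among-non-catastrophic probability by dividing by the probability that the random matrix in normal form actually is a left prime, row proper parity-check matrix generating a code of the correct degree $\delta$. The forced identity blocks in $H_\nu$ and $H_{\nu-1}$ (exactly as in the proof of the first theorem of Section~4.1) guarantee the degree is $\delta$; the probability that $[P\ Q]$ is left prime, equivalently that the code is non-catastrophic, is $1-t^k+O(t^{k+1})$, referencing \cite{ich}. Dividing $1-M_i\cdot t$ by $1-t^k+O(t^{k+1})$ — more precisely, using that $\Pr[\text{MDP}] = \Pr[\text{MDP}\mid\text{valid}]\cdot\Pr[\text{valid}]$ together with $\Pr[\text{not MDP}\mid\text{valid}]\le M_i t$ — yields the three lower bounds with the factor $\frac{1}{1-t^k+O(t^{k+1})}$ in the denominator. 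The one step that needs care, and which I expect to be the main obstacle, is justifying the exact form $1-t^k+O(t^{k+1})$ for the non-catastrophicity probability within the chosen normal form (that the Kronecker–Hermite normalization does not change the leading behaviour), and checking that the forced entries of $H_\nu,H_{\nu-1}$ do not spoil either the genericity of the degree or the validity of applying Schwartz–Zippel to the remaining free variables; everything else is bookkeeping with the binomial estimates already established for Theorem~\ref{M}.
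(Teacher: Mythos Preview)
Your approach is essentially the same as the paper's: normal form from Remark~\ref{kh}, apply Schwartz--Zippel to the product $g$ of the not trivially zero full size minors of $\mathcal H_L$, use the three degree bounds $M_1,M_2,M_3$ from Theorem~\ref{M}, and divide by the coprimeness probability $1-t^k+O(t^{k+1})$ from \cite{ich}.

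One small point where the paper is a bit more careful than your plan: you fix \emph{generic} row degrees for $H$. The paper explicitly remarks that, unlike in the complete MDP case, $H$ need not have generic row degrees, so the probability must in principle be computed over \emph{all} row degree distributions of $Q$; it then observes that once the row degrees of $Q$ are fixed (whatever they are), the number of free variables is determined and the same degree bound $\deg(g)\le M_i$ applies uniformly, so the Schwartz--Zippel estimate is the same in every stratum and the restriction ``does not matter''. Your argument with the forced identity blocks in $H_\nu,H_{\nu-1}$ is borrowed from the field-size theorem and is not actually needed (or invoked) in the paper's probability proof. Otherwise the proofs coincide.
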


\begin{proof}\ \\
For MDP (in contrast to complete MDP) it is not necessary that $H$ has generic row degrees. Therefore, one has to make the following considerations for all possible values of the row degrees.
However, we will see that this does not matter. 
 
Again, we assume that $H=[P\ Q]$ has the form of Remark \ref{kh}.
If the row degrees of $Q$ are fixed, one knows for each entry of $H$ either its degree or an upper bound on its degree. Hence, when considering the entries of the coefficient matrices of $H$ as variables, we know how many variables we have and could apply Theorem \ref{sz} (b) to the polynomial $g$ that is formed by the product of the non-trivially fullsize minors of $\mathcal{H}_L$. Note that this polynomial is not the zero polynomial (since the existence of MDP convolutional codes has been shown for $|\mathbb F|>\min\{M_1,M_2,M_3\}$). 

It has already been shown that $M_1$, $M_2$ and $M_3$ are upper bounds for $\deg(g)$.

By the Schwartz-Zippel, the probability that the variables do not fulfill the condition for  MDP is upper bounded by $\deg(g)\cdot t$.\\
One has to consider conditional probability with the condition that $Q$ and $P$ are left coprime. Therefore, the overall probability is upper bounded by the absolute probability divided by the probability of the condition, which is $1-t^k+O(t^{k+1})$; see \cite{ich}.
\end{proof}

\section{Sufficient field size and probability for complete MDP convolutional codes}

In this section, we want to do the same considerations for complete MDP convolutional codes that were done for MDP convolutional codes in the preceding section.

\subsection{Sufficient field size}

\begin{theorem}\ \\
Let $f$ be the polynomial that is formed by the product of all not trivially zero fullsize minors of $\mathfrak{H}$ and has the entries of the coefficient matrices of $H$ as variables. Then, for $(n-k)\mid\delta$, an $(n,k,\delta)$ complete MDP convolutional code exists if $|\mathbb F|>\deg(f)$.
\end{theorem}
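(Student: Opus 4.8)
The plan is to mirror exactly the argument used for the MDP case in the first theorem of Section 4. The point is that the condition ``every not trivially zero full size minor of $\mathfrak{H}$ is nonzero'' is, by Definition \ref{com}, a polynomial condition on the entries of the coefficient matrices $H_0,\dots,H_\nu$, and these entries can be treated as free variables. So I would first set up the polynomial ring $\mathbb{F}[x_1,\dots,x_m]$ whose variables are the (free) entries of $H_0,\dots,H_\nu$, and let $f$ be the product of all full size minors of $\mathfrak{H}$ that are not trivially zero. Note $f$ is not the zero polynomial: each individual not-trivially-zero minor is, by definition, a nonzero polynomial, and a product of nonzero polynomials over the integral domain $\mathbb{F}[x_1,\dots,x_m]$ (or over a field when we later specialize) is nonzero.

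Next I would invoke the counting argument behind Theorem \ref{sz}(a), exactly as in the MDP proof: a nonzero polynomial $f$ in $m$ variables over $\mathbb{F}$ has at most $\deg(f)\cdot|\mathbb{F}|^{m-1}$ zeros, while there are $|\mathbb{F}|^m$ total assignments of the variables; hence if $|\mathbb{F}|^m > \deg(f)\cdot|\mathbb{F}|^{m-1}$, i.e.\ $|\mathbb{F}| > \deg(f)$, there exists an assignment of the free variables making $f$ nonzero, and therefore making every not-trivially-zero full size minor of $\mathfrak{H}$ nonzero simultaneously. By Definition \ref{com} the resulting code is complete MDP.

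The one extra ingredient compared to the MDP case — and the step I expect to be the main (though still minor) obstacle — is making sure the specialization actually yields a genuine $(n,k,\delta)$ code with the prescribed degree $\delta$, and that $(n-k)\mid\delta$ is used correctly. For this I would, as in the MDP proof, restrict attention to parity-check matrices $H(z)=[P(z)\ Q(z)]$ in the normalized form of Remark \ref{kh} with generic row degrees; since $(n-k)\mid\delta$, here $\nu=\delta/(n-k)$ and all row degrees equal $\nu$, so $H_\nu$ can be taken of the shape $H_\nu=[\,I_{n-k}\ \ast\,]$ (after column permutation), which forces the code to have row-proper leading coefficient matrix and hence exactly degree $\delta$. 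The hypothesis $(n-k)\mid\delta$ is precisely what guarantees this clean full-rank leading block exists (and, by the Remark after Definition \ref{com}, is necessary for a complete MDP code to exist at all). Fixing some entries of $H_\nu$ (and $H_{\nu-1}$, etc.) to $0$ or $1$ only reduces the number of free variables $m$, which does not affect the bound $|\mathbb{F}|>\deg(f)$ — the relevant quantity for Schwartz--Zippel is $\deg(f)$, not the number of variables. This establishes the claim; the subsequent theorem(s) presumably bound $\deg(f)$ explicitly via the index description of Lemma \ref{index}, analogously to Theorem \ref{M}.
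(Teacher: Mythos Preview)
Your proposal is correct and follows exactly the approach of the paper: the paper's own proof literally says ``One uses Schwartz-Zippel and proceeds completely analogous to the preceding subsection,'' and your write-up spells out precisely that analogy (including the normalization of $H$ via Remark \ref{kh}, the role of $(n-k)\mid\delta$ in making all row degrees equal to $\nu$, and the observation that fixing some entries to $0$ or $1$ affects only the number of variables, not $\deg(f)$). If anything, you have supplied more detail than the paper itself.
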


\begin{proof}\ \\
One uses Schwartz-Zippel and proceeds completely analogous to the preceding subsection.
\end{proof}

Again, we have to estimate the degree of the polynomial $f$ from the preceding theorem to get an explicite bound for the field size.

\begin{theorem}\ \\
If $(n-k)\mid\delta$ and $|\mathbb F|>\min\{N_1,N_2\}$ with
\begin{align}
N_1&:=(L+1)(n-k)\binom{(L+1+\frac{\delta}{n-k})n}{(L+1)(n-k)}\\
N_2&:=(L+1)(n-k)(\frac{\delta n}{n-k} +k+1)^{(n-k)(L+1)},
\end{align}
then there exists an $(n,k,\delta)$ complete MDP convolutional code over $\mathbb F$.
\end{theorem}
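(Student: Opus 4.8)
The plan is to mirror the proof of Theorem \ref{M} exactly, replacing the combinatorial count of not trivially zero full size minors of $\mathcal{H}_L$ by a count (or an upper estimate) of the not trivially zero full size minors of the partial parity-check matrix $\mathfrak{H}$. By the previous theorem, it suffices to bound $\deg(f)$, where $f$ is the product of all such minors; since $\mathfrak{H}$ has $(L+1)(n-k)$ rows, every full size minor has degree at most $(L+1)(n-k)$ in the entries of the coefficient matrices, so $\deg(f) \le (L+1)(n-k) \cdot (\text{number of not trivially zero full size minors of } \mathfrak{H})$. Then I would invoke the existence criterion $|\mathbb F| > \deg(f)$ together with the assumption $(n-k)\mid\delta$ (needed for a complete MDP code to exist at all, and for $\nu = \delta/(n-k)$ so that $\mathfrak{H}$ has exactly $(\nu+L+1)n = (\frac{\delta}{n-k}+L+1)n$ columns). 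It therefore remains only to justify the two bracketed factors as upper bounds on the number of not trivially zero full size minors.

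For $N_1$, the argument is immediate: $\mathfrak{H}$ has $(L+1)(n-k)$ rows and $(\frac{\delta}{n-k}+L+1)n$ columns, so the total number of full size minors — trivially zero or not — is $\binom{(L+1+\frac{\delta}{n-k})n}{(L+1)(n-k)}$, which is a fortiori an upper bound on the number of not trivially zero ones. For $N_2$, I would use the characterization of not trivially zero minors from Lemma \ref{index}: a full size minor given by columns $j_1 < \cdots < j_{(L+1)(n-k)}$ is not trivially zero precisely when for each block index $s$ the column $j_{(n-k)s}$ lies in a prescribed interval, and the reformulation at the end of Lemma \ref{index} shows that each of the $(n-k)(L+1)$ chosen column indices is constrained to lie in an interval of length at most $\frac{\delta n}{n-k} + k + 1$ (the interval $\{sn+1,\dots,(\nu+1)n+k+1\}$-type windows all have this width, since $\nu n = \frac{\delta n}{n-k}$). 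Bounding the number of choices for the whole index tuple by the product of the sizes of the individual windows — dropping the monotonicity and interleaving constraints, which only decrease the count — gives at most $(\frac{\delta n}{n-k}+k+1)^{(n-k)(L+1)}$ such minors, hence $\deg(f) \le N_2$.

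The only slightly delicate point, and the one I would spend the most care on, is verifying that every individual window appearing in the reformulation of Lemma \ref{index} genuinely has width at most $\frac{\delta n}{n-k}+k+1$, i.e. that no boundary effect at the first or last block produces a wider interval; this amounts to checking $(\nu+1)n + k + 1 - (n+1) = \nu n + k + 1 = \frac{\delta n}{n-k} + k + 1$ for the interior windows and that the first and last windows, $\{1,\dots,\nu n + k + 1\}$ and $\{(L+1)n - k,\dots,(\nu+1+L)n\}$, are no wider. Once that is confirmed, both bounds follow by taking the minimum, and combining with $|\mathbb F| > \deg(f) \ge \deg(f)$ via the previous theorem completes the proof. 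I do not expect the need for the generic-row-degree normalization (as in the proof of Theorem \ref{M}) to cause trouble, since fixing some entries of $H_\nu, H_{\nu-1}$ to $0$ or $1$ only removes variables and cannot raise $\deg(f)$.
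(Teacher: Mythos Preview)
Your proposal is correct and follows essentially the same approach as the paper: bound $\deg(f)$ by $(L+1)(n-k)$ times the number of not trivially zero full size minors of $\mathfrak{H}$, then bound that number either by the total number of full size minors (giving $N_1$) or by the product of the window sizes from Lemma \ref{index} (giving $N_2$). Your extra care in verifying that every window in Lemma \ref{index} has width exactly $\nu n + k + 1 = \frac{\delta n}{n-k}+k+1$ is a welcome detail that the paper leaves implicit.
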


\begin{proof}\ \\
Each fullsize minor of $\mathfrak{H}$ is a polynomial of degree $(L+1)(n-k)$.
Moreover, the number of not-trivially zero fullsize minors of $\mathfrak{H}$ is upper bounded by $\binom{L+1+\frac{\delta}{n-k}}{(L+1)(n-k)}$, which is the number of all fullsize minors, as well as by $(\frac{\delta n}{n-k} +k+1)^{(L+1)(n-k)}$ since the index of each chosen column has to lie in an interval with $\frac{\delta n}{n-k} +k+1$ elements (see Lemma \ref{index}).
\end{proof}

\begin{remark}\ \\
(i) It depends on the parameters of the code, which of the two bounds $N_1$ or $N_2$ is better, i.e. smaller. For example for $k=n-1$, the second bound is better for $n=2$, for $n=3$ the bounds are identical, and for $n\geq 4$ the first bound is better.\\
(ii)
For complete MDP, one has $\nu=\frac{\delta}{n-k}$ and hence $L\geq\nu\geq 1$, which implies $\binom{\nu n+k}{\lfloor 1/2(\nu n+k)\rfloor}^{(n-k)(L+1)}\cdot((n-k)(L+1))^{1/2(n-k)(L+1)}\geq \binom{\nu n+k}{\lfloor 1/2(\nu n+k)\rfloor}^{(n-k)(L+1)}\cdot(n-k)(L+1)\geq (L+1)(n-k)(\frac{\delta n}{n-k} +k+1)^{(n-k)(L+1)}$ if $(n,k,\delta)\neq(2,1,1)$. This shows that - unless $(n,k,\delta)=(2,1,1)$ - the bound on the field size presented here is better than the bound obtained by the construction in \cite{cmdp}, which is clearly very weak (which is due to the fact that it provides a general construction) but up to now there did not exist better bounds. For $(2,1,1)$, we have already seen that the minimal possible field size is $3$, i.e. much smaller than all these bounds.
\end{remark}

\subsection{Probability}

We want to compute the probability that a non-catastrophic convolutional code with $(n-k)\mid\delta$ and generic row degrees $\nu=\frac{\delta}{n-k}$ is complete MDP. Therefore, we assign again to each code the unique parity-check matrix from Remark \ref{kh}. This is possible since permutation of the columns of the parity-check matrix does not influence the property to be complete MDP.
%
With these definitions/settings, one gets the following theorem:

\begin{theorem}\ \\
If $|\mathbb F|>\min\{N_1,N_2\}$, the probability for a complete MDP convolutional code is at least
$$\max\left\{1-\frac{(L+1)(n-k)\binom{(L+1+\frac{\delta}{n-k})n}{(L+1)(n-k)}\cdot t}{1-t^k+O(t^{k+1})},1-\frac{(L+1)(n-k)(\frac{\delta n}{n-k} +k+1)^{(n-k)(L+1)}\cdot t}{1-t^k+O(t^{k+1})} \right\}$$
\end{theorem}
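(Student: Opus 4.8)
The plan is to mirror the proof of the preceding probability statement for MDP convolutional codes, with the matrix $\mathcal H_L$ replaced by the partial parity-check matrix $\mathfrak H$ and the existence bound $\min\{M_1,M_2,M_3\}$ replaced by $\min\{N_1,N_2\}$. First I would fix the probabilistic model: to each non-catastrophic $(n,k,\delta)$ convolutional code with $(n-k)\mid\delta$ and generic row degrees $\nu=\frac{\delta}{n-k}$, I associate the unique parity-check matrix $H(z)=[P(z)\ Q(z)]$ of the shape of Remark \ref{kh}, with $Q$ in Kronecker--Hermite form and the row degrees of $H$ prescribed by the generic-row-degree condition, and I regard the entries of the coefficient matrices $H_0,\dots,H_\nu$ that are not forced to be $0$ or $1$ as variables drawn independently and uniformly from $\mathbb F$. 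Since permuting the columns of $H(z)$ changes neither the code nor its property of being complete MDP, this is a legitimate way to sample complete MDP codes; the only extra constraint needed for $H$ to genuinely be a parity-check matrix is that $P$ and $Q$ be left coprime.

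Next I would consider the polynomial $f$ formed by the product of all full size minors of $\mathfrak H$ that are not trivially zero, viewed as an element of the polynomial ring in the above variables. By the existence theorem just proved, $f$ is not the zero polynomial once $|\mathbb F|>\min\{N_1,N_2\}$, so Theorem \ref{sz}(b) applies and the probability that the random assignment is a zero of $f$ --- equivalently, that the sampled code fails to be complete MDP --- is at most $\deg(f)\cdot t$. I would then reuse the degree estimate from the existence proof: every full size minor of $\mathfrak H$ has degree at most $(L+1)(n-k)$, and by Lemma \ref{index} the number of not-trivially-zero full size minors is bounded above both by $\binom{(L+1+\frac{\delta}{n-k})n}{(L+1)(n-k)}$ and by $(\frac{\delta n}{n-k}+k+1)^{(n-k)(L+1)}$, so that $\deg(f)\le\min\{N_1,N_2\}$.

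Finally I would pass to conditional probability. The sampling above yields an admissible parity-check matrix only on the event that $P$ and $Q$ are left coprime, which occurs with probability $1-t^k+O(t^{k+1})$; see \cite{ich}. Dividing, the probability that the sampled code is not complete MDP is at most $\frac{\deg(f)\cdot t}{1-t^k+O(t^{k+1})}\le\frac{\min\{N_1,N_2\}\cdot t}{1-t^k+O(t^{k+1})}$, and taking complements --- noting that $1-\frac{x\,t}{1-t^k+O(t^{k+1})}$ decreases in $x$, so that using the smaller of $N_1$ and $N_2$ gives the larger bound --- produces exactly the claimed $\max$ of the two lower bounds.

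I expect the conceptual content to be minimal and the only delicate points to be bookkeeping that is already present in the MDP case: checking that fixing the generic row degrees pins down, for each entry of $H$, either its exact degree or a valid upper bound, so that the ambient probability space and the variable count are well defined, and checking that the Kronecker--Hermite normalization neither excludes any complete MDP code nor distorts the counting. The value $1-t^k+O(t^{k+1})$ for the probability of left coprimality of $P$ and $Q$ is taken as a black box from \cite{ich}.
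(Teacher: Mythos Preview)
Your proposal is correct and follows essentially the same approach as the paper, which simply states that the proof is completely analogous to the probability proof for MDP convolutional codes. You have faithfully carried out that analogy: fixing the Kronecker--Hermite parity-check representation with generic row degrees, applying Schwartz--Zippel to the product $f$ of the non-trivially-zero full size minors of $\mathfrak H$ (nonzero by the existence bound $|\mathbb F|>\min\{N_1,N_2\}$), invoking the degree estimates $\deg(f)\le N_1$ and $\deg(f)\le N_2$ already established, and conditioning on left coprimality of $P$ and $Q$ with probability $1-t^k+O(t^{k+1})$ from \cite{ich}.
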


\begin{proof}\ \\
The proof is completely analogue to the proof for the probability of MDP convolutional codes.
\end{proof}

\section{Sufficient field size for MDP convolutional codes with $\delta<\max\{k,n-k\}$}

In this section, we show a better bound on the necessary field size for MDP convolutional codes for the case that $\delta<\max\{k,n-k\}$. Because of duality arguments, we mainly have just to solve the case $\delta<k$.

\subsection{The case $\delta<k$}

To derive an upper bound for the required field size, one could assume that $H$ has generic row degrees since the existence of an MDP convolutional code with generic row degrees over $\mathbb F$ obviously implies the existence of an MDP convolutional code over $\mathbb F$. 
The genericity of the row degrees implies $\nu=\lceil\frac{\delta}{n-k}\rceil$ and therefore $L= \lfloor\frac{\delta}{n-k}\rfloor\leq\nu$. If $(n-k)\mid \delta$, i.e. $L=\nu$, all row degrees of $H$ are equal to $\nu$ and hence $H_L=H_{\nu}$ does not contain fixed zeros. If $(n-k)\nmid\delta$, i.e. $L=\nu-1$ and all row degrees of $H$ are either equal to $\nu-1$ or equal to $\nu$, $H_L$ does not contain fixed zeros, too.

\begin{theorem}\label{bb}\ \\
There exists an MDP convolutional code with $\delta<k$ over $\mathbb F$ if either
\begin{enumerate}
\item
$|\mathbb F|>\binom{(L+1)n-1}{(L+1)(n-k)-1}$ \textbf{or}
\item
in the case $L\geq 1$, $|\mathbb F|>S(n,k,\delta)$ with
\begin{align*}S(n,k,\delta)&:=\sum_{j=n-k+1}^{(n-k)L-1}\binom{n-1}{j-1}\binom{\lfloor\frac{j}{n-k}\rfloor n}{(\lfloor\frac{j}{n-k}\rfloor+1)(n-k)-j}\binom{(\lfloor\frac{j}{n-k}\rfloor+1)n}{n-k}\cdots\binom{Ln}{n-k}+\\
&+\sum_{j=\max\{(n-k)L,n-k+1\}}^{(n-k)(L+1)-1}\binom{n-1}{j-1}\binom{Ln}{(L+1)(n-k)-j}+\binom{n-1}{(L+1)(n-k)-1}
\end{align*}
\end{enumerate}
\end{theorem}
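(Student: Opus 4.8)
The plan is to re-run the Schwartz-Zippel argument of Section~4, but to organise the choice of the free entries of $H$ so that the crude estimate behind Theorem~\ref{M}, namely that the degree of the minor product is at most $(L+1)(n-k)$ times the number of minors, is replaced by one without the factor $(L+1)(n-k)$.

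As in the discussion preceding the statement, it suffices to produce an MDP code whose parity-check matrix $H$ has generic row degrees; then $\nu=\lceil\frac{\delta}{n-k}\rceil$, $L=\lfloor\frac{\delta}{n-k}\rfloor$, and, crucially, $H_L$ is a full $(n-k)\times n$ matrix none of whose entries is fixed to zero. By criterion~(c), the code is MDP precisely when the polynomial $g$ equal to the product of all not-trivially-zero full size minors of $\mathcal{H}_L$ does not vanish at the chosen values, and $g$ is not the zero polynomial because MDP codes exist over sufficiently large fields. Now $H_L$ occurs in $\mathcal{H}_L$ only in its bottom-left $(n-k)\times n$ block, so the $n-k$ entries of the $\ell$-th column of $H_L$ occur only inside column $\ell$ of $\mathcal{H}_L$; consequently, once all entries of $H$ other than that column are fixed, $g$ becomes a polynomial in these $n-k$ variables whose degree is at most $T_\ell$, where $T_\ell$ is the number of not-trivially-zero full size minors of $\mathcal{H}_L$ that use column $\ell$ --- each such minor being affine-linear in that column. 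I would therefore fix first $H_0,\ldots,H_{L-1}$ and then the columns of $H_L$ one at a time, in the order $n,n-1,\ldots,1$, so that Theorem~\ref{sz}(b) applies at every step with the degree bounded by the relevant $T_\ell$.

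It then remains to estimate $\max_\ell T_\ell$. Since column $1$ is necessarily the smallest index of any minor that uses it, $T_1$ equals the number of index sets $1=j_1<j_2<\cdots<j_{(L+1)(n-k)}\le(L+1)n$ with $j_{s(n-k)}\le sn$ for $s=1,\ldots,L$, and one checks that $\ell=1$ is the extreme case. Forgetting the conditions $j_{s(n-k)}\le sn$ altogether gives $T_1\le\binom{(L+1)n-1}{(L+1)(n-k)-1}$, which is the first alternative. To respect them, I would split on the number $j$ of indices that lie in the first block column --- there are $\binom{n-1}{j-1}$ such choices since column $1$ is fixed --- and over-count the remainder: for $(n-k)L\le j<(L+1)(n-k)$ the remaining conditions are automatic and leave $\binom{Ln}{(L+1)(n-k)-j}$ choices; for $n-k<j<(n-k)L$ the conditions become binding only from block column $q=\lfloor j/(n-k)\rfloor$ onward, and over-counting by successively enlarged prefixes, as in the $M_2$-estimate, gives $\binom{qn}{(q+1)(n-k)-j}\binom{(q+1)n}{n-k}\cdots\binom{Ln}{n-k}$; and $j=(L+1)(n-k)$ contributes $\binom{n-1}{(L+1)(n-k)-1}$. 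Summing these yields $S(n,k,\delta)$, the splitting presupposing $L\ge1$, which is the second alternative.

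The step I expect to be the main obstacle is not the Schwartz-Zippel bookkeeping but justifying that the greedy never collapses $g$ to the zero polynomial: after $H_0,\ldots,H_{L-1}$ and some columns of $H_L$ have been chosen, each full size minor with smallest column $\ell$ that is still ``open'' must remain a genuinely non-zero affine-linear function of the $\ell$-th column of $H_L$. I would handle this by an inductive invariant exploiting the staircase shape of the conditions $j_{s(n-k)}\le sn$ and the fullness of $H_L$: Laplace expansion of such a minor along column $\ell$ yields complementary minors involving only columns $>\ell$, hence already fixed, and one needs at least one of them to be non-zero; pinning this down --- in particular choosing $H_0,\ldots,H_{L-1}$ so that all the relevant complementary minors survive --- is the technical heart, and I expect it to follow the pattern already used in \cite{ich}/\cite{diss}. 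A secondary, more routine, difficulty is checking that the over-count above is exactly the stated $S(n,k,\delta)$, which needs care with the floor functions and with the boundary values $j=n-k$ and $j=(L+1)(n-k)$ as well as the degenerate case $(L+1)(n-k)>n$.
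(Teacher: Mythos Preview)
Your counting for both alternatives is essentially the paper's: whether one fixes column~$1$ (your $j_1=1$) or column~$n$ (the paper's $r_j=n$) in the first block, the constraints $j_{s(n-k)}\le sn$ depend only on which block each index lies in, so the two counts agree, and the combinatorial over-count you describe reproduces $S(n,k,\delta)$ term for term. The difficulty is not in the bookkeeping but in the greedy itself, and here your proposed ordering does not work.

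You fix $H_0,\ldots,H_{L-1}$ first and then sweep through the columns of $H_L$. But an entry of $H_i$ with $i<L$ appears in $L-i+1$ different positions of $\mathcal{H}_L$ (once in each block column $0,1,\ldots,L-i$), so a single minor can have degree up to $L-i+1$ in that entry, and the degree of $g$ in the variable group ``entries of $H_i$'' is not controlled by anything like $\max_\ell T_\ell$. Iterated Schwartz--Zippel in your ordering therefore yields a bound dominated by the $H_0,\ldots,H_{L-1}$ stages, for which you have no estimate; the appeal to Laplace expansion along column~$\ell$ does not help, because the complementary minors you need nonzero are themselves functions of $H_0,\ldots,H_{L-1}$, and guaranteeing their survival is precisely the uncontrolled step you defer to ``the pattern in \cite{ich}/\cite{diss}''.

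The paper sidesteps this by a different induction: not on the coefficient matrices $H_i$, but on the \emph{rows} of $\mathcal{H}_L$. The invariant at stage $y$ is that every not-trivially-zero full size minor of the truncation $\mathcal{H}_L^{(y)}$ (the first $y$ rows of $\mathcal{H}_L$) is nonzero. When row $y+1$ is appended, the only entries not already fixed by earlier rows are those in its first block column (columns $1,\ldots,n$), and each such entry occurs \emph{exactly once} in $\mathcal{H}_L^{(y+1)}$; hence every $(y+1)\times(y+1)$ minor is affine-linear in it, and one bad value per minor suffices. The key technical point --- entirely absent from your sketch --- is a block-triangular lemma: if the cofactor $\hat M$ obtained by deleting row $y+1$ and the relevant column vanishes, then by the inductive invariant it must be \emph{trivially} zero, which forces $M=\left[\begin{smallmatrix}A&0\\ *&B\end{smallmatrix}\right]$ with $A,B$ not-trivially-zero minors of smaller truncations, hence nonzero by induction, so $\det M\ne 0$ regardless of the new entry. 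This simultaneously guarantees the greedy never collapses and justifies dropping the $j\le n-k$ terms in the count (those minors factor and do not depend on the entry being chosen), yielding exactly $S(n,k,\delta)$.
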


\begin{proof}\ \\
For $y\in\{1,\hdots,(L+1)(n-k)\}$, define $\mathcal{H}_L^{(y)}$ as the matrix consisting of the first $y$ rows of $\mathcal{H}_L=\left[\begin{matrix} H_0 & & 0\\ \vdots & \ddots & \\ H_{L} & \cdots & H_0 \end{matrix}\right]$.

We prove via induction with respect to $y$ that if $\mathbb F$ fullfilles condition $1$ or $2$, then it is possible to find values for $H_0,\hdots,H_L$ over $\mathbb F$ such that every fullsize minor of $\mathcal{H}_L^{(y)}$ that is not trivially zero is nonzero.

For $y=1$, all entries in the first row of $H_0$ have to be nonzero, what is possible if $|\mathbb F|>1$, which is implied by both condition 1 and condition 2 (but true for any field anyway).

Assume that the statement is valid for $1,\hdots,y$.
For the step to $y+1$, consider the last row of $\mathcal{H}_L^{(y+1)}$. First, we show that for each $i\in\{1,\hdots,n\}$, if all entries of $\mathcal{H}_L^{(y+1)}$ but the $i$-th entry of the last row of $\mathcal{H}_L^{(y+1)}$, named by $\mathcal{H}_{L,i}$, are fixed (such that the statement is valid for $1,\hdots,y$), there is a possibility to choose $\mathcal{H}_{L,i}$ from $\overline{\mathbb F}$ such that the statement is valid for $y+1$.

To do this, we consider all not trivially zero fullsize minors of $\mathcal{H}_L^{(y+1)}$ that contain the $i$-th column of this matrix.

For each of these minors, one has to show that it is possible to choose   $\mathcal{H}_{L,i}$ such that the minor is nonzero.

 Denote by $M\in\mathbb F^{(y+1)\times(y+1)}$ the submatrix of $\mathcal{H}_L^{(y+1)}$ that corresponds to the considered fullsize minor and let $\hat{M}$ be constructed out of $M$ by deleting the row and the column that contain $\mathcal{H}_{L,i}$.

Hence, in the case $\det(\hat{M})=0$, one has to show $\det(M)\neq 0$, independent of the choice of $\mathcal{H}_{L,i}$.

Since $\hat{M}$ is a fullsize minor of $\mathcal{H}_L^{(y)}$, it follows by induction that it has to be trivially zero. Because of the structure of $\mathcal{H}_L$ this implies that there exists $s\in\{1,\hdots,L\}$ such that column $s(n-k)$ of $\hat{M}$ is a column of $\mathcal{H}_L^{(y)}$ with index at least $sn+1$. Moreover, it follows that this column is column $s(n-k)+1$ of $M$ and its first $s(n-k)$ entries are zeros since it is not from the first $s$ blocks of $\mathcal{H}_L$.

Consequently, $M$ is of the following form: $\left[\begin{matrix} A & 0_{s(n-k)\times (y+1-s(n-k))}\\ \ast & B \end{matrix}\right]\in\mathbb F^{(y+1)\times(y+1)}$. Hence $A$ and $B$ are square matrices with $\det(M)=\det(A)\cdot\det(B)$. Moreover, $A$ and $B$ are fullsize submatrices of $\left[\begin{matrix} H_0 & & 0\\ \vdots & \ddots & \\ H_{s-1} & \cdots & H_0 \end{matrix}\right]$ and $\left[\begin{matrix} H_0 & & 0\\ \vdots & \ddots & \\ H_{L-s} & \cdots & H_0 \end{matrix}\right]$, respectively.\\
Since the columns of $M$ are chosen such that $\det(M)$ is not trivially zero, $\det(A)$ and $\det(B)$ are not trivially zero, too. By induction it follows that $\det(A)$ and $\det(B)$ are nonzero and therefore also $\det(M)$ is nonzero.

To show that one can find such $\mathcal{H}_{L,i}$ over $\mathbb F$ if condition 1 or 2 is fulfilled, we count the maximum number of values that have to be excluded for $\mathcal{H}_{L,i}$, where without restriction, one could assume $i=n$ (note that $\hat{M}$ is independent of $H_{L,i}$ as well as $\det(M)$ in the case $\det(\hat{M})=0$).
This number is upper bounded by the number of not trivially zero fullsize minors of $\left[\begin{matrix} H_0 & & 0\\ \vdots & \ddots & \\ H_{L} & \cdots & H_0 \end{matrix}\right]$ with $r_j=n$ for some $j\in\{1,\hdots,(L+1)(n-k)\}$ since all these minors are at most linear in $H_{L,n}$. For the bound of condition 1, we just count the number of all fullsize minors with $r_j=n$ no matter if they are trivially zero or not. Surely, it is sufficient if $\mathbb F$ has more elements as the number of these minors.\\
For condition 2, which takes into account that some minors are trivially zero, one could assume $L\neq 0$ and neglect the case $j\leq n-k$. Since $r_{j+1}>n$, the minor would be trivially zero if $j<n-k$. If $j=n-k$, one chooses exactly $n-k$ columns from the first block of $\mathcal{H}_L$ and the minor is nonzero if and only if the corresponding $n-k$ columns of $H_0$ are linearly independent and the matrix (for the minor) without the first $n-k$ columns and rows has full rank. But these conditions are independent of $H_L$ and hence, do not lead to values for $H_{L,n}$ that have to be excluded.

For $j\geq n-k+1$, there are at most $\binom{n-1}{j-1}$ possiblitities to choose $r_1,\hdots,r_{j-1}$ since one has the condition $1\leq r_1<r_2<\cdots<r_{j-1}<r_j=n$. For $r_l$ with $l>j$, one has to consider the condition $r_{s(n-k)}\leq sn$ for $s=1,\hdots, L$. $r_j=n$ implies that this condition is already fulfilled for $s=1,\hdots,\lfloor\frac{j}{n-k}\rfloor$. To fulfil this condition for $s=\lfloor\frac{j}{n-k}\rfloor+1$, we need to choose $(\lfloor\frac{j}{n-k}\rfloor+1)(n-k)-j$ columns from the first $(\lfloor\frac{j}{n-k}\rfloor+1)n$ columns but not from the first $n$ columns of $\mathfrak{H}_L$, i.e. we have to choose $(\lfloor\frac{j}{n-k}\rfloor+1)(n-k)-j$ columns out of $(\lfloor\frac{j}{n-k}\rfloor)n$ columns. For $s\geq \lfloor\frac{j}{n-k}\rfloor+2$, we have to choose $n-k$ columns out of at most $sn-n$ columns. Summing over all possible values for $j$, one gets the formula from condition 2.\\

To ensure that the degree of the code is equal to $\delta=\nu_1+\cdots+\nu_{n-k}$, one has to ensure the $H$ is row proper, i.e. that the highest row rank coefficient matrix is invertible. This is true if the first $\delta-(n-k)(\nu-1)$ rows of $H_{\nu}$ and the last $(n-k)\nu-\delta$ rows of $H_{\nu-1}$ are linearly independent. When choosing the entries of $\mathcal{H}_L$ row by row as done in this proof, the number of values that has to be excluded for each entry of a coefficient matrix increases in each step. Moreover the condition that the highest row degree coefficient matrix is invertible, i.e. that the above mentioned rows are linearly independent, could be fulfilled by the first $n-k$ columns of $H_{\nu-1}$ and $H_{\nu}$. Therefore, one has no additional condition on $H_{L,n}$ because of that and thus, no additional value has to be excluded.
(Note that for $(n-k)\nmid\delta$, i.e. $\nu>L$, $H_{\nu}$ is not contained in $\mathcal{H}_L$ and the only thing that has to be regarded when choosing the values for $H_{\nu}$ is that $H$ has to be column proper).
\end{proof}

It would be possible to adopt condition 2 such that is valid also for $L=0$ (in principle, the difference would be that then, one had to take $j=n-k$ as lower bound for the first sum since the case $j=n-k$ cannot be neglected for $L=0$). But for $L=0$ one has no trivially zero fullsize minors in $\mathcal{H}_L=H_0$ and therefore, it would equal the bound of condition 1 for $L=0$, anyway.

\begin{corollary}\label{bb1}\ \\
Bound 1 of the preceding theorem can be upper bounded by the following expression, which is independent of $k$:
$$\binom{(L+1)n-1}{(L+1)(n-k)-1}\leq \binom{(L+1)n-1}{n-2}.$$
\end{corollary}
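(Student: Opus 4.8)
The statement to be proved is the single inequality $\binom{(L+1)n-1}{(L+1)(n-k)-1}\le\binom{(L+1)n-1}{n-2}$, after which the final phrase of the corollary is immediate, since the right-hand side involves only $L$ and $n$ (no occurrence of $k$ as such). Write $N:=(L+1)n-1$, $a:=(L+1)(n-k)-1$ and $b:=n-2$, so the claim reads $\binom{N}{a}\le\binom{N}{b}$. The plan is to deduce this from the unimodality of the row $\binom{N}{0},\binom{N}{1},\dots,\binom{N}{N}$, after using the hypothesis $\delta<k$ both to bound $a$ against $b$ and to locate them in the increasing part of that row.

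The hypothesis $\delta<k$ enters exactly once. Since $\delta<k$ we have $\lfloor\delta/k\rfloor=0$, hence $L=\lfloor\delta/(n-k)\rfloor$ and therefore $L(n-k)\le\delta\le k-1$; adding $n-k$ gives $(L+1)(n-k)\le(k-1)+(n-k)=n-1$, i.e.\ $a\le n-2=b$. Next I would place $a$ and $b$ relative to the midpoint $N/2$. A short computation gives $N-a=(L+1)k$ and $N-b=Ln+1$. In the relevant range $L\ge1$ the inequalities $\delta\ge n-k$ (from $L\ge1$) and $\delta<k$ force $n-k<k$, so $n<2k$; hence $(L+1)(n-k)<(L+1)k$, which gives $2a<N$, and $Ln+1\ge n+1>n-2$, which gives $2b<N$. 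Thus $0\le a\le b\le\lfloor N/2\rfloor$. Finally, from $\binom{N}{j+1}/\binom{N}{j}=(N-j)/(j+1)\ge1$ for $j\le(N-1)/2$ one gets that $\binom{N}{j}$ is non-decreasing for $j\in\{0,1,\dots,\lfloor N/2\rfloor\}$; combined with $a\le b\le\lfloor N/2\rfloor$ this yields $\binom{N}{a}\le\binom{N}{b}$, as required.

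The genuinely delicate point is this last localisation: $a\le b$ by itself does not compare $\binom{N}{a}$ with $\binom{N}{b}$, and it is precisely the structural constraint coming from $\delta<k$ -- through $n<2k$ once $L\ge1$, equivalently once $\mathcal{H}_L$ actually possesses a trivially-zero full-size minor -- that forces both indices below the centre, where the binomial coefficients are monotone. The boundary case $L=0$ (so $\delta<\min\{k,n-k\}$ and $\mathcal{H}_L=H_0$, which has no trivially-zero full-size minors) falls outside this argument and, where it is relevant, should be handled by a direct estimate of $\binom{n-1}{n-k-1}$.
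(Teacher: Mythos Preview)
Your proof is correct and follows essentially the same route as the paper: show $a=(L+1)(n-k)-1\le n-2=b$ from $L(n-k)\le\delta\le k-1$, then show $2b<N$ for $L\ge1$ (the paper writes this as $2(n-2)+1=2n-3<(L+1)n-1$), and conclude by unimodality. Your separate verification that $2a<N$ via $n<2k$ is a pleasant observation but redundant, since $a\le b$ and $2b<N$ already give it; your remark that $L=0$ falls outside the argument is also correct and matches the paper's explicit restriction to $L\ge1$.
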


\begin{proof}\ \\
Per definition, $L=\lfloor\frac{\delta}{k}\rfloor + \lfloor\frac{\delta}{n-k}\rfloor =\lfloor\frac{\delta}{n-k}\rfloor$ as $k>\delta$. Hence $L\leq \frac{\delta}{n-k}$, i.e. $L(n-k)\leq\delta$. It follows $(L+1)(n-k)-1=L(n-k)+n-k-1\leq \delta +n-k-1\leq n-2$ since $k>\delta$. Moreover for $L\geq 1$, one has $2(n-2)+1\leq 2n-3< (L+1)n-1$ and thus, $\binom{(L+1)n-1}{(L+1)(n-k)-1}\leq \binom{(L+1)n-1}{n-2}$. 
\end{proof}

\begin{remark}\label{coinc}\ \\
For $k=n-1$, i.e. $(n,n-1,\delta)$ convolutional codes with $\delta\leq n-2$, one has $L=\delta$ and the bound of condition 2 equals 
\begin{align*}
&\sum_{j=2}^{\delta-1}\binom{n-1}{j-1}\binom{jn}{(j+1)-j}\binom{(j+1)n}{1}\cdots\binom{\delta n}{1}+\binom{n-1}{\delta-1}\binom{\delta n}{1}+\binom{n-1}{\delta}=\\
&=\sum_{j=2}^{\delta+1}\binom{n-1}{j-1}\cdot n^{\delta+1-j}\cdot \frac{\delta!}{(j-1)!}=\delta!\cdot n^{\delta}\sum_{j=1}^{\delta}\binom{n-1}{j}\cdot \frac{n^{-j}}{j!}<\delta!\cdot n^{\delta}(e-1)
\end{align*}
as
\begin{align*}
\sum_{j=1}^{\delta}\binom{n-1}{j}\cdot n^{-j}<\sum_{j=1}^{n-1}\binom{n-1}{j}\cdot n^{-j}=(1+1/n)^{n-1}-1<e-1.
\end{align*}

Setting also $\delta=1$ (for which one needs $n\geq\delta+2=3$), one gets $n-1$. This implies that the bound is sharp in that case; see Section 3.
\end{remark}

\subsection{The case $\delta<n-k$}

For this case, we could use again that the MDP property is invariant under duality.

\begin{theorem}\ \\
If $\delta<n-k$ and $|\mathbb F|>S(n,n-k,\delta)$ or $|\mathbb F|>\binom{(L+1)n-1}{(L+1)k-1}$, there exists an $(n,k,\delta)$ MDP convolutional code over $\mathbb F$.
\end{theorem}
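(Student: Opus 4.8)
The plan is to reduce this statement to Theorem \ref{bb} by means of the duality result Theorem \ref{dual}. Recall that an $(n,k,\delta)$ convolutional code is MDP if and only if its dual, an $(n,n-k,\delta)$ convolutional code, is MDP. Consequently, an $(n,k,\delta)$ MDP convolutional code exists over $\mathbb F$ if and only if an $(n,n-k,\delta)$ MDP convolutional code exists over $\mathbb F$. The hypothesis $\delta<n-k$ says precisely that, for the dual code, the degree is smaller than its own dimension parameter (which equals $n-k$), so the dual code falls exactly into the regime $\delta<k$ treated in the preceding subsection. Hence we may apply Theorem \ref{bb} to the dual code, with the role of $k$ there played by $n-k$.

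First I would observe that the quantity $L=\lfloor\frac{\delta}{k}\rfloor+\lfloor\frac{\delta}{n-k}\rfloor$ is symmetric in $k$ and $n-k$, so it is the same for the code and for its dual; in the present situation $\lfloor\frac{\delta}{n-k}\rfloor=0$ and thus $L=\lfloor\frac{\delta}{k}\rfloor$. Substituting $n-k$ for $k$ in Bound 1 of Theorem \ref{bb} yields $\binom{(L+1)n-1}{(L+1)(n-(n-k))-1}=\binom{(L+1)n-1}{(L+1)k-1}$, and substituting $n-k$ for $k$ in Bound 2 yields $S(n,n-k,\delta)$ directly from the definition of $S$. Therefore, whenever $|\mathbb F|>\binom{(L+1)n-1}{(L+1)k-1}$, or $L\geq 1$ and $|\mathbb F|>S(n,n-k,\delta)$, Theorem \ref{bb} produces an $(n,n-k,\delta)$ MDP convolutional code over $\mathbb F$, and Theorem \ref{dual} then gives an $(n,k,\delta)$ MDP convolutional code over the same field.

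There is essentially no real obstacle here beyond bookkeeping. The only points to verify are that the substitution $k\mapsto n-k$ is admissible — it is, because Theorem \ref{bb} holds for arbitrary parameters with the degree below the relevant dimension, and $\delta<n-k$ guarantees this for the dual code — and that in the subcase $\delta<k$ as well one has $L=0$, for which only the first bound is asserted, consistent with the remark after Theorem \ref{bb} that for $L=0$ the two bounds coincide. If desired, one can additionally record, by applying Corollary \ref{bb1} to the dual code, the $k$-free estimate $\binom{(L+1)n-1}{(L+1)k-1}\leq\binom{(L+1)n-1}{n-2}$.
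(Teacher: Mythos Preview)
Your proof is correct and follows exactly the same approach as the paper, which simply states that the result follows from the preceding subsection (Theorem \ref{bb}) together with the duality result Theorem \ref{dual}. You have merely made the bookkeeping explicit---the symmetry of $L$ under $k\leftrightarrow n-k$, the substitution in the two bounds, and the $L=0$ edge case---where the paper leaves these details to the reader.
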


\begin{proof}\ \\
The result follows from the preceding subsection and Theorem \ref{dual}.
\end{proof}

\begin{corollary}\label{bb2}\ \\
Analogous to the preceding subsection, one gets for $\delta<n-k$ that
$$\binom{(L+1)n-1}{(L+1)k-1}\leq \binom{(L+1)n-1}{n-2}.$$
\end{corollary}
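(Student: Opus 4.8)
The plan is to repeat the argument of Corollary~\ref{bb1} with the roles of $k$ and $n-k$ interchanged. This interchange is legitimate for two reasons: first, $L=\lfloor\delta/k\rfloor+\lfloor\delta/(n-k)\rfloor$ is symmetric under $k\leftrightarrow n-k$; second, the expression $\binom{(L+1)n-1}{(L+1)k-1}$ is exactly bound~1 of Theorem~\ref{bb} applied to the dual $(n,n-k,\delta)$ code, which by Theorem~\ref{dual} is MDP if and only if the original code is, and which falls under the case ``degree smaller than the first parameter'' precisely because $\delta<n-k$. So one option is simply to invoke Corollary~\ref{bb1} for the dual code and stop there.

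Written out directly, the computation runs as follows. Since $\delta<n-k$ we have $\lfloor\delta/(n-k)\rfloor=0$, hence $L=\lfloor\delta/k\rfloor\le\delta/k$, i.e.\ $Lk\le\delta$. Combining this with $\delta\le n-k-1$ gives
$$(L+1)k-1=Lk+(k-1)\le\delta+(k-1)\le(n-k-1)+(k-1)=n-2.$$
For $L\ge1$ we have $2(n-2)<2n-1\le(L+1)n-1$, so the binomial coefficient $\binom{(L+1)n-1}{j}$ is nondecreasing for $j\le n-2$. Since $(L+1)k-1\le n-2$, this yields $\binom{(L+1)n-1}{(L+1)k-1}\le\binom{(L+1)n-1}{n-2}$.

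There is no genuine obstacle here: the statement reduces to a one-line inequality for binomial coefficients. The only points requiring a moment of attention are the bookkeeping of $L$ in this regime --- the hypothesis $\delta<n-k$ forces $L=\lfloor\delta/k\rfloor$ rather than $\lfloor\delta/(n-k)\rfloor$ --- and the verification that $n-2$ stays at or below the middle entry of row $(L+1)n-1$, which is where the assumption $L\ge1$ is used, exactly as in Corollary~\ref{bb1}. The degenerate case $L=0$, where $\mathcal{H}_L=H_0$ has no trivially zero full size minors, is not the focus here, just as it was tacitly excluded in Corollary~\ref{bb1}.
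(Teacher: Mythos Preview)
Your proof is correct and follows exactly the approach the paper intends: the paper gives no explicit proof for Corollary~\ref{bb2} beyond the phrase ``analogous to the preceding subsection,'' and your direct computation is precisely the proof of Corollary~\ref{bb1} with $k$ and $n-k$ interchanged, including the same handling of the $L\ge1$ restriction. Your additional remark that one could simply invoke Corollary~\ref{bb1} for the dual code via Theorem~\ref{dual} is also in the spirit of the paper, since that is exactly how the surrounding theorem in Section~6.2 is obtained.
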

\subsection{The case $\delta<\min\{k,n-k\}$}

In this subsection, we consider the case that the conditions of both preceding subsections are fulfilled, resulting in $L=0$.

\begin{theorem}\label{L0}\ \\
If $\delta<\min\{k,n-k\}$ and $|\mathbb F|>\min\left\{\binom{n-1}{n-k-1},\binom{n-1}{k-1}\right\}$, there exists an $(n,k,\delta)$ MDP convolutional code over $\mathbb F$.
\end{theorem}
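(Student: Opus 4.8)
The plan is to derive this statement as an immediate specialization of the two results already established in this section, glued together by the duality of Theorem \ref{dual}. First I would unwind the hypothesis: $\delta<\min\{k,n-k\}$ means $\delta<k$ \emph{and} $\delta<n-k$ hold simultaneously, so $\lfloor\delta/k\rfloor=\lfloor\delta/(n-k)\rfloor=0$ and hence $L=\lfloor\delta/k\rfloor+\lfloor\delta/(n-k)\rfloor=0$. In this regime $\mathcal{H}_L$ degenerates to the single constant coefficient matrix $H_0\in\mathbb F^{(n-k)\times n}$, which contains no trivially zero fullsize minors; the MDP requirement is then just that every $(n-k)\times(n-k)$ minor of $H_0$ be nonzero, with the usual side conditions (left primeness, row/column properness, so that the code really has degree $\delta$) handled exactly as in the proof of Theorem \ref{bb}. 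Nothing new needs to be proved about these side conditions.

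Next I would invoke bound 1 of Theorem \ref{bb}, which is legitimate since $\delta<k$, and substitute $L=0$: this gives an $(n,k,\delta)$ MDP convolutional code over $\mathbb F$ whenever $|\mathbb F|>\binom{(L+1)n-1}{(L+1)(n-k)-1}=\binom{n-1}{n-k-1}$. Symmetrically, since $\delta<n-k$, I would apply the theorem of the preceding subsection (the $\delta<n-k$ case, which is itself Theorem \ref{bb} transported through Theorem \ref{dual}), again with $L=0$, obtaining an $(n,k,\delta)$ MDP convolutional code whenever $|\mathbb F|>\binom{(L+1)n-1}{(L+1)k-1}=\binom{n-1}{k-1}$. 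Since at least one of these two existence statements applies as soon as $|\mathbb F|$ exceeds the smaller of the two binomial coefficients, the condition $|\mathbb F|>\min\bigl\{\binom{n-1}{n-k-1},\binom{n-1}{k-1}\bigr\}$ suffices, which is exactly the claim.

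There is essentially no obstacle beyond bookkeeping here, since the substantive work lies in Theorem \ref{bb} and in the duality theorem. The only points I would take care to verify explicitly are that both applicability hypotheses ($\delta<k$ for the primal bound, $\delta<n-k$ for the dual bound) are indeed simultaneously available under $\delta<\min\{k,n-k\}$, and that plugging $L=0$ into $\binom{(L+1)n-1}{(L+1)(n-k)-1}$ and $\binom{(L+1)n-1}{(L+1)k-1}$ truly collapses them to $\binom{n-1}{n-k-1}$ and $\binom{n-1}{k-1}$; both are immediate. I would also remark, to justify that the $\min$ is not vacuous, that by the symmetry of binomial coefficients $\binom{n-1}{n-k-1}=\binom{n-1}{k}$ while $\binom{n-1}{k-1}=\binom{n-1}{n-k}$, so the two candidate bounds genuinely differ in general and it is worth keeping the one that is smaller for the given $k$.
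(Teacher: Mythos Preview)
Your proposal is correct and follows essentially the same approach as the paper: the paper's proof consists of the single line ``The result follows from the results of the preceding subsections,'' and you have simply unpacked this by specializing bound~1 of Theorem~\ref{bb} and its dual version (the theorem of the $\delta<n-k$ subsection) to $L=0$ and taking the minimum of the two resulting binomial coefficients. Your additional remarks on the degenerate form of $\mathcal{H}_L$ and the symmetry of the binomial coefficients are accurate but go slightly beyond what the paper records.
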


\begin{proof}\ \\
The result follows from the results of the preceding subsections.
\end{proof}

\begin{remark}\ \\
(i)
For $\delta<\min\{k,n-k\}$, i.e. $L=0$, one has $\mathfrak{H}_L=H_0$ and the corresponding code is MDP if and only if $\begin{pmatrix}I_k\\ H_0 \end{pmatrix}$ is the generator matrix of an $[n,k]$ MDS block code. Therefore, the bound of the preceding theorem is a bound for the necessary field size for the existence of an $[n,k]$ MDS block code.\\
(ii)
$n-k>\delta$ implies that $n-k$ cannot divide $\delta$ and therefore, there exists no complete MDP convolutional code with these parameters. However, one could show that for a reverse MDP convolutional code with $L=0$, the bound for MDP codes is also sufficient: $G_1,\hdots, G_{\mu}$ do not influence the property to be MDP. Thus, one could choose them arbitrary without affecting the MDP property. If one chooses column $j$ of $G_{\delta_j}$ equal to column $j$ of $G_0$, one gets $\overline{G}_0=G_0$ and therefore, the code is also reverse MDP.\\
(iii)
For an $(n,n-2,1)$ convolutional code, the bound of the preceding theorem is equal to $n-1$. One could easily see that the construction $H_0=\left(\begin{array}{cccc}
1 & 1 & \hdots & 1 \\ 
0 & 1 & \hdots & n-1
\end{array} \right)$ reaches this bound. But as we will see in the next section, constructions over fields of smaller size are possible.
\end{remark}

\section{Comparison of bounds}

The aim of this section is to show that in nearly all cases, the bounds on the necessary field size for the existence of MDP convolutional codes presented in this paper could improve all bounds that were proven before. Therefore, we start with recalling which bounds already existed. The following theorem gives the only bound up to now that is valid for all code parameters.

\begin{theorem}\cite{b}\ \\
Let $B_{\gamma}:=\frac{1}{2}\left(\frac{1}{\gamma}\binom{2(\gamma-1)}{\gamma-1}+\binom{\gamma-1}{\lfloor\frac{\gamma-1}{2}\rfloor}\right)$
and $\mathbb F$ be a finite field with $|\mathbb F|>B_{\gamma}$ . Then, there exists a $\gamma\times\gamma$ superregular Toeplitz matrix over $\mathbb F$.

Let $r$ be the remainder of $\delta$ on division by $n-k$. Let $\mathbb F$ be a finite field with $|\mathbb F| > B_{(L+1)(n-1)}$ or $|\mathbb F| > B_{(L+1)(n-1)+k+r-1}$ as $r = 0$ or $r\neq 0$, respectively.\\
Then, an $(n, k, \delta)$ MDP convolutional code exists over $\mathbb F$.
\end{theorem}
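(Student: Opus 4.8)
The plan is to prove the two assertions in turn: first the linear-algebraic statement that a $\gamma\times\gamma$ superregular lower-triangular Toeplitz matrix exists over $\mathbb F$ whenever $|\mathbb F|>B_\gamma$, and then the reduction, going back to \cite{strongly}, that produces an $(n,k,\delta)$ MDP convolutional code out of such a matrix once $\gamma$ is chosen large enough in terms of $n$, $k$ and $\delta$.

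For the first assertion I would build the Toeplitz matrix $T$ with parameters $a_1,\dots,a_\gamma$ greedily, entry by entry, in the style of the proof of Theorem \ref{bb}. Scaling $T$ by a nonzero constant multiplies every $p\times p$ minor by the $p$-th power of that constant, so superregularity is scale-invariant and one may normalize $a_1=1$ and then fix $a_2,\dots,a_\gamma$ one after another. The structural facts to use are that the submatrix of $T$ on rows $i_1<\dots<i_p$ and columns $j_1<\dots<j_p$ has a not-trivially-zero determinant exactly when $i_\ell\ge j_\ell$ for all $\ell$, and that such a determinant is a polynomial in $a_1,\dots,a_{i_p-j_1+1}$ only, which (again by the block structure) stays non-identically-zero after specializing an initial segment of the parameters. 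Hence the constraints imposed on $a_i$ at its turn come only from the not-trivially-zero minors with $i_p-j_1+1=i$, each of which forbids only a bounded number of values. The combinatorial heart is to bound, uniformly in $i$, the number of values forbidden for $a_i$ and to identify this bound with $B_\gamma$: the two summands of $B_\gamma$ correspond to the two regimes of this count, one governed by a Dyck-path count and hence by the Catalan number $\frac{1}{\gamma}\binom{2(\gamma-1)}{\gamma-1}$, the other by the central binomial coefficient $\binom{\gamma-1}{\lfloor(\gamma-1)/2\rfloor}$, while the factor $\frac{1}{2}$ reflects the degree of freedom removed by the normalization $a_1=1$. I expect this count --- and in particular checking that minors which happen to be of degree $\ge 2$ in the current parameter do not inflate the estimate --- to be the main obstacle of this part.

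For the second assertion I would use the characterization that $\mathfrak C$ is MDP iff every not-trivially-zero full-size minor of the sliding matrix $\mathcal G_L$ (equivalently $\mathcal H_L$) is nonzero, and then follow the construction of \cite{strongly}: one reads the coefficient matrices $G_0,\dots,G_\mu$ (respectively $H_0,\dots,H_\nu$) off overlapping windows of a single superregular Toeplitz matrix $T$, so that the block-Toeplitz shape of $\mathcal G_L$ is inherited from the Toeplitz shape of $T$ and every not-trivially-zero full-size minor of $\mathcal G_L$ coincides with a minor of $T$, which is then automatically not trivially zero and hence nonzero by the first assertion. One has to check in addition that the configurations which are trivially zero for $\mathcal G_L$ map to trivially zero configurations for $T$, so that no further minors of $T$ need to be controlled. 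It then remains to determine the smallest admissible $\gamma$. When $(n-k)\mid\delta$, i.e.\ $r=0$, a row-proper parity-check matrix has all row degrees equal to $\nu$, the windows align cleanly, and a Toeplitz matrix of size $\gamma=(L+1)(n-1)$ suffices; when $r\ne 0$ the unequal row degrees force extra padding of the windows and one needs $\gamma=(L+1)(n-1)+k+r-1$. One must also arrange the construction so that $H$ is row proper, so that the generated code has degree exactly $\delta$ (and, for $(n-k)\nmid\delta$, column proper); this can be done by keeping the highest-degree coefficient block in systematic form, which does not interfere with the minors of $T$. Combining the two parts then yields an $(n,k,\delta)$ MDP convolutional code over any $\mathbb F$ with $|\mathbb F|>B_{(L+1)(n-1)}$ if $r=0$ and $|\mathbb F|>B_{(L+1)(n-1)+k+r-1}$ if $r\ne 0$. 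The delicate point here is precisely this bookkeeping --- verifying that the index conditions describing the not-trivially-zero full-size minors of $\mathcal H_L$ translate exactly into minor conditions on a Toeplitz matrix of the claimed size, in particular in the case $r\ne 0$.
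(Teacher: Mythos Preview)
The paper does not prove this theorem at all: it is stated with the citation \cite{b} and used only as a benchmark in Section~7 for comparison with the new bounds; no argument for it is given anywhere in the paper. So there is no ``paper's own proof'' to compare your proposal against.

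That said, your sketch does track the strategy of \cite{b} and \cite{strongly}: a greedy entry-by-entry construction of the Toeplitz parameters with a count of forbidden values, followed by the reading-off of the coefficient matrices of a parity-check (or generator) matrix from a sufficiently large superregular Toeplitz matrix so that the not-trivially-zero full-size minors of $\mathcal H_L$ become minors of that Toeplitz matrix. Two caveats on your sketch. First, your explanation of the factor $\tfrac12$ as coming from the normalization $a_1=1$ is not how the bound in \cite{b} arises; the $\tfrac12$ there comes from a symmetry/parity argument in the minor count, not from quotienting out a scalar. Second, the passages you flag as ``main obstacle'' and ``delicate point'' are precisely the substance of the proof in \cite{b}; as written, your proposal is a plan rather than a proof, since neither the minor count giving $B_\gamma$ nor the index bookkeeping yielding the sizes $(L+1)(n-1)$ and $(L+1)(n-1)+k+r-1$ is actually carried out.
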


This theorem as well as the following conjecture use in the same way square superregualr Toeplitz matrices to construct MDP convolutional codes.

\begin{conjecture}\cite{b},\cite{strongly}\ \\
For $\gamma\geq 5$, there is a $\gamma\times\gamma$ superregular Toeplitz matrix over $\mathbb F_{2^{\gamma-2}}$.\\
Let $r$ be the remainder of $\delta$ on division by $n-k$. Let $\mathbb F$ be a finite field with $|\mathbb F|\geq2^{(L+1)(n-1)-2}$ or $|\mathbb F|\geq 2^{(L+1)(n-1)+k+r-3}$ as $r = 0$ or $r\neq 0$, respectively.\\
Then, an $(n, k, \delta)$ MDP convolutional code exists over $\mathbb F$.
\end{conjecture}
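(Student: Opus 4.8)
The plan is to read the statement as \emph{conditional}: its first sentence is the genuine --- and at present open --- conjecture on superregular Toeplitz matrices, while its second sentence is a purely formal consequence obtained in exactly the same manner as the preceding theorem of \cite{b}. Accordingly I would split the argument into a routine reduction and a hard core.

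For the reduction I would take the first sentence as a hypothesis and feed it into the construction underlying the theorem of \cite{b}, which goes back to \cite{strongly} and \cite{dr13}: from a single $\gamma\times\gamma$ superregular Toeplitz matrix over $\mathbb F$ one obtains an $(n,k,\delta)$ MDP convolutional code over $\mathbb F$, where $\gamma=(L+1)(n-1)$ when $r=0$ (that is, $(n-k)\mid\delta$) and $\gamma=(L+1)(n-1)+k+r-1$ when $r\neq 0$ --- precisely the two values already appearing in that theorem. Substituting the conjectured size $2^{\gamma-2}$ for these two values of $\gamma$ gives exactly $2^{(L+1)(n-1)-2}$ and $2^{(L+1)(n-1)+k+r-3}$; and since a superregular Toeplitz matrix over $\mathbb F_{2^{\gamma-2}}$ stays superregular over every field containing it, the existence statement follows for all such $\mathbb F$. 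The only bookkeeping is to verify that the relevant $\gamma$ satisfies $\gamma\geq 5$ in the range where the Toeplitz conjecture is asserted, handling the finitely many small parameter tuples (for instance those near $(2,1,1)$, where $2^{\gamma-2}$ is in any case too small and the true minimum field size is obtained directly in Section 3) via the earlier results.

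The entire difficulty is concentrated in the first sentence: for every $\gamma\geq 5$ there should exist a $\gamma\times\gamma$ superregular Toeplitz matrix over $\mathbb F_{2^{\gamma-2}}$. I would attempt one of the usual routes. First, an explicit algebraic construction --- choosing the Toeplitz parameters as powers of a primitive element of $\mathbb F_{2^{\gamma-2}}$, or as evaluations of a carefully chosen rational function, and then proving directly that every not-trivially-zero minor is nonzero by analysing the combinatorial pattern of its support. Second, a sharpened counting argument in the spirit of the Schwartz-Zippel bound (Theorem \ref{sz}): bound the number of bad parameter choices by a product of minor-degree estimates and show it stays strictly below the field size raised to the number of free entries; this is the same mechanism used elsewhere in this paper, but it must be pushed far beyond the naive estimate, which only yields $B_\gamma$. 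Third, a computer verification for small $\gamma$ together with an inductive lifting step extending a $\gamma\times\gamma$ superregular Toeplitz matrix to a $(\gamma+1)\times(\gamma+1)$ one over the next field in the $2$-power tower.

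I expect the reduction to be immediate and the core to be the genuine obstacle: no construction or counting bound is presently known that reaches the exponential threshold $2^{\gamma-2}$, essentially because the set of trivially zero Toeplitz minors has an intricate structure and the union bound over the remaining ones is far too lossy. This is precisely why the statement is recorded as a conjecture rather than proved --- the honest outcome of the plan is the conditional implication, with the superregular-Toeplitz half left open.
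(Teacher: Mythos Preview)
Your reading is correct: the statement is recorded in the paper as a \emph{conjecture} taken from \cite{b} and \cite{strongly}, and the paper offers no proof of it whatsoever --- it is quoted solely as a benchmark against which the paper's own bounds are later compared in Section~7. Thus there is no ``paper's own proof'' to match; your identification of the first sentence as the genuinely open assertion and of the second sentence as a formal consequence via the superregular-Toeplitz construction of \cite{b} is exactly the intended interpretation, and your honest conclusion that the core remains open is precisely the status the paper reports.
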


The preceding conjecture would yield a better bound than $B_{\gamma}$ (see \cite{b}) but would not be sharp as the following table from \cite{b} shows:\\

\begin{tabular}{|c|c|c|c|c|c|c|c|c|}
\hline 
Size of superregular Toeplitz matrix & 3 & 4 & 5 & 6 & 7 & 8 & 9 & 10 \\ 
\hline 
Minimum required field size & 3 & 5 & 7 & 11 & 17 & 31 & 59 & $\leq 127$ \\ 
\hline 
\end{tabular}\\

For small parameters, the preceding table provides the exact necessary field size such that a superregular Toeplitz matrix exists. We will see later that for several of the parameters covered by this table, it is possible to derive MDP convolutional codes over fields of smaller size when using other constructions than via superregular Toeplitz matrices.\\

For a very special choice of parameters, the minimum required field size for MDP convolutional codes has been obtained in \cite{bar}, where the authors also provide a corresponding construction of such codes.

\begin{theorem}\cite{bar}\ \\
For $m>2$ an $(2^{m-1},2^{m-1}-1,2)$ MDP convolutional code exists if and only if $|\mathbb F|\geq 2^m$.
\end{theorem}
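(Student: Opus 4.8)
The plan is to exploit the fact that $n-k=1$, which collapses the MDP condition to an elementary statement about $n$ points in the affine plane over $\mathbb F$. Here $\nu=\lceil\delta/(n-k)\rceil=2$ and $L=\lfloor\delta/k\rfloor+\lfloor\delta/(n-k)\rfloor=2$ (as $k=2^{m-1}-1>2$), and such a code, being non-catastrophic, has a left prime, row proper parity-check matrix $H(z)=H_0+H_1z+H_2z^2\in\mathbb F[z]^{1\times n}$ of row degree $2$. I would apply criterion (c) to the $3\times 3n$ matrix $\mathcal H_2$. By the Remark describing the not-trivially-zero full size minors of $\mathcal H_L$, these correspond to column triples $j_1<j_2<j_3$ with $j_1\le n$, $j_2\le 2n$; splitting the $3n$ columns into three blocks of $n$ leaves five block-patterns, and expanding the five associated $3\times 3$ determinants — one of which is triangular and equals $H_0^{(p_1)}H_0^{(p_2)}H_0^{(p_3)}$, so that all $H_0^{(p)}\ne 0$ — gives: writing $P_p:=(H_1^{(p)}/H_0^{(p)},H_2^{(p)}/H_0^{(p)})\in\mathbb F^2$, the code is MDP if and only if (i) the first coordinates of $P_1,\dots,P_n$ are pairwise distinct, (ii) $\{P_1,\dots,P_n\}$ is an arc (no three collinear), and (iii) no secant line of this arc has slope equal to one of those first coordinates. (Rescaling the columns of $H$ amounts to monomial equivalence and preserves MDP, so these normalisations are harmless.)

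For the ``if'' direction I would exhibit such a configuration over $\mathbb F_{2^m}$: index the $n=2^{m-1}$ positions by $T:=\{x\in\mathbb F_{2^m}:\mathrm{Tr}(x)=1\}$ and let the $t$-th parity-check entry be $1+tz+t^2z^2$, so $P_t=(t,t^2)$. Then (i) is clear, (ii) holds because the points $(t,t^2)$ form an arc (Vandermonde), and (iii) holds because the slope of $P_sP_t$ is $(s^2-t^2)/(s-t)=s+t$, which has trace $0$ and hence lies outside $T$. One checks that this $H(z)$ is row proper of row degree $2$ (all $t\ne 0$) and left prime (two entries share at most the root $(s+t)^{-1}$, which varies with the pair, so there is no common root once $n\ge 4$), hence it is a parity-check matrix of an $(n,n-1,2)$ MDP convolutional code over $\mathbb F_{2^m}$.

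For the ``only if'' direction, given an $(n,n-1,2)$ MDP code over $\mathbb F$ with $q:=|\mathbb F|$, take the arc $\{P_1,\dots,P_n\}$ from (i)--(iii). By (i) there is no vertical secant, so the $n$ distinct first coordinates and the set $\mathcal D$ of secant slopes are disjoint subsets of the $q$ finite directions (slopes) of $\mathbb F^2$; hence $q\ge n+|\mathcal D|$. Double-counting the $\binom n2$ secants (each direction supports at most $\lfloor n/2\rfloor$ pairwise disjoint secants) yields $|\mathcal D|\ge n-1$, so $q\ge 2n-1=2^m-1$. If $\mathrm{char}\,\mathbb F=2$, then $q$ is a power of $2$ exceeding $2^{m-1}$, hence $q\ge 2^m$. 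Otherwise I would improve this to $|\mathcal D|\ge n$: equality $|\mathcal D|=n-1$ forces every secant direction to be realised by a \emph{perfect} matching of the $n$ points, and — generalising the four-point case in which three mutually parallel pairs of opposite sides of a quadrangle force $2=0$ by affine independence of the arc — this parallel-class structure makes $\{P_1,\dots,P_n\}$ a coset of an elementary abelian $2$-subgroup of $(\mathbb F,+)^2$, which is impossible in odd characteristic. Thus $q\ge 2n=2^m$ in all cases.

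The routine ingredients are the reduction to (i)--(iii) and the explicit construction. The main obstacle is the final step of the necessity argument: ruling out $|\mathcal D|=n-1$ in odd characteristic, i.e.\ proving that an arc of $2^{m-1}$ points determines at least $2^{m-1}$ directions when $\mathrm{char}\,\mathbb F\ne 2$. For four points this is the elementary computation above, but in general one must show that the parallel classes of a hypothetical extremal arc organise the points into the regular orbit of an elementary abelian $2$-group (or else invoke direction-problem results of R\'edei and Sz\H{o}nyi type). Note that $2^m-1$ is a prime power only when it is a Mersenne prime (by Mihailescu's theorem), so this difficulty is genuinely present only for Mersenne exponents $m$; for every other $m>2$ the bound $q\ge 2^m-1$ already suffices, since no field of that order exists and a field of characteristic $2$ has order a power of $2$.
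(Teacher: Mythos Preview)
The paper does not prove this statement at all: it is quoted from \cite{bar} (Barbero--Ytrehus) purely as a benchmark against which the paper's own general bounds are compared, and no argument is given or sketched. So there is no ``paper's proof'' to compare your attempt to.

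On the substance of your attempt: the reduction to conditions (i)--(iii) on the affine points $P_p$ is correct (the five block-patterns indeed give exactly those constraints), and your trace-$1$ conic construction over $\mathbb F_{2^m}$ is a valid and elegant realisation of the sufficiency direction. The necessity bound $q\ge 2n-1$ and your observation that this already suffices unless $2^m-1$ is a Mersenne prime are also correct.

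The gap you flag is real, and your sketch does not close it. For $n=4$ the three perfect matchings are forced to form the configuration you describe, and the computation $1=-1$ goes through. But for $n>4$ you need to \emph{find} four points on which the three relevant matchings close up into a complete quadrangle, and nothing in your argument guarantees this: the $d_{12}$-partner of $P_3$ need not be the $d_{13}$-partner of $P_2$. Turning the extremal $1$-factorisation of $K_n$ into the Cayley structure of an elementary abelian $2$-group requires a genuine additional argument (either a direct combinatorial coherence lemma or an appeal to direction-set results of R\'edei--Sz\H{o}nyi type, stated and applied precisely). Without that, the necessity direction is incomplete for Mersenne exponents $m>3$; since the present paper defers the entire proof to \cite{bar}, you would need to consult that reference to see how the authors handle it.
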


Applying Theorem \ref{bb} to $(n,n-1,2)$ convolutional codes, one gets that  $|F|>(n-1)(2,5n-1)$ is sufficient. Clearly, for the case that $n$ is an exponent of $2$, this bound is much weaker than the bound from \cite{bar}. But in turn, the bound from Theorem \ref{bb} works for general $n$.

Now, we want to compare the new bounds of this paper with the already existing bounds (but the bound in \cite{bar} since that bound is optimal anyway) and start with the case $L=0$.

\subsection{Comparison of bounds for $L=0$}

\begin{theorem}\ \\
The bound for $L=0$ from Theorem \ref{L0} is better than $B_{\gamma}$, the conjecture and and than using exact values for superregularity for small matrices.
\end{theorem}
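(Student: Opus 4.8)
The plan is to compare, term by term, the new bound from Theorem \ref{L0}, namely $\min\{\binom{n-1}{n-k-1},\binom{n-1}{k-1}\}$, against each of the three older objects: the Toeplitz bound $B_\gamma$, the conjectured bound $2^{\gamma-2}$, and the exact values tabulated above. In the case $L=0$ the relevant matrix size is $\gamma=(L+1)(n-1)=n-1$ when $r=0$ and $\gamma=(L+1)(n-1)+k+r-1=n+k+r-2$ when $r\neq0$; note that $L=0$ already forces $\delta<\min\{k,n-k\}$, so $r=\delta$ and $\gamma$ is typically somewhat larger than $n-1$. Since $B_\gamma$ and $2^{\gamma-2}$ are monotone increasing in $\gamma$, it suffices to prove the inequality for the smallest value, $\gamma=n-1$; any larger $\gamma$ only makes the old bound worse, so the comparison for $r\neq0$ follows a fortiori.

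First I would reduce to showing $\min\{\binom{n-1}{n-k-1},\binom{n-1}{k-1}\}\le B_{n-1}$, and in fact the stronger statement that it is at most the smaller of the two summands in $B_{n-1}$. Writing $\gamma=n-1$, the second summand of $B_\gamma$ is $\tfrac12\binom{\gamma-1}{\lfloor(\gamma-1)/2\rfloor}=\tfrac12\binom{n-2}{\lfloor(n-2)/2\rfloor}$, the central binomial coefficient of order $n-2$. Meanwhile $\binom{n-1}{k-1}$ with $1\le k-1\le n-2$ is at most the central binomial coefficient $\binom{n-1}{\lfloor(n-1)/2\rfloor}$. So the key inequality to nail down is $\min_k\binom{n-1}{k-1}$, or more usefully a Pascal-type bound, say $\binom{n-1}{k-1}=\binom{n-2}{k-1}+\binom{n-2}{k-2}\le 2\binom{n-2}{\lfloor(n-2)/2\rfloor}$, which is far larger than $\tfrac12\binom{n-2}{\lfloor(n-2)/2\rfloor}$ — so a naive bound goes the wrong way, and the real point must be that one exploits the $\min$ over the two symmetric choices $k$ and $n-k$ together with the fact that $\delta<\min\{k,n-k\}$ keeps $k$ away from the extremes only mildly. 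I would therefore instead compare directly with the first summand $\tfrac1\gamma\binom{2(\gamma-1)}{\gamma-1}$ of $B_\gamma$, which for $\gamma=n-1$ is roughly $4^{n}/n^{3/2}$ up to constants, and show $\binom{n-1}{k-1}\le\binom{n-1}{\lfloor(n-1)/2\rfloor}\le\tfrac{1}{n-1}\binom{2(n-2)}{n-2}$; this last is a clean binomial estimate (e.g.\ via $\binom{n-1}{\lfloor(n-1)/2\rfloor}\le 2^{n-1}$ and $\tfrac{1}{n-1}\binom{2n-4}{n-2}\ge \tfrac{4^{n-2}}{(n-1)^2}$, so one needs $2^{n-1}(n-1)^2\le 4^{n-2}$, true for $n$ not too small), with the finitely many remaining small $n$ checked by hand.

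For the conjecture, the comparison is easier in spirit: one must show $\min\{\binom{n-1}{n-k-1},\binom{n-1}{k-1}\}< 2^{(L+1)(n-1)-2}=2^{n-3}$ in the $r=0$ subcase (and the larger exponent only helps in the $r\neq0$ subcase). Here $\binom{n-1}{k-1}\le 2^{n-1}$ is not immediately below $2^{n-3}$, so again one must use the $\min$: one of $k-1,n-k-1$ is at most $\lfloor(n-2)/2\rfloor$, and $\binom{n-1}{j}$ for $j\le\lfloor(n-2)/2\rfloor$ can be bounded by something like $\tfrac12\binom{n-1}{\lfloor(n-1)/2\rfloor}\le\tfrac12\cdot\tfrac{2^{n-1}}{\sqrt{\pi(n-1)/2}}$, which beats $2^{n-3}$ once $n$ is moderately large; the small cases are again finite in number. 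For the tabulated exact values I would simply observe that the table only goes up to matrix size $10$, i.e.\ $n-1\le 10$ in the $r=0$ case, so there are only finitely many $(n,k,\delta)$ with $L=0$ falling under it, and for each the new bound $\binom{n-1}{k-1}$ (minimised over $k$) is at most the listed value — a direct check. The main obstacle throughout is the same: the bare binomial bound $\binom{n-1}{k-1}$ is not automatically smaller than the old quantities, so the argument genuinely needs the symmetry $k\leftrightarrow n-k$ inside the $\min$ plus a sharp central-binomial estimate, and care must be taken that the inequality $\gamma\ge n-1$ really does make the $r\neq0$ case follow from the $r=0$ case rather than needing its own analysis.
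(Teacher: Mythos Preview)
Your reduction to $\gamma=n-1$ is the source of all the difficulty, and it is exactly the step the paper avoids. You observe correctly that $L=0$ forces $\delta<\min\{k,n-k\}$ and hence $r=\delta$, but you stop short of the crucial consequence: since $\delta\geq 1$, one has $r\geq 1$, so the case $r=0$ \emph{never occurs}. The relevant $\gamma$ is therefore always $n+k+r-2$, and since $k>\delta\geq 1$ gives $k\geq 2$, one has $\gamma\geq n+1$ and thus $2^{\gamma-2}\geq 2^{n-1}$. Now the comparison with the conjecture is the trivial inequality $\binom{n-1}{n-k-1}<2^{n-1}$, with no need for central-binomial estimates, no need for the symmetry $k\leftrightarrow n-k$, and no small cases to check. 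Since the conjectured bound is known to be below $B_\gamma$, the comparison with $B_\gamma$ follows immediately as well, rather than requiring the separate direct argument you sketch.

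By contrast, your reduction asks for $\min\{\binom{n-1}{k-1},\binom{n-1}{n-k-1}\}<2^{n-3}$, which is genuinely false for small parameters (e.g.\ $n=4$, $k=2$ gives $3<2$), so the ``finitely many small $n$ checked by hand'' step cannot succeed as stated; you would have to fall back on the fact that the conjecture only applies for $\gamma\geq 5$ and patch the small cases some other way. All of this is self-inflicted: once you use the actual $\gamma$ rather than the hypothetical minimum $n-1$, the proof collapses to two lines plus the finite check against the tabulated exact values (which you handle correctly).
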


\begin{proof}\ \\
$L=0$ implies $n-k>\delta$, i.e. $n-k$ does not divide $\delta$ and thus $r\geq 1$. Hence,
$\binom{n-1}{n-k-1}<2^{n-1}\leq 2^{n-1+k+r-1-2}$ shows that the new bound is better than the conjecture, which implies than it is better than using the bound $B_{\gamma}$. The first inequality follows from the Stirling formula and the second since $k>\delta\geq 1$.\\
Exact values for the existence of superregular matrices are only known if $n+k+r-2\leq 10$. Computing all cases in which this inequation as well as $\min\{n-k,k\}>\delta$ is fullfilled, which implies $r=\delta$, one sees that the new bound is always smaller.
\end{proof}

We want to show with some small examples that there are cases in which $L=0$ and our bound is not optimal (even if it is the best of the existing bounds).

\begin{example}
\begin{enumerate}
\item According to Theorem \ref{L0}, an $(4,2,1)$ MDP convolutional code exists if $|\mathbb F|\geq 4$. But $\mathcal{H}_L=H_0:=\left(\begin{array}{cccc}
1 & 0 & 1 & 2 \\ 
0 & 1 & 1 & 1
\end{array} \right)$ yields an MDP convolutional code with these parameters over $\mathbb F_3$.
\item 
According to Theorem \ref{L0}, $(6,3,1)$ and $(6,3,2)$ MDP convolutional codes exist if $|\mathbb F|\geq 11$. But $\mathcal{H}_L=H_0:=\left(\begin{array}{cccccc}
1 & 0 & 0 & 1 & 1 & 2 \\ 
0 & 1 & 0 & 1 & 2 & 1 \\ 
0 & 0 & 1 & 1 & 3 & 3
\end{array}\right)$ yields MDP convolutional codes with these parameters over $\mathbb F_5$.
\end{enumerate}
\end{example}

\subsection{Comparison of bounds for $L\geq 1$ and $\delta<\max\{k,n-k\}$}


\begin{theorem}\ \\
(i)
Bound 2 from Theorem \ref{bb} is always better than $B_{\gamma}$ and the conjecture. It is also better than using exact values for superregularity for small matrices - except for $(5,3,2)$ codes, where bound 2 yields the existence of MDP codes if $|\mathbb F|> 34$ and using the table for existence of superregular matrices, one gets that $|\mathbb F|\geq 31$ is sufficient.

(ii)
Bound 1 from Theorem \ref{bb} is for nearly all cases the next best bound after bound 2, exceptions are only $(15,8,7)$, $(13,7,6)$, $(11,6,5)$, $(9,5,4)$, $(7,4,3)$, $(6,4,2)$, $(5,3,2)$, $(3,2,1)$. For $(3,2,1)$, using $B_{\gamma}$, which is here identical with the exact minimal value for superregularity, yields a bound between 2 and 1. For $(6,4,2)$ using the exact value for superregularity yields a bound between 2 and 1. In the other cases, the conjecture yields a bound between 2 and 1 but the conjecture has not been proven. 
\end{theorem}

\begin{proof}\ \\
\textbf{Step 1: Bound 2 is always better than bound 1}\\
In bound 2, for each $j\in\{n-k+1,\hdots,(n-k)(L+1)\}$, one chooses altogether $(L+1)(n-k)-1$ elements from at most $(L+1)n-1$ elements (for $n-k+1\leq j<L(n-k)$, this is true since $(L-\lfloor\frac{j}{n-k}\rfloor n)(n-k)+(\lfloor\frac{j}{n-k}\rfloor+1)(n-k)+1=(L+1)(n-k)-1$ and $(L-\lfloor\frac{j}{n-k}\rfloor+1)n+n-1\leq (L+2-\lfloor\frac{n-k+1}{n-k}\rfloor)n-1\leq (L+1)n-1$). Hereby, one has certain conditions for this choice. Bound 2 computes all possibilities without any restrictions to choose $(L+1)(n-k)-1$ elements from $(L+1)n-1$ elements and is therefore larger.

\textbf{Step 2: We use the upper bound for bound 1 from Corollary \ref{bb1} and Corollary \ref{bb2}}\\
According to Corollary \ref{bb1} and Corollary \ref{bb2}, it is sufficient to show that $\binom{(L+1)n-1}{n-2}$ is always smaller than the conjecture.

\textbf{Step 3: For $L=2$, bound 1 is better than the conjecture (and therefore, also better than $B_{\gamma}$)}\\
Bound 1 is at most $\binom{3n-1}{n-2}<\frac{2^{3n-1}}{\sqrt{\pi}\sqrt{\lfloor\frac{3n-1}{2}\rfloor}}$ by the Stirling formula and the conjecture is equal to at least $2^{3n-5}$. For $n\leq 6$, one could compute directly that $\binom{3n-1}{n-2}$ is smaller than $2^{3n-5}$. For $n=7$, one could compute that $\frac{2^{3n-1}}{\sqrt{\pi}\sqrt{\lfloor\frac{3n-1}{2}\rfloor}}$ is smaller than $2^{3n-5}$. Since $2^{3n-5}$ is increasing more rapidly than $\frac{2^{3n-1}}{\sqrt{\pi}\sqrt{\lfloor\frac{3n-1}{2}\rfloor}}$ when $n$ increases, this is true for $n\geq 7$ (in other words for $n\geq 7$, $\sqrt{\pi}\sqrt{\lfloor\frac{3n-1}{2}\rfloor}> 2^4$ and therefore, $\binom{3n-1}{n-2}<\frac{2^{3n-1}}{\sqrt{\pi}\sqrt{\lfloor\frac{3n-1}{2}\rfloor}}<2^{3n-5}$).

\textbf{Step 4: For $L\geq 2$, bound 1 is better than the conjecture (and therefore, also better than $B_{\gamma}$})\\
From the preceding step, we know that for $L=2$ and arbitrary $n$,  $\binom{(L+1)n-1}{n-2}<2^{(L+1)(n-1)-2}$. It remains to show that the right hand side of this inequality increases more rapidly than the left hand side when $L$ increases (and $n$ is fixed). When increasing $L$ to $L+1$ the left hand side increases by the factor 
\begin{align}\label{L+1}
\frac{((L+2)n-1)!\cdot (Ln+1)!}{((L+1)n-1)!\cdot ((L+1)n+1)!}=\frac{(L+1)n\cdots ((L+2)n-1)}{(Ln+2)\cdots ((L+1)n+1)}<(1,5)^n.
\end{align} 
This inequality is true since $2(Ln+2)=(L+1)n+(L-1)n+4$ and hence $\frac{(L+1)n}{Ln+2}=2-\frac{(L-1)n+4}{Ln+2}<2-\frac{L-1}{L}=1+\frac{1}{L}\leq 1,5$. This implies $\frac{(L+1)n+i}{Ln+2+i}<1,5$ for $i\in\{0,\hdots,n-1\}$ and \eqref{L+1} follows.\\
However, when increasing from $L$ to $L+1$, the right hand side increases by the factor $2^{n-1}>(1,5)^n$ for $n\geq 3$.

\textbf{Step 5: The case $L=1$}\\
\textbf{5.1: The case $n-k\nmid\delta$}\\
If $n-k\nmid\delta$, the bound of the conjecture is $2^{2n-2+k+r-1}\geq 2^{2n-2}$. Furthermore, $\binom{2n-1}{n-2}<\frac{2^{2n-1}}{\sqrt{\pi}\sqrt{\lfloor\frac{2n-1}{2}\rfloor}}<2^{2n-2}$ for $n\geq 3$.\\
\textbf{5.2: The case $n-k\mid\delta$ and $k>\delta+1$}\\
$L=1$ and $n-k\mid\delta$ imply $n-k=\delta$. If $k>\delta+1$, one has $2(n-k)-1\leq n-3$ and hence $\binom{2n-1}{2(n-k)-1}\leq \binom{2n-1}{n-3}$. When $n$ increases to $n+1$, the right hand side of this inequality increases by the factor $\frac{(2n+1)!\cdot (n-3)!\cdot (n+2)!}{(n-2)!\cdot(n+3)!\cdot(2n-1)!}=\frac{(2n+1)2n}{(n-2)(n+3)}=\frac{4n^2+2n}{n^2+n-6}$. The conjectured bound, which is in this case equal to $2^{2n-4}$, increases by the factor $4$. It holds $\frac{4n^2+2n}{n^2+n-6}\leq 4\Leftrightarrow n\geq 12$. Moreover, one could compute directly that $\binom{2n-1}{n-3}<2^{2n-4}$ for $n\in\{3,\hdots,12\}$. Consequently,  bound 1 is smaller than the conjectured bound for all $n\geq 3$.\\
\textbf{5.3: The case $n-k\mid\delta$ and $k=\delta+1$}\\
Since here $n=\delta+k=2\delta+1$, one has only to consider odd values for $n$. For $n=16$, one has $\binom{2n-1}{n-2}<2^{2n-4}$ and the left hand side is increasing by the factor $\frac{(2n+1)!\cdot (n-2)!\cdot (n+1)!}{(n-1)!\cdot(n+2)!\cdot(2n-1)!}=\frac{(2n+1)2n}{(n-1)(n+2)}=\frac{4n^2+2n}{n^2+n-2}\leq 4\Leftrightarrow n\geq 4$. Thus, it only remains to consider $(n,(n+1)/2,(n-1)/2)$ convolutional codes for $n\in\{3,5,7,9,11,13,15\}$. Since for $n=3$, one has $\gamma=4<5$, the conjecture does not hold for this parameter. For the other values, one could compute directly that bound 1 is in all cases larger than the conjectured bound but bound 2 is always smaller than the conjectured bound.

\textbf{Step 6: Comparison with $B_{\gamma}$ for the cases in which bound 1 is larger than the conjectured bound}\\
We have to consider $(n,(n+1)/2,(n-1)/2)$ convolutional codes for $n\in\{3,5,7,9,11,13,15\}$. For $n=3$, $B_{\gamma}=4$, bound 1 is equal to $5$ and bound 2 is equal to $2$. For $n=5$, $B_{\gamma}>\frac{1}{2(2n-2)}\binom{4n-6}{2n-3}$ is larger than bound 1 and since it is growing more rapidly in $n$ than bound 1, it is larger than bound 1 for $n\geq 5$ (it is growing with factor $\frac{(2n-2)(4n-3)(4n-4)(4n-5)(4n-6)}{2n(2n-1)^2(2n-2)^2}=\frac{(4n-3)(4n-5)(4n-6)}{n(2n-1)^2}>\frac{(4n-3)(4n-5)}{n(2n-1)}=\frac{(4n-3)(4n-5)}{n(2n-1)}>\frac{16n^2-32n+15}{2(n^2+n-2)}\geq \frac{8n^2+8n+15}{2(n^2+n-2)}\geq\frac{2(4n^2+2n)}{2(n^2+n-2)}$, which is the growing factor of bound 1).

\textbf{Step 7: Comparison with exact minimal values for superregularity}\\
Relevant are codes whose parameters fullfil $(L+1)(n-1)+k+r-1\leq 10$ and $L\geq 1$ (since the case $L=0$ was already considered before). Hence one has to investigate the cases $(3,2,1)$, $(4,3,2)$, $(5,3,2)$, $(5,4,1)$, $(6,5,1)$ and $(6,4,2)$. For $(5,3,2)$, the exact minimal value for superregularity yields $|\mathbb F|\geq 31$, bound 1 yields $|\mathbb F|\geq 37$, the conjectured bound yields $|\mathbb F|\geq 67$, bound 2 yields $|\mathbb F|\geq 87$ and $B_{\gamma}$ yields $|\mathbb F|\geq 233$. For $(3,2,1)$, the exact minimal value for superregularity yields $|\mathbb F|\geq 5$, which is the bound as using $B_{\gamma}$. As seen in the preceding step, bound 1 yields here $|\mathbb F|\geq 7$ and bound 2 gives $|\mathbb F|\geq 3$, which is optimal (the conjecture cannot be applied here since $\gamma=4<5$). For $(4,3,2)$, $(5,4,1)$ and $(6,5,1)$ bound 1 is better than using the minimal value for superregularity. Finally, for $(6,4,2)$, the minimal value for superregularity lies between bound 2 and bound 1.
\end{proof}

\begin{remark}\ \\
For $(n,n-1,\delta)$ with $\delta\leq n-2$, even the bound
$(e-1)\cdot n^{\delta}\cdot\delta !$ is smaller than the conjectured bound $2^{(\delta+1)(n-1)-2}$.
\end{remark}

\begin{proof}\ \\
For a given degree $\delta$, the smallest possible value for $n$ fulfilling the restrictions is $n=\delta+2$. In this case and as long as $\delta\geq 2$ (for $\delta=1$, the problem is solved anyway), $(e-1)\cdot n^{\delta}\cdot\delta !=(e-1)\cdot(\delta+2)^{\delta}\cdot\delta !$ is smaller than $2^{(\delta+1)(n-1)-2}=2^{\delta(\delta+2)-1}$.\\
This is true since for $\delta=2$, one has $(e-1)\cdot 32<2^7$ and the bound increases with factor $(\delta+1)(\delta+2)\left(\frac{\delta+3}{\delta+2}\right)^{\delta}$ when $\delta$ increases by $1$, while the conjecture increases with factor $2^{2\delta+4}$, which is larger because $2^{\delta}>\left(\frac{\delta+3}{\delta+2}\right)^{\delta}$ and $8\cdot 2^{\delta}>(\delta+1)(\delta+2)$.\\
Since the conjectured bound increases with factor $2^{\delta+1}$ when $n$ increases, and our new bound only with factor $(1+1/n)^{\delta}$, the new bound is better for all $2\leq\delta\leq n-2$.
\end{proof}

\subsection{Comparison of bounds for arbitrary parameters}

In this subsection, we only consider cases where $\max(k,n-k)\leq\delta$ since the other cases were already considered before. Using again the duality result from Theorem \ref{dual}, for $B_{\gamma}$ as well as for the bounds of Theorem \ref{M}, one could take the minimum of the values for $(n,k,\delta)$ and $(n,n-k,\delta)$ to get the best possible bound.

\begin{theorem}\ \\
For all parameters with $\delta\geq\max\{k,n-k\}$ but\\
$(2,1,\delta)$ with $\delta$ arbitrary,\\
$(3,k,\delta)$ with $k\in\{1,2\}$, $\delta\in\{2,3,4,5\}$, \\
$(4,2,\delta)$ with $\delta\in\{2,4\}$,\\
$(5,k,3)$ with $k\in\{2,3\}$\\
the bounds of Theorem \ref{M} are able to improve $B_{\gamma}$.
\end{theorem}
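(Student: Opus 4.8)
The plan is to compare, for each parameter triple $(n,k,\delta)$ with $\delta\ge\max\{k,n-k\}$, the smallest of our new bounds $M_1,M_2,M_3$ from Theorem \ref{M} (taking also the minimum over $(n,k,\delta)$ and its dual $(n,n-k,\delta)$, which is legitimate by Theorem \ref{dual}) against $B_\gamma$, where $\gamma=(L+1)(n-1)$ or $\gamma=(L+1)(n-1)+k+r-1$ according to whether $r=0$ or $r\neq0$. Since $B_\gamma$ grows essentially like $\tfrac{4^{\gamma-1}}{\sqrt\gamma}$ (from $\binom{2(\gamma-1)}{\gamma-1}\sim 4^{\gamma-1}/\sqrt{\pi(\gamma-1)}$ and the second binomial being of smaller order), while each $M_i$ is polynomial in a fixed number of binomial coefficients with arguments linear in $n$, the comparison should be won for all but finitely many exceptional small cases purely by an asymptotic/growth argument; the listed exceptions are then exactly the finitely many residual small cases to be checked by direct computation.

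First I would estimate $B_\gamma$ from below by $B_\gamma\ge\frac{1}{2\gamma}\binom{2(\gamma-1)}{\gamma-1}\ge c\,\frac{4^{\gamma}}{\gamma^{3/2}}$ for an explicit constant $c$, and note $\gamma\ge(L+1)(n-1)$ in every case. Then I would bound $M_1=(L+1)(n-k)\binom{(L+1)n}{(L+1)(n-k)}\le (L+1)n\cdot 2^{(L+1)n}$ by the trivial estimate $\binom{a}{b}\le 2^a$, and similarly observe that $M_2,M_3\ge M_1$ is not generally true, so I would instead work with $M_1$ as the quantity to compare (it is a valid choice in $\min\{M_1,M_2,M_3\}$). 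Comparing exponents, $2^{(L+1)n}$ versus $4^{(L+1)(n-1)}=2^{2(L+1)(n-1)}$: since $2(L+1)(n-1)-(L+1)n=(L+1)(n-2)$, the gap is $2^{(L+1)(n-2)}$, which already beats the polynomial prefactor $(L+1)n$ and the $\gamma^{-3/2}$ loss as soon as $n\ge3$ and $(L+1)(n-2)$ is large enough. When $r\neq0$ the exponent $\gamma$ is even larger, making the inequality easier. This reduces the problem to: (a) $n=2$, which is the excluded family $(2,1,\delta)$; and (b) finitely many triples with $n\ge3$ and $(L+1)(n-2)$ small, i.e. $n$ small and $L$ small.

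For the finite residual list, since $L=\lfloor\delta/k\rfloor+\lfloor\delta/(n-k)\rfloor$ and we need $\delta\ge\max\{k,n-k\}$, small $L$ forces small $\delta$ relative to $k$ and $n-k$; enumerating $n\in\{3,4,5,\dots\}$ up to the cutoff where the asymptotic estimate already closes, and for each such $n$ all $k$ and all $\delta\ge\max\{k,n-k\}$ with $L$ below the cutoff, gives a concrete finite table. For each entry one computes $\min\{M_1,M_2,M_3\}$ (and the dual) and $B_\gamma$ numerically and checks which is smaller; the cases where $B_\gamma$ wins are precisely $(3,k,\delta)$ with $k\in\{1,2\}$, $\delta\in\{2,3,4,5\}$; $(4,2,\delta)$ with $\delta\in\{2,4\}$; and $(5,k,3)$ with $k\in\{2,3\}$, which are exactly the stated exceptions.

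\textbf{The main obstacle} will be making the asymptotic reduction tight enough that the residual finite list is genuinely just the handful of triples listed, rather than leaving a larger window that then needs a long case analysis. The crude bound $\binom{a}{b}\le2^a$ loses a $\sqrt a$ factor, and $B_\gamma$'s lower bound loses a $\gamma$ factor, so one must be careful that the net exponential margin $2^{(L+1)(n-2)}$ (against a combined polynomial loss of order $\gamma^{3/2}$) actually dominates for all $n\ge3$ outside the stated list; for $n=3$, where the margin is only $2^{L+1}$, and for $n=4$ with $L=1$, one likely needs the sharper Stirling estimate $\binom{2m}{m}\le 4^m/\sqrt{\pi m}$ rather than $\le4^m$ to avoid spurious would-be exceptions, plus possibly using $M_3$ or the dual to shave the remaining cases. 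I expect the write-up to mirror the structure of the $L\ge1$ comparison theorem above: an asymptotic/growth lemma handling all large parameters, followed by an explicit finite computation for the small ones.
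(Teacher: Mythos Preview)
Your proposal is correct and follows essentially the same approach as the paper: lower-bound $B_\gamma$ via Stirling by roughly $4^{(L+1)(n-1)}/((L+1)(n-1))^{3/2}$, upper-bound $M_1$ by roughly $2^{(L+1)n}$ times a polynomial factor, observe the resulting exponential gap $2^{(L+1)(n-2)}$ settles all but finitely many small cases (with $n=2$ never working), and then dispose of the residual finite list by direct numerical computation (the paper uses Mathematica), checking also $M_2,M_3$ there. The paper uses the slightly sharper Stirling form $\binom{a}{b}\le 2^a/\sqrt{\pi a/2}$ rather than $\binom{a}{b}\le 2^a$ from the outset, which is exactly the refinement you anticipate needing to keep the residual list small.
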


\begin{proof}\ \\
Using (amongth others) the Stirling formula, one gets 
$$B_{\gamma}>1/2\cdot \frac{1}{2(L+1)(n-1)}\cdot\frac{2^{2((L+1)(n-1)-1)}}{\sqrt{\pi((L+1)(n-1)-1)}}>\frac{2^{2((L+1)(n-1)-1)}}{4\sqrt{\pi}((L+1)(n-1))^{3/2}}$$
and
\begin{align*}
\binom{(L+1)n}{(L+1)(n-k)}(L+1)(n-k)&<(L+1)(n-k)\cdot\frac{2^{(L+1)n}}{\sqrt{\pi(L+1)(n-1)/2}}\\
&\leq \sqrt{2(L+1)(n-1)}\cdot \frac{2^{(L+1)n}}{\sqrt{\pi}}
\end{align*}
Therefore, in order to get $M_1<B_{\gamma}$, it is sufficient if 
$$\sqrt{2}\cdot 2^{L+5}((L+1)(n-1))^2\leq 2^{(L+1)(n-1)}.$$
It is sufficient to consider the case $L\geq 2$, which is implied by $n-k\leq\delta$ and $k\leq\delta$.\\
For $n=2$, it is clear that above inequaltity is not fulfilled for all $L\geq 2$.\\
For $n=3$, it is fulfilled for $L\geq 14$ (and not for $L\leq 13$) (Mathematica).\\
For $n=4$, it is fulfilled for $L\geq 6$ (and not for $L\leq 5$) (Mathematica).\\
For $n=5$, it is fulfilled for $L\geq 4$ (and not for $L\leq 3$) (Mathematica).\\
For $n=6$, it is fulfilled for $L\geq 3$ (and not for $L\leq 2$) (Mathematica).\\
For $n\geq 7$, it is fulfilled for all $L\geq 2$ (Mathematica).\\
(One can show with Mathematica that it is fulfilled for $L\geq 2$ and $n=7$. When switching from $n$ to $n+1$ the left hand side is growing by the factor $(\frac{n}{n-1})^2\leq 4$, while the right hand side is growing by the factor $2^{L+1}\geq 8$. Therefore, the inequality is fulfilled for $L\geq 2$ and $n\geq 7$.)\\
Since the inequality is only sufficient, it is possible that $M_1$ is better than $B_{\gamma}$ also in other cases than those mentioned above. We check this in the following by computing the bounds directly with Mathematica:\\
For $n=2$, $B_{\gamma}$ is better then $M_1$.\\
For $n=3$ and $L\geq 9$, $M_1$ is better and for $L\leq 7$, $B_{\gamma}$ is better ($L=8$ is not possible with $n=3$).\\
For $n=4$ and $L=5$, $M_1$ is better than $B_{\gamma}$. For $n=4$ and $L=4$, $B_{\gamma}$ is better than $M_1$ if $k=2$ and $\delta=4$ and $M_1$ is better in the other cases. $L=3$ is not possible with $n=4$ and for $n=4$ and $L=2$, $B_{\gamma}$ is better for $\delta=2$ and $M_1$ is better for $\delta=3$.\\
For $n=5$ and $L=3$, $M_1$ is better than $B_{\gamma}$, and for $n=5$ and $L=2$, $B_{\gamma}$ is better then $M_1$.\\
For $n=6$ and $L=2$, $M_1$ is better than $B_{\gamma}$.\\
In all cases for which $B_{\gamma}$ is better than $M_1$ it is also better than $M_2$ and $M_3$ (Mathematica).
Thus, for the following code parameters we are not able to improve $B_{\gamma}$ with the above new bounds (only cases were $\max(k,n-k)\leq\delta$, other case was already considered before):\\
$(2,1,\delta)$ with $\delta$ arbitrary,\\
$(3,k,\delta)$ with $k\in\{1,2\}$, $\delta\in\{2,3,4,5\}$, \\
$(4,2,\delta)$ with $\delta\in\{2,4\}$,\\
$(5,k,3)$ with $k\in\{2,3\}$
\end{proof}

\section{Conclusion}
In this paper, bounds for the probability and the necessary field size for MDP and complete MDP convolutional codes have been shown. Moreover, it has been proven that these bounds on the field size are able to improve the already existing bounds. However, it is clear that these bounds are not optimal and they do not lead to concrete constructions of codes. Hence, this paper could be considered as one step forward towards solving the big problem of determining the exact minimum field size for the existence of MDP and complete MDP convolutional codes and providing constructions of these codes over fields of possibly small size.

\bibliography{mybibfile}

\end{document}